\documentclass[11pt]{article}
\usepackage[margin=1in]{geometry}

\pdfmapline{=eufm7     eufm7     <eufm7.pfb}
\pdfmapline{=eufm10     eufm10     <eufm10.pfb}
\usepackage{lmodern}
\usepackage{anyfontsize}
\usepackage{hyperref}
\usepackage[T1]{fontenc}
\usepackage{bbm}
\usepackage{amsthm}
\usepackage{amsmath}
\usepackage{enumerate}
\usepackage{amssymb}
\usepackage[numbers,sort&compress]{natbib}
\usepackage{algorithm, algorithmic}
\usepackage{graphicx}
\usepackage{url}
\usepackage{multirow}
\usepackage{subfigure}
\usepackage{gensymb,textcomp}
\usepackage{tikz}
\usetikzlibrary{shapes,snakes}
\DeclareMathAlphabet{\mathscr}{OT1}{pzc}{m}{it} 

\newtheorem{theorem}{Theorem}
\newtheorem{corollary}[theorem]{Corollary}
\newtheorem{lemma}[theorem]{Lemma}
\newtheorem{claim}{Claim}

\newtheorem{definition}{Definition}

\newtheorem{observation}{Observation}
\renewcommand{\P}{{\cal P}} 
 
\newcommand{\etal}{{et al.~}}
\newenvironment{ntheorem}[1]{\medskip\noindent{\bf Theorem~#1.}\it}{\par}
\newenvironment{nlemma}[1]{\medskip\noindent{\bf Lemma~#1.}\it}{\par}

\newcommand{\argmax}{\operatornamewithlimits{argmax}}
\newcommand{\argmin}{\operatornamewithlimits{argmin}}

\newcommand{\tnx}{{\tilde{n}_x}}
\newcommand{\tns}{{\tilde{n}_s}}
\newcommand{\tni}{{\tilde{n}_i}}
\newcommand{\tno}{{\tilde{n}_o}}

\newcommand{\R}{\mathbb{R}}
\newcommand{\G}{\mathbb{G}}

\newcommand{\bA}{\mathbf{A}}

\renewcommand{\ae}{{a_\epsilon}}

\newcommand{\bP}{\mathbf{P}}
\newcommand{\bPi}{\mathbf{P}_i}
\newcommand{\bPo}{\mathbf{P}_o}

\newcommand{\btQ}{\tilde{\mathbf{Q}}}
\newcommand{\bQ}{\mathbf{Q}}
\newcommand{\bb}{\mathbf{b}}
\newcommand{\bc}{\mathbf{c}}
\newcommand{\bd}{\mathbf{d}}

\newcommand{\bqo}{\mathbf{q}_o}
\newcommand{\bqi}{\mathbf{q}_i}

\newcommand{\bx}{\mathbf{x}}
\newcommand{\bu}{\mathbf{u}}
\newcommand{\promise}{{\gamma}}
\newcommand{\bv}{\mathbf{v}}

\newcommand{\bvarsigma}{\boldsymbol{\varsigma}}
\newcommand{\vepsilon}{\xi}
\DeclareMathOperator{\poly}{poly}
\DeclareMathOperator{\polylog}{polylog}
\newcommand{\ppone}{\mbox{\sc Primal}}

\newcommand{\initial}{\mbox{\sc Initial}}

\newcommand{\bzero}{{\mathbf 0}}

\newcommand{\outero}{\mbox{\sc Outer}}
\newcommand{\spar}{\mbox{\sc Sparse}}
\newcommand{\lpdual}{\mbox{\sc Dual}}
\newcommand{\simp}{\mbox{\sc Switch}}
\newcommand{\innero}{\mbox{\sc Inner}}
\newcommand{\laginnero}{\mbox{\sc LagInner}}
\newcommand{\origa}{\boldsymbol{\mathscr{A}}}
\newcommand{\origb}{\boldsymbol{\mathscr{b}}}
\renewcommand{\Re}{\boldsymbol{\mathbb{R}}}
\newcommand{\bzeta}{\boldsymbol{\zeta}}
\newcommand{\by}{\mathbf{y}}
\newcommand{\bz}{\mathbf{z}}
\renewcommand{\L}{{\mathcal L}}
\newcommand{\M}{{\mathcal M}}

\newcommand{\C}{{\mathcal Cut}}
\newcommand{\K}{{\mathcal K}}
\renewcommand{\O}{\mathcal O}

\newcommand{\D}{\mathcal D}

\renewcommand{\t}[1]{\tilde{#1}}
\newcommand{\h}[1]{\hat{#1}}
\newcommand{\eps}{\varepsilon}

\newcommand{\set}{\mathscr{Set}}

\newcommand{\eat}[1]{}

\newcommand{\ki}{{k^*_i}}

\newcommand{\pos}{{\mathscr Pos}}

\newcommand{\hq}{\hat{q}}

\newcommand{\viol}{{\mathscr Viol}}

\newcounter{lp}
\setcounter{lp}{0}
\newcommand{\lptag}{\tag{LP\arabic{lp}}\addtocounter{lp}{1}}

\newcommand{\bnorm}[1]{||#1||_b}

\newcommand{\sati}{{{\mathscr Sat}(i)}}

\newcommand{\satu}{{{\mathscr Sat}(U)}}
\renewcommand{\pos}{{\mathscr Pos}}

\title{Access to Data and Number of Iterations: Dual Primal Algorithms for Maximum Matching under Resource Constraints
\thanks{A preliminary extended abstract of this article appeared in SPAA 2015.}}
\date{}

\author{
Kook Jin Ahn\thanks{
This work was done while the author was at the Department of Computer and Information Sciences,
University of Pennsylvania, Philadelphia, PA 19104. Email: {\tt kookjin@cis.upenn.edu}.
The author is currently affiliated with Google Inc., 1600 Amphitheatre Parkway Mountain View, CA 94043. Email: {\tt kookjin@google.com}}.
\and
Sudipto Guha\thanks{Department of Computer and Information Sciences,
University of Pennsylvania, Philadelphia, PA 19104. Email: {\tt sudipto@cis.upenn.edu}.
Research supported in part by NSF Award CCF-1117216.}}

\begin{document}

\maketitle
\begin{abstract}
In this paper we consider graph algorithms in models of computation
where the space usage (random accessible storage, in addition to the
read only input) is sublinear in the number of edges $m$ and the
access to input data is constrained. These questions arises in 
many natural settings, and in particular in the analysis of MapReduce 
or similar algorithms that model constrained parallelism with 
sublinear central processing. In SPAA 2011, Lattanzi etal. provided a $O(1)$ approximation of maximum matching using $O(p)$ rounds of iterative filtering via mapreduce and $O(n^{1+1/p})$ space of central processing for a graph with $n$ nodes and $m$ edges. 
 
We focus on weighted nonbipartite maximum matching in this paper. For any constant $p>1$, we provide an iterative sampling based algorithm for computing a $(1-\epsilon)$-approximation of the
weighted nonbipartite maximum matching that uses $O(p/\epsilon)$
rounds of sampling, and $O(n^{1+1/p})$ space.  The results extends to $b$-Matching with small changes.
This paper combines 
adaptive sketching literature and fast primal-dual
algorithms based on relaxed Dantzig-Wolfe decision procedures.
Each round of sampling is implemented through linear sketches and executed in a single round of
MapReduce. The paper also proves 
that nonstandard linear relaxations of a problem, in particular penalty based formulations, are helpful in mapreduce and similar settings in reducing the 
adaptive dependence of the iterations.
\end{abstract}

\section{Introduction}
In many practical settings, such as map-reduce and its many variants, the overall framework
of an algorithm is constrained. To find a large maximum matching (the
actual edges and not just an estimate), the natural algorithm is
obvious: we iteratively sample edges, and show that the sampled edges
contain a large matching. Often these iterative algorithms converge
fast and provide a solution which is much better than the worst case
guarantees. One such example is the problem of weighted maximum
matching where Lattanzi \etal \cite{LattanziMSV11} showed that for a
$n$ node $m$ edge graph, we can find a $O(1)$ approximation using
$O(p)$ rounds of filtering and $O(n^{1+1/p})$ space (for any constant
$p>1$). However it was also shown that the approximation bound achieved 
was better than the worst case. This raises the natural question: 
is a $(1-\epsilon)$-approximation achievable without storing the entire 
graph in central processing?

\smallskip
Unfortunately, there are few systematic techniques that allow us to
analyze iterative algorithms. One well known example is linear (or
convex) programming -- and for maximum weighted matching there are LP
relaxations which are exact. Therefore any algorithm that provides a
$(1-\epsilon)$ approximation must also (possibly implicitly) provide a
bound for the underlying LP. 
 This raises the question: Can we analyze iterative algorithms for maximum matching? Note that
augmentation path based techniques either require random access or 
many (superconstant) iterations. However
the exact relaxation of matching has $m$ variables -- $y_{ij}$
indicating the presence of the edge $(i,j)$ in the matching. Moreover,
for the nonbipartite case the number of constraints is $2^n$,
corresponding to each odd set. The LP is given by \ref{lpbm} below. Since we can
address $b$--matching without much difficulty, we present that
version. Let $\bnorm{U}=\sum_{i \in U} b_i$ and $\O=\{U | \ \bnorm{U} \mbox{ is odd} \}$. 
For standard matching all $b_i=1$. Let $B=\sum_i b_i$. For a graph $G=(V,E)$ consider:

{\small
\begin{align*}
   &\displaystyle \beta^* =\max \sum_{(i,j) \in E} w_{ij} y_{ij}    \lptag\label{lpbm} \\
   &\displaystyle \sum_{j:(i,j) \in E} y_{ij} \leq b_i & \forall i \in V \\
   &\displaystyle \sum_{(i,j) \in E:i,j\in U} y_{ij} \leq \left\lfloor \bnorm{U}/2 \right\rfloor & \forall U \in \O \\
   & y_{ij} \geq 0 & \forall (i,j) \in E
\end{align*}
}
To achieve a $(1-\epsilon)$-approximation, the number of constraints in
\ref{lpbm} can be reduced to $n^{O(1/\epsilon)}$ by considering $\O_s
= \{ U \in \O| \ \bnorm{U} \leq 4/\epsilon \}$. However
$n^{O(1/\epsilon)}$ is still large.

There
has been an enormous amount of research on solving LPs efficiently
starting from Khachian's early result \cite{kha77}, for example, the
multiplicative weight update framework (\cite{LittlestoneW} and many
others), positive linear programming \cite{LubyN93}, fractional
packing and covering (\cite{PlotkinST95} and subsequent results),
matrix games \cite{GK95}, and many similar descriptions which exist in
different literature across different subfields (see the surveys
\cite{FosterV,AroraHK12}). None of the existing methods allow 
constant number of iterations. 

Each of these methods maintain multipliers (often referred to as weights in the literature, we use a different term since we consider matching in weighted graphs) on the constraints
and seek to optimize a linear combination (using the respective multipliers) of the constraints, thereby reducing multiple constraints to a single objective function. This is referred to as Dantzig-Wolfe type decomposition, since the resulting object is typically a simpler problem.  Methods which only maintain dual multipliers typically 
require $\Omega(\rho \epsilon^{-2}\log M)$ iterations for $M$ constraints, where $\rho$ is 
the width parameter (a variant of conditioning, defined shortly in the sequel). 
In fact this is a lower bound for random constraint matrices shown in \cite{KY99}; moreover the width parameter is a fundamental barrier. The width parameter of \ref{lpbm} is at least $n$. 
Methods such as \cite{prox1,prox2,BI04} which maintain both primal
and dual multipliers, are dominated by storing all the edges in the graph 
(same as number of primal variables)! Moreover most of these methods 
\cite{prox1,prox2} provide additive feasibility 
guarantees over an unit ball, and conversion to multiplicative error makes the number of iterations depend on the (square root of the) number of variables \cite{BI04}, a detailed discussion is available in Section~\ref{mirror}.

\medskip\noindent
{\bf Our Results:}  For any $\epsilon>0,p>1$ we provide a $(1-\epsilon)$ approximation scheme for the weighted nonbipartite matching problem using $O(p/\epsilon)$ rounds of adaptive sketching which can be implemented in MapReduce and $O(n^{1+1/p})$ centralized space. The space requirement increases to $O(n^{1+1/p} \log B)$ if $B=\sum_i b_i$ is super polynomial in $n$. The running time is $O(m\poly(\epsilon^{-1},\log n)\log B)$. From the perspective of techniques, we {\em partially} simulate multiple 
iterations of solving an LP in a single iteration -- this can be also viewed as defining an less adaptive method of solving LPs. However such a method naturally works for a subclass of LPs which we discuss next. In particular we focus on the dual of the LP we wish to solve, which leads to {\em Dual Primal Algorithms}.
In the context of matching the overall algorithm we get is very natural, given in Algorithm~\ref{alg:nat}.
\begin{algorithm}[H]
{\small
\begin{algorithmic}[1]
\vspace{0.05in}
\STATE Start with an initial sampling distribution over the edges.
\WHILE {we do not have a certificate of $(1-\epsilon)$ approximation}
\STATE Sample $O(n^{1+1/p})$ edges, subdivided into $t=O(\frac{1}{p\epsilon}\log n)$ independent parts, say $S_1,S_2,\ldots S_t$.
\STATE Use $S_1$ to simulate a step of solving a dual LP of matching. Then use $S_1$ to refine or adjust $S_2$ and then use $S_2$. In general, use $S_1,\ldots,S_q$ to refine $S_{q+1}$ and use $S_{q+1}$.
\STATE We prove that either we succeed in the refinement and use of $S_{q+1}$ or produce an explicit dual (of the dual) certificate -- which is a large explicit primal solution, in this case the desired matching.
\ENDWHILE

\end{algorithmic}
}
\caption{An algorithm for maximum matching\label{alg:nat}}
\end{algorithm}

\vspace{-0.1in}
The while loop in Algorithm~\ref{alg:nat} is executed for at most
$O(p/\epsilon)$ steps. Given the subdivision of each sample into
$O(\frac1{p\epsilon}\log n)$ parts -- we have an algorithm that uses
$O(\epsilon^{-2} \log n)$ iterations, but the adaptivity {\em at
the time of sampling} is only $O(p/\epsilon)$. However the adaptivity
{\em at the time of use} is still $O(\epsilon^{-2} \log n)$. This
differentiation is key and is likely to be use in many other
settings. One such setting (albeit in retrospect) is the linear sketch
based connectivity algorithm in \cite{AhnGM12,AhnGM12PODS}, where the
linear sketches\footnote{Linear Sketches are inner product of the input 
with suitable pseudorandom matrices, in this case the input is an oriented 
vertex-edge adjacency matrix. The sketch is computed first, and subsequently an adversary provides a cut. We then sample an edge across that cut (if one exists, or determine that no such edge exists) with high probability.} were computed in parallel in $1$ round {\em but used sequentially} in $O(\log n)$ steps of postprocessing to produce a spanning tree.
In this paper we show that $S_1,\ldots,S_t$ are relatives of
cut-sparsifiers. Cut sparsifiers, introduced by Benczur and Karger
\cite{BenczurK96}, are combinatorial objects that preserve every cut
to within $1\pm\epsilon$ factor. Use of cut sparsifiers is
nontrivial since:

{\small
$$ \sum_{(i,j) \in E:i,j\in U} y_{ij} \leq \left\lfloor \bnorm{U}/2 \right\rfloor$$}
is equivalent to sum and difference of cuts:
{\small
$$ \frac12 \sum_{i \in U} \left( \sum_{ (i,j) \in E} y_{ij} \right) - \left( \sum_{(i,j) \in E:i\in U, j \not \in U} y_{ij} \right) \leq \left\lfloor \bnorm{U}/2 \right\rfloor$$}

and no sparsifier can preserve differences of cuts approximately
(since that would answer the sign of the difference exactly). In fact,
it it easy to observe that the size of the largest matching in a graph
has no connection to large matchings in the sparsifier of that
graph.

\medskip
\noindent{\bf New Relaxations:}
 The dual of the standard relaxation for matching is \ref{bm-dual}, where the variables $x_i$ corresponds 
to the vertex constraints and $z_U$ correspond to the odd-sets. The dual multipliers (of this dual) corresponds to the edges.

{\small
\begin{align*}
& \beta^*=\min \sum_i b_i x_i + \sum_{U \in \O} \left\lfloor \bnorm{U}/2 \right\rfloor z_U \\
& \begin{array}{l l}
\displaystyle x_i+x_j+\sum_{U \in \O; i,j\in U} z_U\geq w_{ij} & \forall (i,j)\in E \\
   x_i, z_U \geq 0 & \forall i \in V, U \in \O\lptag \label{bm-dual} 
 \end{array}
\end{align*}
}

The width of the formulation \ref{bm-dual} is $2^n$ or $O(n^{1/\epsilon})$ -- there are 
no obvious ways of reducing the width. Consider now a different formulation of 
maximum matching ($w_{ij}=1$):

\vspace{-0.1in}
{\small
\begin{align*}
   &\displaystyle \max \sum_{(i,j) \in E} y_{ij} - 3 \sum_i \mu_i   \lptag\label{lpbm-un} \\
   &\displaystyle \sum_{j:(i,j) \in E} y_{ij} - 2 \mu_i \leq b_i &  \forall i \in V \\
   &\displaystyle \sum_{(i,j) \in E:i,j\in U} y_{ij} - \sum_{i \in U} \mu_i \leq \left\lfloor \bnorm{U}/2 \right\rfloor & \forall U \in \O \\
   & y_{ij},\mu_i \geq 0 &  \forall i \in V,\forall (i,j) \in E
\end{align*}
}
The above formulation allows each
vertex to be fractionally matched to $b_i + 2\mu_i$ edges, yet the
overall objective is charged for this flexibility -- this is a classic
penalty based formulation. It can be shown (and we do, for the general
weighted case, through ideas based on the proof of total dual
integrality) the objective function has not increased from \ref{lpbm}
(for $w_{ij}=1$). Use of \ref{lpbm-un} is more 
obvious if we consider its dual \ref{bm-dual-un}.

\vspace{-0.1in}
{\small
\begin{align*}
& \min \sum_i b_i x_i + \sum_{U \in \O} \left\lfloor \bnorm{U}/2 \right\rfloor z_U \\
& \begin{array}{l l}
\displaystyle x_i+x_j+\sum_{U \in \O; i,j\in U} z_U\geq 1 & \forall (i,j)\in E \\
\displaystyle 2x_i + \sum_{U:i \in U} z_U \leq 3 & \forall i \in V \\
   x_i, z_U \geq 0 & \forall i \in V, U \in \O \lptag \label{bm-dual-un} 
 \end{array}
\end{align*}
}
Note subject to $2x_i + \sum_{U:i \in U} z_U \leq 3$ and non-negativity
{\small $$ x_i+x_j+\sum_{U \in \O; i,j\in U} z_U\leq 6 $$}
or in other words, the width of the dual formulation is now independent of any problem parameters!
Therefore penalty based formulations are a natural candidate to study if we wish to add constraints to the dual; and such constraints may have to be added if we want to solve the dual faster.

However the biggest difficulty in implementing Algorithm~\ref{alg:nat}
arises from the step where we show that either we make large progress
in the dual or we can construct a large approximate matching. Note
that complementary slackness does not hold for approximate solutions,
so lack of improvement in the dual does not typically imply anything
for a primal solution.  However,we prove that {\em when we cannot make progress on the
dual} then the multipliers on the constraints (which are now assignment of values to primal variables, since we started with the dual) are such that a 
(weighted) cut-sparsifier, that treats the multiplier values on edges as weight/strength (this is not the edge weight in the basic matching problem) values, contains a large matching! This provides us an explicit sparse
subgraph containing a large matching as well as fractional matching
solution. 

We modify the linear sketch based algorithm in \cite{AhnGM12PODS} that
constructs cut-sparsifiers in a single round to over sample the edges
with probability by at most $(1+\epsilon)^t$ which is the maximum
amount by which the multiplier on an edge can change. For
$t=O(\frac1{p\epsilon}\log n)$ that bound is $n^{1/(2p)}$, and since
sparsification has $\tilde{O}(n)$ edges, the oversampled object would
have size $n^{1+1/p}$ (absorbing the terms polynomial in $1/\epsilon$
and $\log n$). 
The modified algorithm allows us {\em deferred evaluation/refinement}.

\medskip\noindent
{\bf Weighted nonbipartite graphs:} Finally, we show the new relaxation we need to consider for weighted non-bipartite graphs. Assume that the edge weights are at least $1$ and rounded to integral powers of $(1+\epsilon)$. Let $\h{w}_k=(1+\epsilon)^k$
and $\h{E}_k$ the set of edges $(i,j)$ with that weight. Then the (dual of the) maximum $b$--matching is given by \ref{fulllp}. Observe that \ref{fulllp} is very similar to \ref{bm-dual-un}, where we are considering a ``layered'' variant -- $x_{i(k)}$ corresponds to the cost of vertex $i$ in level $k$. $x_i = \max_k x_{i(k)}$ is the contribution of vertex $i$ to the objective. However the cost of each set $U$ in level $\ell$ is $z_{U,\ell}$ and the contribution of a set $U$ is {\em additive}! An edge $(i,j)$ is covered from the cost of the two vertices $i,j$ (specifically their cost in level $k$) or the sum of the costs of all $U$ (odd, containing both $i,j$) of all layers {\em below or equal $k$}.

{\small
\begin{align*}
& \beta^*=\min \sum_i b_i x_i + \sum_{U \in \O} \left\lfloor \frac{\bnorm{U}}{2} \right\rfloor \sum_{\ell} z_{U,\ell} \lptag  \label{fulllp} \\
& \displaystyle  x_{i(k)} + x_{j(k)} + \sum_{\ell \leq k} \left(\sum_{U \in \O; i,j\in U} z_{U,\ell} \right) 
\geq \h{w}_k 
& \forall (i,j)\in E_k\\
& \displaystyle  2 x_{i(k)} + \sum_{\ell \leq k} \left(\sum_{U \in \O:i \in U} z_{U,\ell}\right) \leq 3\h{w}_k & \forall  i,k \\
& x_i - x_{i(k)} \geq 0 & \forall i,k\\
&  x_{i(k)}, x_i, z_{U,\ell} \geq 0 & \forall i,k,U,\ell 
\end{align*}
}
The role of \ref{fulllp} is expressed by the following graph where all
$b_i=1$. It is clear that to get a $(1-\epsilon)$ approximation we
must consider the odd set corresponding to whole triangle, the bipartite relaxation has value $1+5\epsilon$ whereas the integral solution has value $1$. An assignment of $z_U=1$ to the entire graph using the relaxation \ref{bm-dual} is a valid solution, but the width parameter becomes $O(1/\epsilon)$ based on the 
edge of weight $10\epsilon$. Of course in this case 
we can set $z_U=10\epsilon$ and $x_i=1-10\epsilon$ for the vertex at the apex. 
However we not only need a more systematic relaxation 
for arbitrary graphs, we also need the relaxation to allow 
efficient computation in small space!

\begin{center}
\begin{tikzpicture}[font=\tiny]
 \tikzstyle{vertex}=[draw,shape=circle, minimum size=0.3cm,inner sep=2pt]
  \node[vertex] at (0,0) (u1) {};
 \node[vertex] at (2,0) (u2) {};
 \node[vertex] at (1,1.73) (u3) {};

\draw[-] (u1) to node[below] {$10\epsilon$}  (u2);
\draw[-] (u2) to node[right] {$1$}  (u3);
\draw[-] (u3) to node[left] {$1$}  (u1);
 \end{tikzpicture}
\end{center}

\vspace{-0.1in}
 It is not obvious why \ref{fulllp} should express maximum matching -- however as a  consequence, again, the width of the dual formulation is independent of any problem parameters. The dual of \ref{fulllp} is used in \ref{nicerlp} in Lemma~\ref{oolemma}, and the proof of \ref{fulllp} follows from the proof of Lemma~\ref{oolemma}. It is interesting that vertices and odd-sets are treated differently in \ref{fulllp} -- \ref{fulllp} is likely to be of interest independent of resource constraints.

\medskip\noindent
{\bf Related Work:} For weighted non-bipartite matching no previous result was known where the 
number of iterations is independent of the problem parameters. For unweighted cardinality matching
McGregor \cite{McGregor05} provided an algorithm with $2^{O(1/\epsilon)}$ iterations. Maximum matching is well studied in the context of {\bf bipartite} graphs,
see for instance
\cite{FeigenbaumKMSZ05,Zelke,EggertKS09,Epstein,LattanziMSV11,AhnG13BM,Kapralov13,BGM13}. The best known results in that context are either a $O(1)$ approximation using a single round
\cite{FeigenbaumKMSZ05,Zelke,EggertKS09,Epstein}, a $(1-\epsilon)$-approximation in $O(\epsilon^{-2}\log \frac1\epsilon)$ 
rounds \cite{AhnG13BM}, a $O(\epsilon^{-2})$ rounds in a vertex
arrival model (assuming order on the list of input edges)
\cite{Kapralov13}. The authors of \cite{GO13} show that a space bound of $n^{1 +
\Omega(1/p)}/p^{O(1)}$ is necessary for a $p$ round communication
protocol to find the exact maximum in bipartite graphs.

In the algorithm provided in this paper the probability 
of sampling each edge $(i,j)$ depends on (along with $i$ and $j$) the 
$z_U$ values of different odd sets $U$ of size at most $1/\epsilon$ (and containing both $i,j$). The number of such odd sets with $z_U > 0$ is at most 
$O(\epsilon^{-5}(\log B)(\log^2 n)\log^2 \frac1\epsilon)$. This is useful 
to show that the full $O(n^{1+1/p})$ space is not needed to define the value of the multiplier for an edge, specially in distributed settings.
The linear sketches can be viewed as requiring each vertex to 
sketch its neighborhood $n^{1/p}$ times. This has an obvious connections to
distributed computing, and in particular to the {\sc Congested Clique}
model \cite{Drucker2014}. Our linear sketch based result 
shows that in that model we
can compute a $(1-\epsilon)$ approximation for the maximum weighted
nonbipartite $b$--matching problem using $O(p/\epsilon)$ rounds and
$O(n^{1/p})$ size message per vertex. See \cite{MV13} for results in other distributed computation models.

\medskip\noindent{\bf Roadmap:} 
We provide the main definition and the 
main theorem (Theorem~\ref{useful}) about the dual-primal framework in Section~\ref{sec:general}. 
We then show how these
definitions and theorems are applied to matching in Section~\ref{sec:ex}. 
We discuss the construction of deferred cut-sparsifiers in Section~\ref{sec:deferred}, specially in resource constrained models. 
Section~\ref{init-weighted-bip} provides the initial solution, (note, of the dual).
We prove the new relaxation for matching in Section~\ref{proof:oolemma}.

\vspace{-0.1in}
\section{Dual-Primal Algorithms}
\vspace{-0.1in}
\subsection{The Framework}
\label{sec:general}

\begin{definition}\label{def:amenable}
The problem \ppone\ is defined to be
{\small
\[ \beta^* = \max \bc^T\by; \by \in \{ \origa^T\by \leq \origb, \by \geq \bzero  \} \tag{\ppone}\]
}
$(\rho_0,\rho_i)$--{\bf Dual-Primal amenable}  
if there exist $\bA \in
\Re_{+}^{m \times N}$, $\bb,\bx \in \Re_{+}^{N}$
$ \bP_o \in
\Re_{+}^{ \tno \times N}, \bqo \in \Re_{+}^{ \tno}$,
$\bP_i \in \Re_{+}^{\tni \times N}$, $\bqi \in \Re_{+}^{\tni}$,
convex polytope $\bQ$ with $\bzero \in \bQ$ and an absolute constant $a_1$ 
such that the following hold simultaneously:
\begin{enumerate}[(d1)]\parskip=0in
\item {\sc (Proof of Dual Feasibility.)} A feasible solution to $\{\bb^T\bx \leq \beta, \bA\bx \geq (1-3\epsilon)\bc, \bx \in \bQ,\bx\geq0\}$ implies $\beta^*\leq \beta/(1-a_1\epsilon)$.  
\label{dualfeasible}
\item {\sc (Outer Width.)} $\{\bP_o \bx \leq 2\bqo, \bx \in \bQ,\bx \geq 0 \}$ implies $\bA\bx \leq \rho_o \bc$.
\label{outerwidth}
\item  {\sc (Inner Width.)}  $\{\bP_i \bx \leq \bqi,\bx \in \bQ,\bx \geq 0 \}$ implies $\bP_o \bx \leq \rho_i \bqo$. \label{innerwidth}

\item {\sc (A deferred $\bu$-Sparsifier.)} Given $\bu,\bv \in \Re_{+}^m$, and the promise that all nonzero $\bu_\ell$ satisfy 
$1/\L_0 \leq \bv_{\ell}/\promise  \leq \bu_\ell \leq \promise \bv_{\ell}\leq \L_0$ 
for each $\ell \in [m]$ given fixed $\promise,\L_0 \geq 1$, 
we can construct a data structure $\D$ that samples a subset of indices in $[m]$ of size $\tns$ based on the $\bv$ values and stores the indices. 
After $\D$ has been constructed, the exact values of those 
stored entries of $\bu$ are revealed and the data structure constructs a 
nonnegative vector $\bu^s$ such that for some property $\G(
\bu^s,\bx)$ which is convex in $\bx$ and $\bx=\bzero$ satisfies $\G(
\bu^s,\bx)$, we have:
\vspace{-0.05in}
\begin{align*}
 & \hspace*{-1cm} (\bu^s)^T \bA \bx \geq \left(1-\frac\epsilon8\right) (\bu^s)^T\bc, ~~\G(\bu^s,\bx) \mbox{~and~} \bx \geq 0 
&  \implies \quad 
\bu^T \bA \bx \geq \left(1-\frac\epsilon2\right) \bu^T\bc  \quad \tag{\simp}
\end{align*}
\label{defcon}

\vspace{-0.2in}
\item {\sc (Initial solution.)} We can efficiently find a solution to \initial\ 
for some $a_\epsilon \geq 2$:
\[  
\left \{ 
\begin{array}{l} 
\bA\bx_0 \geq (1 - \epsilon_0) \bc \\
 \frac{\beta^*}{\ae} \leq \beta_0=\bb^T\bx_0 < \frac{\beta^*}{2} \\
 \bPo\bx_0 \leq 2\bqo \\
\bPi\bx_0 \leq \bqi \\
\bx_0 \in \bQ, 
\bx_0 \geq 0 \end{array} \right.
\tag{\initial} \]
\label{definit}
\end{enumerate}
\end{definition}
\noindent Observe that we can set $\rho_o=\rho_i=\infty$ but unbounded values of $\rho_o,\rho_i$ will not have any algorithmic consequence. For bounded $\rho_o,\rho_i$
the matrices $\bA$ and $\origa$ of course have to be related -- but that 
relationship is not necessarily a simple representation (such as column sampling). The main consequence
of $(\rho_0,\rho_i)$--dual primal amenability is as follows:

\begin{theorem}
\label{useful}
Suppose for a $(\rho_o,\rho_i)$--dual-primal amenable system \ppone\
there exists a {\sc MicroOracle} which
given: an absolute constant $a_2>0$, $\epsilon$ such that $0< \epsilon \leq 1$,
$\bu^s \in \Re^{m}_+,  \bzeta \in \Re^{\tno}_+$, and $ \varrho>0$,  
provides either:
\begin{enumerate}[(i)]\parskip=0in 
\item A feasible solution for \ppone\ with $\bc^T\by \geq (1-a_2\epsilon)\beta$ such that for all $\ell \in [m]$, $\by_\ell >0$ implies $\bu^s_\ell > 0$, i.e.,  a solution only involving the primal variables corresponding to the sampled constraints in the deferred sparsifier. 
\item Or a solution $\bx$ of \laginnero, where the number of non-zero entries of $\bx$ is at most $\tnx$.
\begin{align*}
& (\bu^s)^T \bA \bx - \varrho \bzeta^T \bP_o \bx \geq \left(1-\frac1{16}\right) (\bu^s)^T\bc - \varrho \bzeta^T \bqo \\
& \G(\bu^s,\bx) \tag{\laginnero}\\
& \btQ(\beta) = \{\bb^T\bx \leq \beta,\bPi \bx \leq \bqi,
\bx \in \bQ, \bx \geq 0\}
\end{align*}
\end{enumerate}
Then we can find a
$(1-(1 + a_1 + a_2)\epsilon)$-approximate solution to \ppone\ 
using $\tau$ rounds of deferred $\bu$-sparsifier construction, $\tau_o
=O\left(\rho_o\left(\frac1{\epsilon} + \epsilon \log \frac1{1-\epsilon_0}\right)\frac{\log (m/\epsilon)}{\log \promise} +  \frac{\log \ae}{\log \promise} \right)$.
 In each round we construct $O(\epsilon^{-1}\log \promise)$ deferred $\bu$--sparsifiers.
For a fixed deferred sparsifier {\sc MicroOracle} is invoked for at most $\tau_i = O(\rho_i (\log \rho_i) (\log \tno)\log \frac1\epsilon)$
times and we use a simple exponential of a linear combination of the returned solutions (of part (ii)) 
to define the weights $\bu$.

Each nonzero  $\bu_{\ell} \in  {\small \left[\left(2m/\epsilon\right)^{ - O\left(\frac{\rho_o}{\epsilon(1-\epsilon_0)} \right)},1 \right]}$ and $\log \L_0=O\left(\frac{\rho_o}{\epsilon(1-\epsilon_0)} \log (m/\epsilon) + \log \gamma \right)$. Moreover the algorithm computes 
(approximately) $\max \lambda; \bA\t{\bx} \geq \lambda \bc$ for a $\t{\bx}$ which is a weighted average of the $\bx$ returned by successive applications of part (ii) -- the algorithm can stop earlier than the stated number of $\tau$ rounds if it observes $\lambda \geq (1-3\epsilon)$. 
\end{theorem}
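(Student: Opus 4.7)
The plan is to run a three-level nested multiplicative weights update (MWU). The outermost loop is a doubling search on the guessed upper bound $\beta$, initialized from the solution supplied by (d5) in the range $[\beta^*/\ae, \beta^*/2]$; by (d1), the algorithm terminates once a $\beta$ admits a point in $\btQ(\beta) \cap \{\bPo\bx \leq 2\bqo\}$ with $\bA\bx \geq (1-3\epsilon)\bc$, which certifies $\beta^* \leq \beta/(1-a_1\epsilon)$. The outermost loop contributes the $(\log \ae)/(\log \promise)$ term in $\tau_o$. For each fixed $\beta$ I run the middle MWU, which maintains multipliers $\bu \in \Re^m_+$ on the $m$ cost constraints $\bA\bx \geq \bc$ and seeks, at each middle step, a point $\bx \in \btQ(\beta) \cap \{\bPo\bx \leq 2\bqo\}$ with $\bu^T\bA\bx \geq (1-\epsilon/2)\bu^T\bc$. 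The exponential update $\bu_\ell \leftarrow \bu_\ell \exp(\eta(\bc_\ell - (\bA\bx)_\ell))$, combined with the outer-width bound $\bA\bx \leq \rho_o\bc$ from (d2), gives the standard MWU convergence in $O(\rho_o\epsilon^{-2}\log m)$ middle steps to a weighted average $\t\bx$ with $\bA\t\bx \geq (1-3\epsilon)\bc$.

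The deferred sparsifier of (d4) implements each middle step in sublinear state. Because the exponential update changes each $\bu_\ell$ by a factor at most $e^{O(\epsilon)}$ per middle step, the structures drawn in one sampling round remain within the $\promise$-promise for $t = O(\epsilon^{-1}\log \promise)$ consecutive middle steps before the algorithm resamples; dividing the total middle-step count by $t$ gives the stated $\tau_o$ (modulo lower-order contributions from $\log \ae$ and $\epsilon_0$), and the $O(\epsilon^{-1}\log\promise)$ sparsifier constructions per round come from the refinement of $S_1,\ldots,S_t$ within the round. Within a round only the sparsifier-weighted inequality $(\bu^s)^T\bA\bx \geq (1-\epsilon/8)(\bu^s)^T\bc$ together with $\G(\bu^s,\bx)$ is demanded of the per-step $\bx$, and (d4) upgrades it to the $\bu$-weighted inequality required by the middle MWU. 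To enforce $\bPo\bx \leq 2\bqo$ at each middle step, I invoke the innermost layer: a Lagrangian MWU on the $\tno$ outer-width constraints with multipliers $\bzeta \in \Re_+^{\tno}$ and penalty $\varrho$. At each inner iteration I call \textsc{MicroOracle}$(\bu^s,\bzeta,\varrho)$; case (i) terminates the algorithm immediately with a primal-feasible $(1-a_2\epsilon)$-approximation supported on the sampled constraints, while case (ii) returns an $\bx$ solving \laginnero\ of support size $\tnx$, which feeds the standard $\bzeta$-update $\bzeta_k \leftarrow \bzeta_k \exp(\eta'((\bPo\bx)_k - \bqo_k))$. Since (d3) guarantees $\bPo\bx \leq \rho_i\bqo$ for every such $\bx$, the inner MWU has width $\rho_i$ over $\tno$ constraints and converges in $\tau_i = O(\rho_i(\log\rho_i)(\log\tno)\log(1/\epsilon))$ iterations to an averaged $\bx$ that simultaneously attains the sparsifier-weighted cost inequality and $\bPo\bx \leq 2\bqo$.

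The hard part will be the careful composition of approximation slacks across the three layers. The \textsc{MicroOracle}'s constant $1-1/16$ Lagrangian guarantee suffices (rather than a $1-\epsilon$ guarantee) only because the inner-MWU feasibility target is relaxed from $\bqo$ to $2\bqo$; this constant slack in the outer-width then has to compose with the upgrade from the $1-\epsilon/8$ sparsifier bound to the $1-\epsilon/2$ bound in (d4), with the $1-3\epsilon$ threshold in (d1), and with the initial slack $\epsilon_0$ from (d5) to yield the overall factor $1-(1+a_1+a_2)\epsilon$. The weight range $\bu_\ell \in \bigl[(2m/\epsilon)^{-O(\rho_o/(\epsilon(1-\epsilon_0)))},\,1\bigr]$ and $\log \L_0 = O\bigl((\rho_o/(\epsilon(1-\epsilon_0)))\log(m/\epsilon) + \log\promise\bigr)$ then follow from the per-step bound on the exponential update and the total middle-step count. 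Finally, the early-stopping certificate $\lambda \geq (1-3\epsilon)$ on the running weighted average $\t\bx$ of the $\bx$'s returned by part~(ii) is obtained by approximately evaluating $\max\{\lambda : \bA\t\bx \geq \lambda \bc\}$ at the end of each sampling round and exiting before the full $\tau$ rounds if it triggers.
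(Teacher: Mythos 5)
Your three-level architecture --- the $(1+\epsilon)$-incremented search on $\beta$ seeded by \initial, the covering MWU on $\bu$ over the $m$ constraints with width $\rho_o$ implemented through deferred sparsifiers refined $O(\epsilon^{-1}\log\promise)$ times per sampling round, and an inner packing MWU on $\bzeta$ over the $\tno$ outer-width constraints with width $\rho_i$ --- is exactly the paper's decomposition into Theorems~\ref{useful1} and~\ref{useful2}. The genuine gap is at the interface between the inner MWU and the {\sc MicroOracle}. The packing framework you invoke requires its oracle to return, for the current $\bzeta$, a point of the polytope $\bQ(\bu^s,\beta)$ --- which \emph{includes} the cost inequality $(\bu^s)^T\bA\bx \geq (1-\epsilon/8)(\bu^s)^T\bc$ and $\G(\bu^s,\bx)$ as hard constraints --- with $\bzeta^T\bPo\bx \leq \frac{13}{12}\bzeta^T\bqo$; this is \innero. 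The {\sc MicroOracle} only delivers the Lagrangian guarantee \laginnero\ for whichever single $\varrho$ you hand it (and you never say how $\varrho$ is chosen), and such an $\bx$ need satisfy neither the cost inequality nor any bound on $\bzeta^T\bPo\bx$: for small $\varrho$ it may return a point with enormous $\bzeta^T\bPo\bx$, and for $\varrho$ large it may return $\bx=\bzero$. Feeding such a point into the $\bzeta$-update breaks the packing analysis. The paper closes this with a step you omit: it first tries $\varrho=\epsilon(\bu^s)^T\bc/(16\bzeta^T\bqo)$, and if the returned $\bx$ violates $\bzeta^T\bPo\bx\leq\frac{13}{12}\bzeta^T\bqo$ it binary-searches over $\varrho$ down to granularity $\epsilon\varrho_0/16$ and outputs a convex combination $s_1\t{\bx}_1+s_2\t{\bx}_2$ of the two bracketing solutions chosen so that $\bzeta^T\bPo\bx$ hits $\frac{13}{12}\bzeta^T\bqo$ exactly; a separate calculation shows this combination still satisfies the cost inequality up to $(1-\epsilon/8)$. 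That binary search is also the source of the $\log\frac1\epsilon$ factor in $\tau_i$, which your write-up states but never accounts for.

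A smaller but real error: you say a case-(i) return from the oracle ``terminates the algorithm immediately with a primal-feasible $(1-a_2\epsilon)$-approximation.'' It does not. At that moment you only know $\bc^T\by\geq(1-a_2\epsilon)\beta$ for the \emph{current} $\beta$, which starts as low as $\beta^*/\ae$; the value is only certified near-optimal once the dual side has produced $\bA\bx\geq(1-3\epsilon)\bc$ with $\bb^T\bx\leq(1+\epsilon)\beta'$, so that condition~(d\ref{dualfeasible}) bounds $\beta^*$ from above. In the paper a case-(i) return triggers $\beta\leftarrow(1+\epsilon)\beta$ and the iteration continues, and only the $\by$ associated with the largest $\beta'$ at which case (i) occurred is output at the end. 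Your own first paragraph (termination on the dual certificate) is the correct rule; the ``terminate on case (i)'' rule contradicts it and would forfeit the approximation guarantee.
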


\vspace{-0.2in}
\begin{algorithm}[H]
{\small
\begin{algorithmic}[1]
\vspace{0.05in}
\STATE Start with $\bx=\bx_0$ and $\beta=\beta_0$, where $\bx_0,\beta_0$ refer to the initial solution.
\STATE Consider the following {\bf family} of decision problems:
{\footnotesize
\[
\left .
\begin{array}{ll} 
& \{\bA\bx \geq \bc, \bx \in \P(\beta) \}  \quad \mbox{where} \\  & \P(\beta)=  \left\{ \begin{array}{l}
\bP_o \bx \leq 2 \bqo  \\
\t{\bQ}(\beta) = \left\{
\bb^T\bx \leq \beta;
\bP_i \bx \leq \bqi; \bx \in \bQ;
\bx \geq \bzero \right \}
\end{array} \right.  
\end{array} \right. \tag{\lpdual}
\]
}
\vspace{-0.05in}
\WHILE {$\lambda < 1-3\epsilon$ (where $\lambda = \min_{\ell} ( \bA\bx)_{\ell}/\bc_{\ell}$) }
\STATE Define exponential weights $\bu$ (Section~\ref{proof:useful1}). Compute $\frac{\ln \gamma}{\epsilon}$ deferred $\bu$-sparsifiers denoted by $\{\D_q\}$. \label{samplestep}
\STATE Consider the union of the constraints sampled in the previous step and compute a $(1-a_3\epsilon)$ approximation to \ppone\ restricted to these constraints. Say that value is $\beta'$.
\label{approxstep}
\STATE If $\beta'>\beta(1-a_3\epsilon)/(1+\epsilon)$ remember the new solution and set 
$\beta = \beta'(1+\epsilon)/(1-a_3\epsilon)$.
\STATE Update $\bx,\bu(q)$ etc. as required by the proof of Theorem~\ref{useful}, we are guaranteed to not invoke condition (i) of {\sc MicroOracle} due to the Step~(\ref{approxstep}). This provides a new sampling weight function 
for Step~(\ref{samplestep}).
\ENDWHILE
\STATE Output the $\by$ corresponding to largest $\beta' \geq (1-a_3\epsilon)\beta/(1+\epsilon)$. We prove that such a $\beta'$ exists.
\end{algorithmic}
}
\caption{An algorithm for Theorem~\ref{useful} (Compare Algorithm~\ref{alg:nat}.)\label{alg:simplealg}}
\end{algorithm}

\noindent
The running time of the algorithm in Theorem~\ref{useful} is dominated by the time to construct the deferred sparsifiers (which includes the evaluation of $\bu$) plus the invocations of {\sc MicroOracle}. The space used to maintain the current average $\bx$ is $O(\tnx \tau_o \tau_i \frac1\epsilon \log \gamma)$. The space to store a single deferred sparsifier is typically  $O(n\promise^2 \poly(\epsilon^{-1}, \log n, \log \promise))$ (this is application specific and depends on $\bu,\L_0$). Therefore if $\rho_o,\rho_i$ are $poly(\epsilon^{-1})$, $\promise=n^{1/(2p)}$ and $\tnx=O(n \epsilon^{-1} \log n)
$, the overall asymptotic space complexity 
will typically be at most $n^{1+1/p}$ (absorbing the $\poly(\epsilon^{-1}, \log n)$ terms), which is $o(m)$ if $m \gg n^{1+1/p}$. For $b$--matching $\tnx=O(n \epsilon^{-1} \log B)$ where $B=\sum_i b_i$.

\medskip
\noindent {\bf Algorithmic Interpretations and Integral Solutions:} The algorithm behind Theorem~\ref{useful} can be expressed as a natural algorithm (relegating the details of how the $\bu$ are defined) as described in  Algorithm~\ref{alg:simplealg}. That algorithm worsens the guarantee to $(1-(1+a_1+a_2+a_3)\epsilon)$--approximation; but also provides an integral solution (assuming such an approximation algorithm against the LP relaxation exists as in Step~\ref{approxstep}) or we can set $a_3=0$ and settle for a fractional solution. For weighted $b$--Matching, such approximation algorithms which provide integral solutions exist \cite{DuanP10,AhnG14}. 
Finally observe the following corollary of Theorem~\ref{useful}.

\begin{corollary}
if we can construct (i) $\D$ used for the deferred $\bu$--sparsifier using $g$ rounds of sketching, and (ii) the initial solution using $k$ rounds of sketching and $n_{init}$ space,
then the algorithm in Theorem~\ref{useful} can be
implemented in $(g \tau + k)$ rounds of adaptive sketching. 
\end{corollary}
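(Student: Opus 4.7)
The plan is to trace Algorithm~\ref{alg:simplealg} and charge adaptive sketching rounds to the two points at which input data is actually touched: the initialization and the construction of the deferred sparsifiers $\{\D_q\}$ in Step~\ref{samplestep}. By hypothesis~(ii) the one-shot computation of $(\bx_0,\beta_0)$ via \initial\ costs $k$ rounds of sketching, which we prepay before the while loop begins.

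Next I would argue that each of the at most $\tau$ outer iterations costs exactly $g$ adaptive rounds of input sketching. The multiplier vector $\bu$ (together with $\bv$ and $\L_0$) is a deterministic function of the running primal iterate $\bx$ and of the {\sc MicroOracle} outputs accumulated in previous iterations; these live in central storage and need no fresh input access to evaluate. Given $\bu$, the $O(\epsilon^{-1}\log \promise)$ deferred sparsifiers are mutually independent randomized objects drawn from the same sketching protocol of hypothesis~(i), so their sketching phases can be executed as $O(\epsilon^{-1}\log \promise)$ concurrent instances within the same $g$ rounds; hence one outer iteration contributes exactly $g$ rounds, not $g$ times the number of sparsifiers.

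The one point that needs explicit verification --- and the only place the statement could slip --- is that the $\tau_i$ invocations of {\sc MicroOracle} per sparsifier, as well as the Step~\ref{approxstep} approximate solve, trigger no additional rounds of input sketching. This is precisely what condition~(d\ref{defcon}) buys: once a deferred sparsifier is built, its sampled index set is fixed and the exact values of the corresponding entries are revealed in the same round. Every subsequent {\sc MicroOracle} call, and every reweighting of $\bu$ within the loop, then operates on those already-revealed entries together with the current iterate and the dual multipliers $\bzeta$, all maintained in central storage; the Step~\ref{approxstep} LP in particular is solved over the union of the sampled constraints alone. Summing the contributions yields $k + g\tau$ adaptive sketching rounds, and the space, running-time, and approximation guarantees inherited from Theorem~\ref{useful} are unaffected since parallelizing independent sketching instances within a single round does not alter the non-sketching bookkeeping.
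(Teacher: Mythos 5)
Your proposal is correct and matches the paper's (implicit) justification: the paper treats this corollary as immediate from the structure of Algorithm~\ref{alg:simplealg} and the discussion around Figure~\ref{figa}, namely that the $O(\epsilon^{-1}\log\promise)$ sparsifiers of one outer iteration are built in parallel within the same $g$ sketching rounds, the subsequent refinements and {\sc MicroOracle} calls operate only on stored, already-revealed entries by condition~(d\ref{defcon}), and the initial solution is a one-time $k$-round cost. Your accounting of $k + g\tau$ is exactly the intended argument.
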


\noindent
{\bf The Intuition behind the Dual-Primal Setup:}
Suppose that the true dual weights at time $t$ is $\bu(t)$ then
primal-dual algorithm can be thought of as progressing through the
sequence
$\bu(1)$, $\t{\bx}(1)$, $\bu(2)$, $\t{\bx}(2)$, $\bu(3)$,$\t{\bx}(3)$,$\ldots$. If we
do not prove any further properties, we do not change the fundamental
dependence between $\bu(1),\bu(2),\ldots$ and the overall process with
sketching remains as adaptive as before, i.e., $\bu(1) \rightarrow
\bu(2) \rightarrow \bu(3)$ changes to $\bu(1) \rightarrow \bu^s(1)
\rightarrow \bu(2) \rightarrow \bu^s(2) \rightarrow \bu(3)$, {\em which
are identical from the perspective of the number of adaptive steps
required to compute} $\bu(t)$ or $\bu^s(t)$. This is shown in the left
part of the Figure~\ref{figa}. 

\begin{figure}[H]
\begin{center}
\begin{tikzpicture}[font=\tiny]
 \tikzstyle{vertex}=[draw,shape=circle, minimum size=0.3cm,inner sep=2pt]
 \tikzstyle{sket}=[draw, shape=regular polygon,regular polygon sides=9,inner sep=-1pt]
 \tikzstyle{rec}=[draw, rectangle] 
 \tikzstyle{prom}=[draw, shape=regular polygon,regular polygon sides=5,inner sep=-1pt]
  \node[vertex] at (0,3) (u1) {$\bu(1)$};
 \node[vertex] at (2,3) (u2) {$\bu(2)$};
 \node[vertex] at (4,3) (u3) {$\bu(3)$};
 \node[sket] at (0,1.5) (v1) {$\bu^s(1)$};
 \node[sket] at (2,1.5) (v2) {$\bu^s(2)$};
 \node[sket] at (4,1.5) (v3) {$\bu^s(3)$};
\draw[->] (u1) to node[left] {Sketch}  (v1);
\draw[->] (u2) to node[left] {Sketch}  (v2);
\draw[->] (u3) to node[left] {Sketch}  (v3);
\node[vertex] at (0,0) (w1) {$\t{\bx}(1)$};
 \node[vertex] at (2,0) (w2) {$\t{\bx}(2)$};
 \node[vertex] at (4,0) (w3) {$\t{\bx}(3)$};
\draw[->] (v1) to node[left] {Oracle}  (w1);
\draw[->] (v2) to node[left] {Oracle}  (w2);
\draw[->] (v3) to node[left] {Oracle}  (w3);
\draw[->] (w1) to node[below,xshift=0.3cm] {update} (1,0) ;
\draw[-] (1,0) to (1,3);
\draw[->] (1,3) to (u2);
\draw[->] (w2) to node[below,xshift=0.3cm] {update} (3,0);
\draw[-] (3,0) to (3,3);
\draw[->] (3,3) to (u3);
\draw[->] (5.25,1.5) to (6.75,1.5);

\node[prom, minimum size=0.7cm] at (7.3,2.5) {};
\node[prom, draw opacity=1, fill=white, fill opacity=1, minimum size=0.7cm ] at (7.4,2.5) {};
\node[prom, draw opacity=1, fill=white, fill opacity=1] at (7.5,2.5) (nu0) {$\bvarsigma(1)$};
 \node[vertex] at (9,3) (nu1) {$\bu(1)$};
 \node[vertex] at (11,3) (nu2) {$\bu(2)$};
 \node[vertex] at (13,3) (nu3) {$\bu(3)$};
 \node[rec] at (9,2.1) (ru1) {$\D(1)$};
 \node[rec] at (11,2.1) (ru2) {$\D(2)$};
 \node[rec] at (13,2.1) (ru3) {$\D(3)$};
 \node[sket] at (9,1.1) (nv1) {$\bu^s(1)$};
 \node[sket] at (11,1.1) (nv2) {$\bu^s(2)$};
 \node[sket] at (13,1.1) (nv3) {$\bu^s(3)$};
\draw[->,dashed] (nu1) to (ru1);
\draw[->,dashed] (nu2) to  (ru2);
\draw[->,dashed] (nu3) to (ru3);
\draw[->] (nu0) to (ru1);
\draw[-] (nu0) to (9.75,2.4);
\draw[->] (9.75,2.4) to (ru2);
\draw[-] (nu0) to (11.5,2.5);
\draw[->] (11.5,2.5) to (ru3);
\draw[->] (ru1) to node[left] {Refine}  (nv1);
\draw[->] (ru2) to node[left] {Refine}  (nv2);
\draw[->] (ru3) to node[left] {Refine}  (nv3);
\node[vertex] at (9,0) (nw1) {$\t{\bx}(1)$};
 \node[vertex] at (11,0) (nw2) {$\t{\bx}(2)$};
 \node[vertex] at (13,0) (nw3) {$\t{\bx}(3)$};
\draw[->] (nv1) to node[left] {Oracle}  (nw1);
\draw[->] (nv2) to node[left] {Oracle}  (nw2);
\draw[->] (nv3) to node[left] {Oracle}  (nw3);
\draw[->] (nw1) to node[below,xshift=0.3cm] {update} (10,0) ;
\draw[-] (10,0) to (10,2.1);
\draw[->] (10,2.1) to (ru2);
\draw[-,dashed] (10,2) to (10,3);
\draw[->,dashed] (10,3) to (nu2);
\draw[->] (nw2) to node[below,xshift=0.3cm] {update} (12,0);
\draw[-] (12,0) to (12,2.1);
\draw[->] (12,2.1) to (ru3);
\draw[-,dashed] (12,2) to (12,3);
\draw[->,dashed] (12,3) to (nu3);
 \end{tikzpicture}
\end{center}
\caption{Adaptivity in a dual primal algorithm.\label{figa}}
\end{figure}
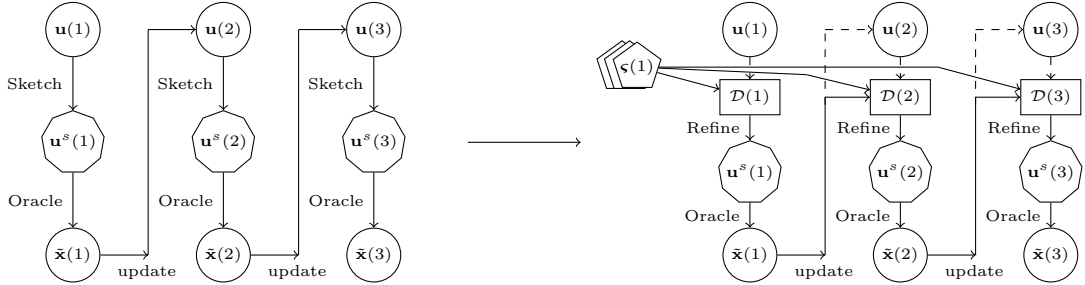

Deferred sparsifiers constructed via sampling allows us to bypass that dependency chain mentioned above. The sparsifier construction is broken into two parts -- the first part of the construction can be made non-adaptive and in parallel. Then the second part can be performed sequentially, but over a small subset of edges stored in memory.
We construct $\bvarsigma(1),\ldots,\bvarsigma(t)$ in parallel which eventually give us the deferred sparsifiers $\D(1),\ldots,\D(t)$. We derive the actual sparsifier $\bu^s(1)$ and the update $\t{\bx}(1)$. Instead of explicitly computing $\bu(2)$, we refine the weights of $\D(2)$ and produce $\bu^s(2)$ directly. Therefore we sequentially compute $\bu^s(1) \rightarrow \bu^s(2) \rightarrow \cdots \rightarrow \bu^s(t)$ and are able to make $t$ simultaneous steps without further access to data. This is shown in the right part of Figure~\ref{figa}. 
However such a reduction of adaptivity is only feasible for specific linear programming relaxations and we should choose that relaxation with care.

\subsection{Proof of Main Theorem}
\label{proof:useful}

In this subsection we prove Theorem~\ref{useful}, based on Definition~\ref{def:amenable}.

\begin{ntheorem}{\ref{useful}}
Suppose for a $(\rho_o,\rho_i)$--dual-primal amenable system \ppone\
given any constant $\epsilon$, $0< \epsilon \leq 1, \bu^s \in \Re^{m}_+,  \bzeta \in \Re^{\tno}_+,  \varrho>0$, a {\sc MicroOracle}  
provides either:
\begin{enumerate}[(i)]\parskip=0in 
\item A feasible solution for \ppone\ with $\bc^T\by \geq (1-a_2\epsilon)\beta$ such that for all $\ell \in [m]$, $\by_\ell >0$ implies $\bu^s_\ell > 0$  
-- i.e.,  a solution only involving the primal variables corresponding to the sampled constraints in the deferred sparsifier. 
\item Or a solution $\bx$ of \laginnero, where the number of non-zero entries of $\bx$ is at most $\tnx$.
\begin{align*}
& (\bu^s)^T \bA \bx - \varrho \bzeta^T \bP_o \bx \geq \left(1-\frac1{16}\right) (\bu^s)^T\bc - \varrho \bzeta^T \bqo \\
& \G(\bu^s,\bx) \tag{\laginnero}\\
& \btQ(\beta) = \{\bb^T\bx \leq \beta,\bPi \bx \leq \bqi,
\bx \in \bQ, \bx \geq 0\}
\end{align*}
\end{enumerate}
Then we can find a
$(1-(1 + a_1 + a_2)\epsilon)$-approximate solution to \ppone\ 
using $\tau$ rounds of deferred $\bu$-sparsifier construction, $\tau
=O\left(\rho_o\left(\frac1{\epsilon} + \epsilon \log \frac1{1-\epsilon_0}\right)\frac{\log (m/\epsilon)}{\log \promise} +  \frac{\log \ae}{\log \promise} \right)$.

 In each round we construct $O(\epsilon^{-1}\log \promise)$ deferred $\bu$--sparsifiers.
The algorithm invokes {\sc MicroOracle} for a fixed deferred sparsifier for at most $O(\rho_i (\log \rho_i) (\log \tno)\log \frac1\epsilon)$
times.
Each nonzero  $\bu_{\ell} \in  {\small \left[\left(2m/\epsilon\right)^{ - O\left(\frac{\rho_o}{\epsilon(1-\epsilon_0)} \right)},1 \right]}$ and finally $\log \L_0=O\left(\frac{\rho_o}{\epsilon(1-\epsilon_0)} \log (m/\epsilon) + \log \gamma \right)$. Moreover the algorithm computes 
(approximately) $\max \lambda; \bA\t{\bx} \geq \lambda \bc$ for a $\t{\bx}$ which is a weighted average of the $\bx$ returned by successive applications of part (ii) -- the algorithm can stop earlier than the stated number of $\tau$ rounds if it observes $\lambda \geq (1-3\epsilon)$.
\end{ntheorem}

\smallskip
We prove the theorem using the following two theorems; Theorem~\ref{useful1} is interesting in its own right and can be used when $\bPo = \bPi, \bqo=\bqi$ which implies $\rho_i=2$. However, the number of iterations depend on $\rho_o$ which can be reduced significantly while increasing $\rho_i$. Note that $\rho_i$ does not affect the overall number of iterations, but is present as a tool to construct the {\sc MicroOracle}.

\begin{theorem}
\label{useful1}{\rm (Proved in Section~\ref{proof:useful1}.)}
Suppose for a $(\rho_o,\rho_i)$--dual-primal amenable system \ppone\
given any $\bu^s \in \Re^{m}_+, \bu^s \geq 0$, a {\sc MiniOracle}  
provides either condition (i) as in Theorem~\ref{useful} or 
 a solution $\bx$ of \spar, where the number of non-zero entries of $\bx$ is at most $\tnx$.
{\small
\begin{align*}
& (\bu^s)^T \bA \bx \geq (1-\epsilon/8)(\bu^s)^T\bc \\
& \bPo \bx \leq 2\bqo \\
& \G(\bu^s,\bx)  \tag{\spar}\\
& \btQ(\beta) = \{\bb^T\bx \leq \beta,\bPi \bx \leq \bqi,
\bx \in \bQ, \bx \geq 0\}
\end{align*}
}
then the conclusions of Theorem~\ref{useful} (number of rounds, deferred sparsifiers constructed in each round, existence of early stopping certificate and the range 
of values for the multipliers) hold. 
\end{theorem}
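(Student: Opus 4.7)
The plan is to run a multiplicative-weights-update (MWU) method on the $m$ dual constraints $\bA\bx\geq\bc$, wrapped in a geometric line-search on $\beta$, and to exploit property~(d4) of Definition~\ref{def:amenable} to batch consecutive MWU iterations into a single sampling round. I would initialize $\beta=\beta_0$ and $\bx=\bx_0$ from property~(d5); whenever the MiniOracle returns condition~(i), I record the returned $\by$ (with $\bc^T\by\geq(1-a_2\epsilon)\beta$) as the current best primal, multiply $\beta$ by $(1+\epsilon)$, and warm-restart MWU. Since $\beta_0\geq\beta^*/\ae$ and each increase is by a factor of $(1+\epsilon)$, condition~(i) fires at most $O(\log\ae/\epsilon)$ times in total, contributing the $\log\ae/\log\gamma$ term to $\tau$. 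Once condition~(i) stops firing and MWU at the current $\beta$ drives the averaged iterate $\bar\bx$ to $\bA\bar\bx\geq(1-3\epsilon)\bc$ with $\bar\bx\in\btQ(\beta)$, property~(d1) certifies $\beta^*\leq\beta/(1-a_1\epsilon)$; combining this with the stored $\by$ of value $\geq(1-a_2\epsilon)\beta/(1+\epsilon)$ yields the desired $(1-(1+a_1+a_2)\epsilon)$-approximation.

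For a fixed $\beta$, I would maintain multipliers $\bu\in\Re_+^m$ initialized by $\bu_\ell\propto\exp(\eta(\bc_\ell-(\bA\bx_0)_\ell)/\bc_\ell)$ with step size $\eta=\Theta(\epsilon/\rho_o)$; the initial shortfall of at most $\epsilon_0$ produces the $\epsilon\log(1/(1-\epsilon_0))$ correction in the MWU regret bound. Group the iterations into batches of $t=\lceil\epsilon^{-1}\log\gamma\rceil$. At the start of a batch, freeze the base distribution $\bvarsigma\leftarrow\bu$, launch $t$ parallel deferred-sparsifier constructions $\D(1),\dots,\D(t)$ from $\bvarsigma$, and then process them sequentially: at sub-step $q$, refresh the $\tns$ stored indices of $\D(q)$ to their current $\bu$-values to form $\bu^s(q)$, call MiniOracle on $\bu^s(q)$ to obtain a solution $\bx_q$ of \spar, add $\bx_q$ to the running sum, and MWU-update every coordinate by $\bu_\ell\leftarrow\bu_\ell\cdot\exp(\eta(\bc_\ell-(\bA\bx_q)_\ell)/\bc_\ell)$. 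Property~(d2) bounds each $(\bA\bx_q)_\ell/\bc_\ell$ by $\rho_o$, so over the $t$ sub-steps every $\bu_\ell$ drifts multiplicatively by at most $\exp(\eta\rho_o t)\leq\gamma$, which is exactly the $\bv/\gamma\leq\bu\leq\gamma\bv$ promise required by~(d4); the \spar\ guarantee $(\bu^s(q))^T\bA\bx_q\geq(1-\epsilon/8)(\bu^s(q))^T\bc$ then lifts via the \simp\ implication to $\bu^T\bA\bx_q\geq(1-\epsilon/2)\bu^T\bc$.

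Convergence follows from the standard MWU potential argument applied to $\Phi=\sum_\ell\bu_\ell$: the $(1-\epsilon/2)$ Lagrangian slack on $\bu$ combined with per-coordinate width $\rho_o$ yields multiplicative decrease of $\Phi$ by a factor of $(1-\Theta(\eta\epsilon)+O(\eta^2\rho_o^2))$ per MWU iteration, while $\log\Phi\geq\eta\max_\ell[\text{cumulative violation of constraint }\ell]-\log m$. With $\eta=\Theta(\epsilon/\rho_o)$ and $T=O(\rho_o\epsilon^{-2}\log(m/\epsilon))$ iterations, the average $\bar\bx/T$ satisfies $\bA\bar\bx/T\geq(1-3\epsilon)\bc$; since each $\bx_q$ lies in $\btQ(\beta)$ and (as \spar\ enforces) also obeys $\bPo\bx_q\leq2\bqo$, the average lies in $\P(\beta)$ and (d1) triggers. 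The total number of sampling rounds is $\tau=O(T/t)+O(\log\ae/\log\gamma)$, matching the stated bound, and the maximum multiplier range is $\L_0\leq\exp(\eta T)=(2m/\epsilon)^{O(\rho_o/\epsilon)}$. Early stopping is transparent: after every batch we recompute $\lambda=\min_\ell(\bA\bar\bx/T)_\ell/\bc_\ell$ and exit once $\lambda\geq 1-3\epsilon$.

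The main obstacle is synchronizing three numerical parameters: the MWU step size $\eta$, the batch length $t$, and the tolerance split between \spar\ (an $\epsilon/8$ shortfall on $\bu^s$) and the MWU regret analysis (which needs an $\epsilon/2$ shortfall on $\bu$). The batch length must satisfy $\eta\rho_o t\leq\log\gamma$ so that (d4)'s drift promise holds, yet $t$ must be large enough to yield the claimed $\log\gamma$ reduction in adaptivity; simultaneously $\eta$ must be small enough that the MWU second-order error $\eta\rho_o^2$ is dominated by the per-step Lagrangian gain $\Theta(\eta\epsilon)$, yet large enough that $T=O(\log(m/\epsilon)/(\eta\epsilon))$ iterations suffice. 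The tolerance gap $\epsilon/2-\epsilon/8$ built into (d4) is precisely what absorbs these interactions; once the constants are pinned down, the remainder is routine MWU bookkeeping together with a single application of (d1) to the averaged dual iterate.
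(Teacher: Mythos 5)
Your proposal is correct and follows essentially the same route as the paper: a PST-style fractional-covering/MWU scheme on the dual constraints $\bA\bx\geq\bc$ over the polytope $\P(\beta)$, a $(1+\epsilon)$-geometric search on $\beta$ driven by the oracle's condition~(i) with the last returned $\by$ combined with condition~(d1) to certify the approximation, and the batching of $\epsilon^{-1}\log\promise$ consecutive iterations into one sampling round using the bounded multiplicative drift of the $\bu_\ell$ together with the \simp\ implication of~(d4). The only difference is presentational -- you carry out the multiplicative-weights potential argument directly with a hedge-style update, whereas the paper invokes Theorem~\ref{covering} of Plotkin--Shmoys--Tardos as a black box with the relaxed oracle of Corollary~\ref{corcover} -- and this does not change the substance of the argument.
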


\noindent The next theorem provides a surprising use of the sparsifiers, namely that we can iterate over the small size sparsifier without requiring fresh access to input.

\begin{theorem}{\rm (Proved in Section~\ref{proof:useful2}.)}
\label{useful2}
We use $\tau_i= O(\rho_i (\log \rho_i) (\log \tno)\log \frac1\epsilon)$ invocations of {\sc MicroOracle} to implement an 
invocation of \spar. 
 In combination with Theorem~\ref{useful1}, this implies Theorem~\ref{useful}. 
\end{theorem}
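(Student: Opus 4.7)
The plan is to implement a single invocation of Sparse via a Plotkin--Shmoys--Tardos style multiplicative-weights loop over the $\tno$ outer constraints $\bP_o\bx \leq 2\bqo$: the ``hard'' outer constraints of Sparse are lifted into a Lagrangian penalty handled by MicroOracle through LagInner, while the easy constraints $\G(\bu^s,\bx)$ and $\btQ(\beta)$ pass through verbatim.

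First I would initialize uniform dual weights $\bzeta^{(1)}_k = 1/\tno$ over the $\tno$ outer constraints and fix the Lagrange parameter $\varrho$ so that the objective and penalty terms in LagInner have comparable scale (e.g., $\varrho$ proportional to $(\bu^s)^T\bc / \mathbf{1}^T\bqo$). At round $t$, invoke MicroOracle with $(\bu^s,\bzeta^{(t)},\varrho)$: if it returns condition (i), pass that primal feasible solution back immediately and halt. Otherwise it returns $\bx^{(t)}$ satisfying the LagInner inequality together with $\G(\bu^s,\bx^{(t)})$ and $\btQ(\beta)$. I would then update the multipliers multiplicatively as $\bzeta^{(t+1)}_k \propto \bzeta^{(t)}_k \exp\bigl(\eta\,(\bP_o\bx^{(t)})_k/(\rho_i (\bqo)_k)\bigr)$ for a suitable step size $\eta$. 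By the Inner Width condition (d3), any $\bx^{(t)}$ feasible for $\bPi\bx\leq\bqi$, $\bx\in\bQ$, $\bx\geq 0$ satisfies $\bP_o\bx^{(t)}\leq\rho_i\bqo$, so the normalized per-coordinate ``loss'' lies in $[0,1]$ and the standard MWU regret bound applies.

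Averaging the LagInner inequalities over $T$ rounds gives $\bar\bx = T^{-1}\sum_t \bx^{(t)}$, which by convexity of $\bQ$ and linearity of $\btQ(\beta)$ and $\G$ still satisfies the easy constraints. The MWU guarantee yields
$$\max_k \frac{(\bP_o\bar\bx)_k}{(\bqo)_k} \;\leq\; (1+\delta)\cdot \frac{1}{T}\sum_t \frac{(\bzeta^{(t)})^T \bP_o\bx^{(t)}}{(\bzeta^{(t)})^T \bqo} + O\!\left(\frac{\rho_i \log\tno}{\eta T}\right),$$
while averaging the Lagrangian inequalities lower bounds $(\bu^s)^T\bA\bar\bx$ at scale $(1-1/16)(\bu^s)^T\bc$. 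A base MWU loop of length $T = \Theta(\rho_i\log\tno)$ drives the maximum relative violation down to a constant, and an outer width-reduction loop that halves the target ratio from $\rho_i$ down to $2$ in $O(\log\rho_i)$ phases enforces the factor-$2$ slack $\bP_o\bar\bx\leq 2\bqo$ required by Sparse. The additional $\log(1/\epsilon)$ factor in $\tau_i$ then arises from boosting the objective from the constant $1-1/16$ slack LagInner guarantees to the tighter $1-\epsilon/8$ Sparse demands, via a geometric refinement that rescales $\varrho$ (and hence $\bar\bx$) by a bounded factor in each of $O(\log(1/\epsilon))$ outer passes.

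The main obstacle is to show $\bar\bx$ \emph{simultaneously} achieves the outer-constraint slack $\bP_o\bar\bx\leq 2\bqo$ and the strengthened objective $(\bu^s)^T\bA\bar\bx \geq (1-\epsilon/8)(\bu^s)^T\bc$, since the two conditions pull $\varrho$ in opposite directions: a small $\varrho$ weakens outer-constraint control, while a large $\varrho$ erodes the objective slack that LagInner provides. Balancing this is exactly what forces the $\rho_i\log\rho_i$ product (width-reduction phases of length $\rho_i$ each) together with the $\log(1/\epsilon)$ refinement pass. Once this reduction is in place, composing with Theorem~\ref{useful1} immediately gives Theorem~\ref{useful}: each of the $\tau$ Sparse invocations in Theorem~\ref{useful1} is realized by $\tau_i = O(\rho_i\log\rho_i\,\log\tno\,\log(1/\epsilon))$ MicroOracle invocations, and the counts of sampling rounds and deferred sparsifiers, the range of the multipliers $\bu_\ell$, the bound on $\log\L_0$, and the early-stopping certificate all carry over verbatim from Theorem~\ref{useful1}.
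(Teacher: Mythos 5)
Your high-level architecture matches the paper's: \spar\ is treated as a packing problem over the $\tno$ outer constraints $\bPo\bx\leq 2\bqo$, solved by a Plotkin--Shmoys--Tardos / multiplicative-weights loop whose inner oracle is realized through the Lagrangian \laginnero, with the Inner Width condition (d3) supplying the width bound $\rho_i$ and hence the $\rho_i(\log\rho_i)\log\tno$ count of packing iterations. But there is a genuine gap at the single most delicate step: how one invocation of the packing oracle (\innero, which demands \emph{both} $\bzeta^T\bPo\bx\leq\frac{13}{12}\bzeta^T\bqo$ \emph{and} $(\bu^s)^T\bA\bx\geq(1-\frac{\epsilon}{8})(\bu^s)^T\bc$) is extracted from {\sc MicroOracle}, which only guarantees the single combined scalar inequality of \laginnero. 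You fix $\varrho$ once at a ``comparable scale'' and then rely on the MWU regret bound; but for a fixed $\varrho$ the returned $\bx^{(t)}$ can have $\bzeta^T\bPo\bx^{(t)}$ as large as $\rho_i\,\bzeta^T\bqo$ while still satisfying \laginnero, so the average per-round loss in your regret bound is only bounded by $\rho_i$, not by a constant, and averaging cannot drive $\max_k(\bPo\bar\bx)_k/(\bqo)_k$ down to $2$. You name this obstacle yourself but then assert it is resolved by ``balancing,'' which is not an argument.

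The paper's device, which is what you are missing, is a \emph{binary search over $\varrho$ inside each packing-oracle call}: one observes that at $\varrho\geq\varrho_0=\frac{12(\bu^s)^T\bc}{13\bzeta^T\bqo}$ the zero vector satisfies \laginnero\ and trivially satisfies the packing bound, while at the target $\varrho=\epsilon(\bu^s)^T\bc/(16\bzeta^T\bqo)$ the returned solution may violate it; one narrows the bracketing interval to width $\epsilon\varrho_0/16$ using $O(\log\frac{1}{\epsilon})$ {\sc MicroOracle} calls and then takes the convex combination $s_1\t{\bx}_1+s_2\t{\bx}_2$ of the two bracketing solutions chosen so that $\bzeta^T\bPo\bx$ equals $\frac{13}{12}\bzeta^T\bqo$ exactly; a short computation using both \laginnero\ inequalities shows the combined solution retains objective $(1-\frac{\epsilon}{8})(\bu^s)^T\bc$. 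This is also the true origin of the $\log\frac{1}{\epsilon}$ factor in $\tau_i$ --- it is the binary-search depth per packing iteration, not an outer ``geometric refinement'' boosting the objective as you propose (indeed the objective slack in \laginnero\ is already $\epsilon/16$, not the constant $1/16$; the constant in the theorem statement is a typo relative to the lemma actually proved). Without the bracketing-and-interpolation step your reduction does not produce a valid \innero\ oracle, and the composition with Theorem~\ref{useful1} does not go through.
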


\subsubsection{Proof of Theorem~\ref{useful1}}
\label{proof:useful1}
We use the following result from \cite{PlotkinST95}, we change the notation to fit our context -- note that we start with a different initial condition than stated in Lemma~3.6 in \cite{PlotkinST95}. That lemma in \cite{PlotkinST95} relied on $O(M)$ rounds of computation to produce a $\epsilon_0=1-1/M$ solution, i.e., $\bA\bx_0 \geq \bc/M$ where $M$ is the number of constraints. The altered initial condition gives us Theorem~\ref{covering}.

\begin{theorem}[\cite{PlotkinST95}]
\label{covering}
Suppose we are given a decision
problem $\bA\bx \geq \bc$ such that $\bx \in \P$ where $\bA \in \R^{M
  \times N}, \bx \in \R^N, \bc \in \R^M$ where $\P$ is some polytope such that $\mathbf{0} \leq \bA \bx \leq \rho \bc$
for all $\bx \in \P$. 
Suppose we have an initial $\bx_0 \in \P$ satisfying $\bA\bx_0 \geq (1-\epsilon_0) \bc$. The algorithm sets $\bx=\bx_0$ and proceeds in phases. 
In phase $t$ it determines $\lambda_t=\min_{\ell} (\bA\bx)_{\ell}/\bc_\ell$. It then repeatedly 
queries an {\sc Oracle-C} for $\argmax_{\t{\bx} \in \P} \bu^T\bA \t{\bx}$ where 
\[ \bu_\ell = exp \left( - \alpha
(\bA \bx)_{\ell}/\bc_\ell \right)/\bc_\ell
 \qquad \alpha=O(\lambda^{-1}_t\epsilon^{-1} \ln (M/\epsilon)) \]
and performs an update step $\bx \leftarrow (1-\sigma)\bx+\sigma\t{\bx}$ where $\sigma=\epsilon/(4\alpha\rho)$ (note that the failure to produce a $\t{\bx}$ implies that the decision problem is infeasible). The phase continues till  $\lambda=\min_\ell (\bA\bx)_{\ell}/\bc_{\ell} \geq 
\max \{ 2\lambda_t,1 - 3\epsilon\}$. If $\lambda < 1-3\epsilon$, then a new phase is started till $\lambda \geq (1-3\epsilon)$.
Then after at most $T=O \left(\rho \left( \epsilon^{-2} + \log \frac{1}{1-\epsilon_0}\right) \log \frac{M}{\epsilon} \right)$ invocations of {\sc Oracle-C}  the fractional covering framework either provides (i) a solution $\bA\bx \geq (1-3\epsilon)\bc, \bx \in \P$ or (ii) a non-negative vector $\by \in \R^M$ such that $\by^T\bA\bx < \by^T \bc$ for all $\bx \in \P$, thereby proving the infeasibility of $\{\bA\bx \geq \bc , \bx \in \P\}$. 

The framework, explicitly  maintains $\lambda = \min_\ell (\bA\bx)_{\ell}/\bc_{\ell}$ and the $\bu_\ell$ change by a factor of at most $e^{\pm\epsilon}$ from one invocation of {\sc Oracle-C} to the next. Each nonzero  $\bu_{\ell} \in  \left[\left(2M/\epsilon\right)^{ - O\left(\frac{\rho_o}{\epsilon(1-\epsilon_0)} \right)},1 \right]$. 
\end{theorem}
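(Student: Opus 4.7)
The plan is to import the classical Plotkin-Shmoys-Tardos potential-function analysis essentially verbatim, and only redo the phase accounting, which is the single place the altered initial condition $(1-\epsilon_0)$ enters the bound. I would introduce the potential $\Phi(\bx) = \sum_\ell \exp(-\alpha(\bA\bx)_\ell/\bc_\ell) = \sum_\ell \bc_\ell \bu_\ell$. The oracle returns $\t{\bx} \in \P$ maximizing $\bu^T\bA\t{\bx}$; if $\bu^T\bA\t{\bx} < \bu^T\bc$, then no $\bx' \in \P$ can satisfy $\bA\bx' \geq \bc$ and $\by = \bu$ is the infeasibility certificate (ii). Otherwise $\bu^T\bA\t{\bx} \geq \bu^T\bc$ is the inequality that will drive $\Phi$ down across iterations.

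For the per-step decrease I would Taylor-expand. Each coordinate $(\bA\bx)_\ell/\bc_\ell$ shifts by $\sigma((\bA\t{\bx})_\ell - (\bA\bx)_\ell)/\bc_\ell$, whose magnitude is at most $\sigma\rho$ by the width bound $\mathbf{0} \leq \bA\bx \leq \rho\bc$. Because $\sigma\alpha\rho = \epsilon/4$ by construction, a second-order expansion of the exponential is valid and yields
\[
\Phi_{\text{new}} \leq \Phi - \alpha\sigma\bigl(\bu^T\bA\t{\bx} - \bu^T\bA\bx\bigr) + O(\epsilon\alpha\sigma)\bigl(\bu^T\bA\bx + \Phi\bigr).
\]
Combining with $\bu^T\bA\t{\bx} \geq \bu^T\bc$ and $\bu^T\bA\bx \leq \rho\,\bu^T\bc$ (from the width) produces an explicit multiplicative drop in $\Phi$ per iteration that depends on $\alpha\sigma$; the inequality is useful inside the phase because the phase-termination rule keeps the coordinates that dominate $\Phi$ bounded above by $2\lambda_t$.

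Counting then proceeds phase by phase. A standard calculation (as in PST Lemma~3.6) shows that within phase $t$, with $\alpha = \Theta(\lambda_t^{-1}\epsilon^{-1}\log(M/\epsilon))$, a potential drop by a factor $O(M/\epsilon)$ suffices to double $\lambda_t$, and each step drops $\Phi$ by a factor bounded away from $1$; the net cost to double $\lambda_t$ is $O(\rho\log(M/\epsilon))$ iterations in an interior phase, while the terminal phase, which must push $\lambda$ from $\Omega(1)$ all the way up to $1-3\epsilon$, requires the standard $O(\rho\epsilon^{-2}\log(M/\epsilon))$. The altered initial condition $\bA\bx_0 \geq (1-\epsilon_0)\bc$ caps the number of doubling phases at $O(\log(1/(1-\epsilon_0)))$, whereas PST's worst-case start $\epsilon_0 = 1-1/M$ would need $O(\log M)$; summing gives $T = O(\rho(\epsilon^{-2} + \log(1/(1-\epsilon_0)))\log(M/\epsilon))$. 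The multiplier range and the $e^{\pm\epsilon}$ per-step change fall out of the same bookkeeping: one update shifts the exponent $\alpha(\bA\bx)_\ell/\bc_\ell$ by at most $\alpha\sigma\rho = \epsilon/4$, and the cumulative exponent across $T$ iterations is $O((\rho/(\epsilon(1-\epsilon_0)))\log(M/\epsilon))$, translating to the stated range $[(2M/\epsilon)^{-O(\rho/(\epsilon(1-\epsilon_0)))}, 1]$.

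The main obstacle is keeping the quadratic Taylor remainder dominated by the first-order progress even when individual entries of $\bA\bx/\bc$ may be as large as $\rho$. This is precisely what the width hypothesis $\bu^T\bA\bx \leq \rho\,\bu^T\bc$ is there to absorb, provided the step size satisfies $\alpha\sigma\rho \leq \epsilon/4$ — exactly the budget that the algorithm's choice of $\sigma = \epsilon/(4\alpha\rho)$ enforces. The only reason the initial condition appears in the final iteration count is that with $\lambda_0 = 1-\epsilon_0$ already bounded away from $0$, the geometric doubling chain of phases starts near the terminal regime, truncating what is otherwise a $\log M$ cost to $\log(1/(1-\epsilon_0))$.
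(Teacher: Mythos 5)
Your proposal is correct and follows the same route as the paper, which simply invokes the Plotkin--Shmoys--Tardos potential-function analysis unchanged and observes that the only effect of the improved initial solution $\bA\bx_0 \geq (1-\epsilon_0)\bc$ is to truncate the doubling-phase count from $O(\log M)$ to $O(\log \frac{1}{1-\epsilon_0})$. The paper states this in one line before the theorem; your reconstruction of the per-iteration potential drop, the infeasibility certificate $\by=\bu$, and the multiplier range is consistent with that citation and with the stated bounds.
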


\noindent We will be using a small modification which if not present in \cite{PlotkinST95}.

\begin{corollary}
\label{corcover}
Theorem~\ref{covering} holds if {\sc Oracle-C} finds $\bx \in \P$ satisfying
$\bu^T \bA \t{\bx} \geq (1-\epsilon/2) \bu^T \bc$
or reports that no such solutions exist.
\end{corollary}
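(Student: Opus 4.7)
The plan is to localize the two places where the exactness of \textsc{Oracle-C} enters the proof of Theorem~\ref{covering}, and verify that each tolerates a $(1-\epsilon/2)$ slack at the cost of only constant factors hidden in the big-$O$ notation.

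First I would dispose of the infeasibility branch. If the relaxed oracle reports that no $\t\bx\in\P$ satisfies $\bu^T\bA\t\bx\ge(1-\epsilon/2)\bu^T\bc$, then a fortiori no $\t\bx\in\P$ achieves $\bu^T\bA\t\bx\ge\bu^T\bc$, so the current weight vector $\bu$ is itself the non-negative Farkas-style certificate claimed in conclusion (ii) of Theorem~\ref{covering}, and this branch is unchanged.

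Second, the progress branch. The PST potential $\Phi(\bx)=\sum_\ell\bu_\ell$ decreases in each iteration by an amount driven by the oracle-return ratio $\bu^T\bA\t\bx/\bu^T\bc$ relative to a term governed by the current state $\bx$, plus a second-order correction controlled by the width bound $\bA\bx\le\rho\bc$. With the exact oracle one invokes the lower bound $\bu^T\bA\t\bx/\bu^T\bc\ge 1$ from feasibility of $\{\bA\bx\ge\bc,\bx\in\P\}$; under the relaxed oracle this becomes $\ge 1-\epsilon/2$. During any phase of the algorithm we have $\lambda<1-3\epsilon$, hence the per-step gap is still at least $(1-\epsilon/2)-(1-3\epsilon)=5\epsilon/2=\Omega(\epsilon)$. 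Since the width bound and the step size $\sigma=\epsilon/(4\alpha\rho)$ are untouched by the oracle's relaxation, the second-order correction remains dominated by this first-order gap, so the phase-length analysis, the iteration count $T$, the multiplicative drift $e^{\pm\epsilon}$ on $\bu$, and the range of nonzero $\bu_\ell$ all shift only by absolute constants.

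The main obstacle is routine bookkeeping rather than any new idea: one has to verify that each step leading to Lemma~3.6 of \cite{PlotkinST95} goes through when the per-iteration lower bound on $\bu^T\bA\t\bx$ is relaxed by a multiplicative $(1-\epsilon/2)$ factor. An equivalent, perhaps cleaner, way to package this bookkeeping is to reinterpret the relaxed oracle as an exact oracle for the rescaled target $\bc'=(1-\epsilon/2)\bc$ (which enlarges the effective width by a factor $(1-\epsilon/2)^{-1}=1+O(\epsilon)$), apply Theorem~\ref{covering} to yield $\bA\bx\ge(1-3\epsilon)\bc'=(1-O(\epsilon))\bc$, and then rescale $\epsilon$ by an absolute constant to recover the stated $(1-3\epsilon)\bc$-feasibility. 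Either route leaves every quantitative conclusion of Theorem~\ref{covering} intact up to constants.
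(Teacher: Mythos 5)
Your proposal is correct and follows essentially the same route as the paper: the paper likewise reduces the claim to the per-iteration progress condition (Lemma~3.3 of \cite{PlotkinST95}), using Lemma~3.1 there to bound $\bu^T\bA\bx \leq (1+\epsilon)\lambda\,\bu^T\bc$ with $\lambda < 1-3\epsilon$ during a phase, so that the $\Omega(\epsilon)\,\bu^T\bc$ gap between the relaxed oracle's guarantee and the current point survives and only doubles the iteration count. The concrete inequality you defer to ``routine bookkeeping'' is exactly the one the paper writes out, namely that $\bu^T\bA\t{\bx} \geq (1-\epsilon/2)\bu^T\bc$ implies $(1-\epsilon/2)\bu^T\bA\t{\bx} \geq (1+\epsilon/2)\bu^T\bA\bx + \epsilon\,\bu^T\bc/2$.
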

\begin{proof}
Note that $\left(\bu^T \bA \t{\bx} \geq (1-\epsilon/2) \bu^T \bc \right)$ implies 
$\left( (1-\epsilon/2) \bu^T\bA\t{\bx} \geq (1+\epsilon/2) \bu^T\bA\bx + \epsilon \bu^T \bc/2 \right)$.
Lemma
  3.3 in \cite{PlotkinST95} requires the RHS of the implication to
  hold with $\epsilon \bu^T\bc$ (it uses notation $\by,\bb$ instead of $\bu,\bc$); however ensuring
  $\epsilon \bu^T\bc/2$ only increases the number of iterations in Theorem~\ref{covering} by a
  factor $2$. Lemma 3.1 in
  \cite{PlotkinST95} shows $\bu^T\bA\bx \leq (1+\epsilon)\lambda\bu^T\bc$ and therefore $(1+\epsilon/2) \bu^T\bA\bx \leq (1+\epsilon/2)(1+\epsilon)(1-3\epsilon) \bu^T\bb \leq (1-3\epsilon/2) \bu^T\bc$. The implication follows using simple calculations.
\end{proof}

\begin{algorithm}[H]
{\small
\begin{algorithmic}[1]
\vspace{0.05in}
\STATE Start with $\bx=\bx_0$ and $\beta=\beta_0$, where $\bx_0,\beta_0$ refer to the initial solution \initial.
\STATE Consider the following {\bf family} of decision problems:
{
\[ \hspace{-0.5in}
\left  .
\begin{array}{ll} 
& \{\bA\bx \geq \bc, \bx \in \P(\beta) \}  \quad \mbox{where} \\
&  \P(\beta)=  \left\{ \begin{array}{l}
\bP_o \bx \leq 2 \bqo  \\
\t{\bQ}(\beta) =  \left\{
\bb^T\bx \leq \beta;
\bP_i \bx \leq \bqi; \bx \in \bQ;
\bx \geq \bzero \right \}
\end{array} \right.  
\end{array} \right. \tag{\lpdual}
\]
}
\noindent We will be solving the entire family simultaneously. Note that $\t{\bQ}(\beta') \subseteq \t{\bQ}(\beta)$ for $\beta'  \leq \beta$. This is why we start from a low value of $\beta_0 \leq \beta^*/2$.
\WHILE {$\lambda < 1-3\epsilon$ (where $\lambda = \min_{\ell} (\bA\bx)_{\ell}/\bc_{\ell}$)}
\STATE Define $\bu$ as in Corollary~\ref{corcover}
and consider
the system \outero:
{\small
\[ \bu^T \bA \t{\bx} \geq (1-\epsilon/2) \bu^T \bc, \quad \t{\bx} \in \P(\beta) \tag{\outero}\] 
}
\vspace{-0.1in} 
\STATE Set $\bu^s=\bu$ and invoke {\sc MiniOracle} to solve \outero\
\vspace{-0.05in} 
\begin{itemize}\parskip=0in
\item[] (a) If the {\sc MiniOracle} returns $\t{\bx}$ then update $\bx,\bu$ as in Corollary~\ref{corcover}. 
\item[] (b) Otherwise (if {\sc MiniOracle} returns $\by$) then set $\beta \leftarrow (1+\epsilon) \beta$; remember only the last such $\by$; and repeat {\sc MiniOracle} with the new $\beta$.
\end{itemize}
\vspace{-0.05in}
\ENDWHILE
\STATE Output the last remembered $\by$ corresponding to $\beta'=\beta/(1+\epsilon)$. We prove that such a $\beta'$ exists.
\end{algorithmic}
}
\caption{An intermediate algorithm for Theorem~\ref{useful1}\label{alg:lessuseful1}} 
\end{algorithm}

\noindent 
\begin{proof}(Of Theorem~\ref{useful1}.)
As stated in the theorem, we have an initial solution
$\bx_0$ such that $\beta^*/\ae \leq \beta_0=\bb^T\bx_0 < \beta^*/2$ and 
$\bx \in \P(\beta_0)$ along with $\bA\bx_0 \geq (1
- \epsilon_0) \bc$. We consider the Algorithm~\ref{alg:lessuseful1} -- 
this is {\bf not} the final algorithm.

Based on Corollary~\ref{corcover}, using $M=m$ and $m \geq 1/\epsilon$,
if for some $\beta$ we have $T=O(\rho_o (\epsilon^{-2} + \log \frac1{(1-\epsilon_0)}) \log m)$ such solutions $\bx$; then we would have $\bA\bx
\geq (1-3\epsilon)\bc$ with $\bb^T \bx \leq \beta$. Since $\beta_0 \leq \beta^*/2$, it implies that at least one call to {\sc MiniOracle} will provide a $\by$ as in condition (i) since we are starting from $\beta=\beta^0$.
Now consider the {\em largest} value of $\beta$, say $\beta'$, for which any call to {\sc MiniOracle} has provided a $\by$ as in (i) -- such a value exists because we cannot have primal feasible solutions with value more than $\beta^*$.

Since an invocation to {\sc MiniOracle} has provided {\em primal feasible solution} $\by$ for \ppone\ satisfying $\bc^T \by \geq (1-a_2\epsilon)\beta'$. 
Now for the value of $(1+\epsilon)\beta'$ we have $T$ solutions of \outero, (irrespective of when and where we raised the $\beta$ values since a solution of \outero\ with $\beta=\beta''$, continues to hold
for larger values of $\beta > \beta''$), and therefore $\bA\bx
\geq (1-3\epsilon)\bc$ with $\bx \in Q,\bc^T\bx \leq (1+\epsilon)\beta'$.
By condition~(d\ref{dualfeasible}), we have a proof that $(1+\epsilon)\beta' \geq \beta^*/(1-a_2\epsilon)$.
Therefore we have a $\by$ (provided by the last invocation of {\sc MiniOracle} to return (i) ) such that $\bb^T\by \geq (1-(1+a_1+a_2)\epsilon)\beta^*$, proving the quality of approximation of Theorem~\ref{useful1} (same as in Theorem~\ref{useful}).
Note that the number of time we increased $\beta$ is at most $O(\frac1{\epsilon}\log \ae)$. Note that $\tau = (T + O(\frac1{\epsilon}\log \ae))/(\epsilon^{-1} \log \gamma)$, is the number of times we invoke {\sc MiniOracle}.
We modify Algorithm~\ref{alg:lessuseful1} to  Algorithm~\ref{alg:corealg} -- observe that the new algorithm simplifies to the description of Algorithm~\ref{alg:simplealg} 
discussed in Section~\ref{sec:general}.

\begin{algorithm}[H]
{\small
\begin{algorithmic}[1]
\vspace{0.05in}
\STATE Start with $\bx=\bx_0$ and $\beta=\beta_0$, where $\bx_0,\beta_0$ refer to the initial solution \initial.
\STATE Consider the following family of decision problems.
{\small
\[
\left . \hspace{-0.5in}
\begin{array}{ll} 
& \{\bA\bx \geq \bc, \bx \in \P(\beta) \}  \quad \mbox{where} \\
& \P(\beta)= \left\{ \begin{array}{l}
\bP_o \bx \leq 2 \bqo  \\
\t{\bQ}(\beta) = \left\{
\bb^T\bx \leq \beta;
\bP_i \bx \leq \bqi; \bx \in \bQ;
\bx \geq \bzero \right \}
\end{array} \right.  
\end{array} \right. \tag{\lpdual}
\]
}
\vspace{-0.05in}
\WHILE {$\lambda < 1-3\epsilon$ (where $\lambda=\min_\ell (\bA\bx)_{\ell}/\bc_{\ell}$)}
\STATE Define weights $\bu$ as in Corollary~\ref{corcover}. Compute $\frac{\ln \gamma}{\epsilon}$ deferred $\bu$-sparsifiers denoted by $\{\D_q\}$.
\FOR {$q=1$ to $\frac{\ln \gamma}{\epsilon}$}
\STATE Define \outero\ (based on the current $\bx,\bu(q)$) as:
{\small
\begin{align} \hspace{-0.5in} \bu^T(q) \bA \t{\bx} \geq (1-\epsilon/2) \bu^T(q) \bb \quad \t{\bx} \in \bQ_o(\beta) \tag{\outero}
\end{align} 
}
\vspace{-0.2in}
\STATE Refine $\D_q$ based on the current $\bx$ to produce $\bu^s(q)$.
\STATE Invoke {\sc MiniOracle} to solve \spar\
\vspace{-0.05in}
\begin{itemize}\parskip=0in
\item[] (a) If the {\sc MiniOracle} returns $\t{\bx}$; update $\bx$ as in Corollary~\ref{corcover}. Note $\spar\ \Rightarrow \outero$.
\item[] (b) Otherwise (if {\sc MiniOracle} returns $\by$) then set $\beta \leftarrow (1+\epsilon) \beta$; remember only the last such $\by$; and repeat {\sc MiniOracle} with the new $\beta$.
\end{itemize}
\vspace{-0.1in}
\ENDFOR
\ENDWHILE
\STATE Output the last remembered $\by$ corresponding to $\beta'=\beta/(1+\epsilon)$. We prove that such a $\beta'$ exists.
\end{algorithmic}
}
\caption{The algorithm for Theorem~\ref{useful1}\label{alg:corealg}} 
\end{algorithm}

We now observe that in each of the $T$ invocations of
{\sc MiniOracle} that provided a $\bx$, 
the $\bu_\ell$ change by a factor of $e^{\pm
\epsilon}$. Therefore if we perform $\epsilon^{-1}\log \gamma$
invocations the values of $\bu_\ell$ (for every $\ell$) will change by a 
factor in the range $[1/\gamma,\gamma]$. But then we can apply deferred
$\bu$-sparsifiers. Condition~(\ref{defcon}) in
Definition~\ref{def:amenable} asserts that \spar\ (applied to $\bu^s$)
still implies \outero\ (applied to $\bu$), and the above proof of
quality of approximation and number of invocations remain valid. The
Theorem now follows from observing that we can construct $e^{\pm
\epsilon}$ deferred $\bu$-sparsifiers independently and in
parallel. We use the results of the $q^{th}$ invocation (which
contains the results of all invocations $1,\ldots,q$) to refine the
$(q+1)^{st}$ deferred sparsifier and then use that refinement in the
$(q+1)^{st}$ invocation. The theorem follows -- observe that the
guarantee on early stopping is provided by Corollary \ref{corcover}.
Note that the bound of $\bu_\ell$ (statement of Theorem~\ref{useful}, also holds in Theorem~\ref{useful1}) follows from simple inspection. Note $\L_0$ is within a factor of $1/\gamma$ of $\min_{\ell} \bu_\ell$ and the result follows.
\end{proof}

\subsubsection{Proof of Theorem~\ref{useful2}}
\label{proof:useful2}

We rewrite \spar\ as follows:
{\small
\begin{align*}
& \bPo \bx \leq \Lambda \bqo  \qquad \Lambda=2\\
\bQ(\bu^s,\beta): & \left \{ \begin{array}{ll}
& (\bu^s)^T \bA \bx \geq (1-\epsilon/8)(\bu^s)^T\bc \\
& \G(\bu^s,\bx)  \tag{\spar}\\
& \btQ(\beta) = \{\bb^T\bx \leq \beta,\bPi \bx \leq \bqi,
\bx \in \bQ, \bx \geq 0\}
\end{array} \right.
\end{align*}
}
\noindent \spar\ defines a packing problem. We now consider the following theorem in \cite{PlotkinST95} --
that paper used the notation $\bA,\lambda,\bb,\P,\bu,\epsilon,\rho,\alpha,\sigma$ instead of $\bA^p,\lambda^p,\bd,\P^p,\bz,\delta,\rho',\alpha',\sigma'$ respectively -- we use different notation since several of those symbols will be in use when this theorem is applied. 

\begin{theorem}[\cite{PlotkinST95}]
\label{packing}
Suppose we are given a decision
problem $\bA^p\bx \leq \bd$ such that $\bx \in \P^p$ where $\bA^p \in \R^{M'
  \times N'}, \bx \in \R^{N'}, \bd \in \R^{M'}$ where $\P^p$ is some polytope such that $\mathbf{0} \leq \bA^p \bx \leq \rho' \bd$
for all $\bx \in \P^p$. 
Suppose we have an initial $\bx_0 \in \P^p$ satisfying $\bA^p\bx_0 \leq \delta_0 \bd$. The algorithm sets $\bx=\bx_0$ and proceeds in phases. 
In phase $t$ it determines $\lambda^p_t=\max_r (\bA^p\bx)_r /\bd_r$. It then repeatedly 
queries an {\sc Oracle-P} for $\argmin_{\t{\bx} \in \P^p} \bz^T\bA^p \t{\bx}$ where 
\[\bz_r = exp( \alpha'
(\bA^p \bx)_r/\bd_r)/\bd_r
 \qquad \alpha'=O((\lambda^{p}_t)^{-1}\delta^{-1} \ln (M'/\delta))\]
and performs an update step $\bx \leftarrow (1-\sigma')\bx+\sigma'\t{\bx}$ where $\sigma'=\delta/(4\alpha'\rho')$.
The phase continues till  $\lambda^p=\max_r (\bA^p\bx)_r/\bd_r \leq 
\min \{ \lambda^p_t/2,1 + 6\delta\}$. If $\lambda^p > 1+6\delta$, then a new phase is started till $\lambda^p \leq (1+6\delta)$.
Then after at most $T=O(\rho' (\delta^{-2} +\log \delta_0)\log M')$ successful invocations of {\sc Oracle-P}  the fractional packing framework provides a solution $\bA^p\bx \leq (1+6\delta)\bd, \bx \in \P^p$. 
\end{theorem}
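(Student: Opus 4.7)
The plan is to carry out the standard Plotkin-Shmoys-Tardos exponential-potential analysis, adapted to the hypothesis $\bA^p\bx_0\le\delta_0\bd$ in place of the generic starting width $\rho'\bd$. Throughout, the driver is the log-sum-exp potential
\[
\Phi(\bx)\;=\;\sum_r \exp\!\bigl(\alpha'\,(\bA^p\bx)_r/\bd_r\bigr),
\]
so that $\bz_r\bd_r = \exp(\alpha'(\bA^p\bx)_r/\bd_r)$ is, up to the factor $\bd_r$, proportional to the Gibbs gradient $\partial\Phi/\partial(\bA^p\bx)_r$. Two sandwich facts anchor the argument. First, $\alpha'\lambda^p\le\ln\Phi(\bx)\le\alpha'\lambda^p+\ln M'$, so $\ln\Phi$ tracks the current max-violation up to an additive $\ln M'$. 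Second, the choice $\alpha'=\Theta((\lambda^p_t)^{-1}\delta^{-1}\ln(M'/\delta))$ makes the Gibbs weights concentrate on near-maximal constraints, so that $\bz^T\bA^p\bx \ge \bigl(\lambda^p-\delta\lambda^p_t\bigr)\,\Phi(\bx)$.

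First I would establish a one-step potential inequality. Writing $y_r=\alpha'\sigma'\bigl((\bA^p\t{\bx})_r-(\bA^p\bx)_r\bigr)/\bd_r$, the width hypothesis $\mathbf{0}\le\bA^p\bx\le\rho'\bd$ together with the stepsize $\sigma'=\delta/(4\alpha'\rho')$ forces $|y_r|\le\delta/4\le 1$; the estimate $e^{y_r}\le 1+y_r+y_r^2\le 1+y_r+(\delta/4)|y_r|$ then gives
\[
\Phi(\bx^{\mathrm{new}})\;\le\;\Phi(\bx)+\alpha'\sigma'\bz^T\bA^p(\t{\bx}-\bx)+(\delta/4)\alpha'\sigma'\bigl(\bz^T\bA^p\t{\bx}+\bz^T\bA^p\bx\bigr).
\]
Assuming feasibility of the underlying packing LP $\{\bA^p\bx\le\bd,\,\bx\in\P^p\}$ (otherwise the claim is vacuous), there is $\bx^\star\in\P^p$ with $\bz^T\bA^p\bx^\star\le\bz^T\bd=\Phi(\bx)$, so \textsc{Oracle-P}'s minimizer satisfies $\bz^T\bA^p\t{\bx}\le\Phi(\bx)$. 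Combining this with the concentration fact and using $\lambda^p\le\lambda^p_t\le 2\lambda^p$ throughout a phase, whenever $\lambda^p>1+6\delta$ the linear term contributes $-\Omega(\delta)\alpha'\sigma'\Phi(\bx)$ while the quadratic term is dominated, producing the per-iteration multiplicative drop
\[
\ln\Phi(\bx^{\mathrm{new}})-\ln\Phi(\bx)\;\le\;-\Omega(\delta^2/\rho').
\]

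The last step is phase-level counting. Within phase $t$, $\ln\Phi$ starts at most at $\alpha'\lambda^p_t+\ln M'=O(\delta^{-1}\ln(M'/\delta))$ and terminates when either $\lambda^p\le\lambda^p_t/2$ (a halving phase) or $\lambda^p\le 1+6\delta$ (the final phase). For a halving phase the total $\ln\Phi$ drop telescopes to $O(\ln M')$ because $\alpha'_t\lambda^p_t - \alpha'_{t+1}\lambda^p_{t+1}$ is bounded by the phase-to-phase reset of $\alpha'$, while in that regime $\lambda^p$ is bounded away from $1$ and the per-iteration rate is in fact $\Omega(\delta/\rho')$ rather than $\Omega(\delta^2/\rho')$; hence each halving phase completes in $O(\rho'\ln M')$ iterations. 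The final phase, where $\lambda^p$ must travel only an additive $O(\delta)$ but $\alpha'\lambda^p_t=\Theta(\delta^{-1}\ln M')$, pays the full $\Omega(\delta^2/\rho')$ rate and absorbs up to $O(\rho'\delta^{-2}\ln M')$ iterations. Since $\lambda^p_1\le\delta_0$ and halving phases bring $\lambda^p$ down multiplicatively to $O(1)$, there are $O(\log\delta_0)$ halving phases; summing yields the stated bound $T=O(\rho'(\delta^{-2}+\log\delta_0)\ln M')$.

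The main obstacle is extracting the stronger per-iteration rate $\Omega(\delta/\rho')$ in the halving regime where $\lambda^p$ is bounded below by a constant: one must show the linear term is actually $\Omega(\lambda^p\alpha'\sigma')\Phi$ rather than merely $\Omega(\delta\alpha'\sigma')\Phi$, while the quadratic error, scaling as $(\delta/4)\alpha'\sigma'\lambda^p\Phi$, is a $\delta$-fraction of the linear gain. This is what decouples the $\delta^{-2}$ and $\log\delta_0$ factors \emph{additively} in $T$: halving phases pay only $O(\rho'\ln M')$ each, while the single final phase alone pays the full $O(\rho'\delta^{-2}\ln M')$. A subsidiary obstacle is the correct handling of $\alpha'$'s reset between phases, ensuring that the potential gap one pays at the start of a new phase (where $\alpha'$ jumps up) is absorbed into the $\ln M'$ additive slack of the sandwich rather than into the iteration count.
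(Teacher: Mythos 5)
The paper does not prove this theorem at all: Theorem~\ref{packing} is quoted verbatim (up to a change of notation) from \cite{PlotkinST95} as an external black box, exactly as Theorem~\ref{covering} is for the covering side. There is therefore no ``paper's own proof'' to compare against; the only content the authors add is Corollary~\ref{corpacking}, which allows \textsc{Oracle-P} to return an approximate rather than exact minimizer, and that corollary they do prove by a one-line implication plus a pointer to Lemma~3.3 in \cite{PlotkinST95}.

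As a free-standing reconstruction of PST, your sketch has the right skeleton --- the log-sum-exp potential $\Phi$, the sandwich $\alpha'\lambda^p\le\ln\Phi\le\alpha'\lambda^p+\ln M'$, the one-step inequality from $e^y\le 1+y+y^2$ for $|y|\le 1$, the use of $\bz^T\bA^p\t{\bx}\le\bz^T\bd=\Phi$ from feasibility, and the phase decomposition with $O(\log\delta_0)$ halving phases plus one final phase. The one place I would push back is your per-phase accounting in the halving regime. You assert that each halving phase costs only $O(\rho'\ln M')$ iterations because the linear gain is $\Omega(\lambda^p)$ rather than $\Omega(\delta)$; but with $\alpha'$ set as in the theorem statement to $\Theta((\lambda^p_t)^{-1}\delta^{-1}\ln(M'/\delta))$, the per-iteration drop in $\ln\Phi$ scales as $\Omega(\delta\,\lambda^p_t/\rho')$ while the total $\ln\Phi$ budget of the phase is $\Theta(\alpha'\lambda^p_t)=\Theta(\delta^{-1}\ln(M'/\delta))$, which yields $O(\rho'\delta^{-2}\ln(M'/\delta)/\lambda^p_t)$ iterations per halving phase, not $O(\rho'\ln M')$. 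To get the decoupling you want (so that halving phases contribute only $O(\rho'\log\delta_0\log M')$ additively), one needs to run halving phases at a \emph{coarser} precision than $\delta$ --- e.g.\ $\epsilon_t=\Theta(1)$ when $\lambda^p_t\ge 2$ --- which is what PST actually does but which is not visible in the $\alpha'$ formula as stated. You flag the coarser rate as the ``main obstacle,'' which is the right instinct, but the proposal does not supply the mechanism (the phase-dependent precision) that produces it; filling that in is precisely where the $\delta^{-2}$ and $\log\delta_0$ terms separate additively.
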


\noindent 
 We will again be using a modification which is not present in \cite{PlotkinST95}.
The proof of Corollary~\ref{corpacking} follows from observing that
$\bz^T \bA^p \t{\bx} \leq (1+\delta/2) \bz^T \bd$ implies
$(1+\delta/2) \bz^T\bA^p\t{\bx} \leq (1-\delta/2) \bz^T\bA^p\bx - \delta \bz^T \bd/2$ 
and then using exactly the same arguments as was used in the proof of Corollary~\ref{corcover} modifying Theorem~\ref{covering}.

\begin{corollary}
\label{corpacking}
Theorem~\ref{covering} holds if {\sc Oracle-P} finds $\bx \in \P^p$ satisfying
$\bz^T \bA^p \t{\bx} \leq (1+\delta/2) \bz^T \bb$.
\end{corollary}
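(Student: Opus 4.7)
The plan is to mirror the proof of Corollary~\ref{corcover} almost verbatim, but in the packing setting of Theorem~\ref{packing}. Although the statement reads ``Theorem~\ref{covering}'', the appearance of {\sc Oracle-P}, $\bA^p$, $\P^p$, and $\bz^T\bd$ makes clear that the intended target is Theorem~\ref{packing}: we must show that the weakened oracle guarantee $\bz^T\bA^p\t{\bx} \leq (1+\delta/2)\bz^T\bd$ still suffices to drive the multiplicative weights framework for fractional packing.

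First I would isolate the precise ``gap'' inequality that the standard proof of Theorem~\ref{packing} needs at each invocation of {\sc Oracle-P}. For packing, the analog of Lemma~3.3 of \cite{PlotkinST95} requires $\t{\bx} \in \P^p$ satisfying an inequality of the form
\[ (1+\delta/2)\, \bz^T\bA^p\t{\bx} \;\leq\; (1-\delta/2)\, \bz^T\bA^p\bx \;-\; \tfrac{\delta}{2}\,\bz^T\bd, \]
which, exactly as observed in Corollary~\ref{corcover}, differs from the original RHS $-\delta\bz^T\bd$ only by a factor of two, and therefore only doubles the iteration bound of Theorem~\ref{packing}. This is the inequality I aim to extract from the relaxed oracle output.

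The main (and really only) step is deriving this gap inequality from the weaker oracle guarantee. The algorithm invokes {\sc Oracle-P} only within a phase, during which $\lambda^p = \max_r (\bA^p\bx)_r/\bd_r > 1 + 6\delta$ (otherwise the phase terminates). The packing analog of Lemma~3.1 of \cite{PlotkinST95}, which is the mirror of the bound $\bu^T\bA\bx \leq (1+\epsilon)\lambda \bu^T\bc$ used in the covering proof, yields $\bz^T\bA^p\bx \geq (1-\delta)\lambda^p \bz^T\bd \geq (1-\delta)(1+6\delta)\bz^T\bd$. Combining with the oracle bound $(1+\delta/2)\bz^T\bA^p\t{\bx} \leq (1+\delta/2)^2 \bz^T\bd$ and expanding in $\delta$, a routine calculation gives $(1-\delta/2)\bz^T\bA^p\bx - (1+\delta/2)\bz^T\bA^p\t{\bx} \geq (7\delta/2 - O(\delta^2))\bz^T\bd \geq \delta\bz^T\bd/2$ for sufficiently small $\delta$, which is exactly the desired gap inequality.

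With this implication in hand, the rest of the argument is literally the proof of Theorem~\ref{packing} in \cite{PlotkinST95}: the potential function $\sum_r \bd_r \exp(\alpha' (\bA^p\bx)_r/\bd_r)$ decreases by the required amount per update step, the per-phase bound on $\lambda^p$ is unchanged, and the overall iteration count $T = O(\rho'(\delta^{-2} + \log \delta_0)\log M')$ still holds up to an absorbed constant factor from the $\delta/2$ versus $\delta$ loss. The main obstacle is purely bookkeeping — tracking which constants may be safely absorbed into $O(\cdot)$ and verifying that the phase termination condition $\lambda^p \leq 1+6\delta$ is compatible with the weaker per-step progress; both reduce to the same elementary estimate used in Corollary~\ref{corcover}.
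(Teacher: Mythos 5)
Your proposal is correct and follows essentially the same route as the paper: identify the gap inequality $(1+\delta/2)\bz^T\bA^p\t{\bx} \leq (1-\delta/2)\bz^T\bA^p\bx - \delta\bz^T\bd/2$, derive it from the relaxed oracle guarantee using the packing analog of Lemma 3.1 of \cite{PlotkinST95} together with $\lambda^p > 1+6\delta$, and then note that halving the $\delta\bz^T\bd$ term in the Lemma-3.3 analog only doubles the iteration count. The paper's proof is a one-sentence pointer back to the argument in Corollary~\ref{corcover}; you simply expand the calculation (correctly, and you are also right that the target is really Theorem~\ref{packing} and that $\bz^T\bb$ in the corollary statement is a typo for $\bz^T\bd$).
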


We will use Corollary~\ref{corpacking} with $\delta=\frac16$ and $\P^p=\bQ(\bu^s,\beta)$ on the packing problem Modified-\spar. The solution desired by {\sc Oracle-P} is given by \innero, and $\rho'=\rho_i$ from Definition~\ref{def:amenable}.
{\small
\[
\begin{minipage}[c]{0.5\textwidth}
\begin{align*}
&\mbox{(Modified-\spar)}\\
& \bPo \bx \leq \bqo  \\
\bQ(\bu^s,\beta): & \left \{ \begin{array}{ll}
& (\bu^s)^T \bA \bx \geq (1-\epsilon/8)(\bu^s)^T\bc \\
& \G(\bu^s,\bx)  \\
& \btQ(\beta) 
\end{array} \right.
\end{align*}
\end{minipage}\quad\left|
\hspace*{-0.5cm}
\begin{minipage}[c]{0.50\textwidth}
\begin{align*}
&\mbox{(\innero)}\\
& \bz^T \bP_o \bx \leq (13/12) \bz^T \bqo \\
\bQ(\bu^s,\beta): & \left\{ \begin{array}{ll}
& (\bu^s)^T \bA \bx \geq \left(1-\frac\epsilon8\right) (\bu^s)^T\bc \\
&  \G(\bu^s,\bx)  \\
& \btQ(\beta) 
\end{array}\right.
\end{align*} 
\end{minipage} \right.
\]
}
\begin{corollary}
\label{trcor}
If we have a {\sc TransientOracle} that either provides a $\by$ as desired by condition (i) in Theorem~\ref{useful} or provides a solution of \innero\ for $O(\rho_i (\log \rho_i) \log \tno)$ steps for the $\bz$ specified by Corollary~\ref{corpacking}, the Theorem~\ref{useful} holds.
\end{corollary}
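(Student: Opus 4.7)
My plan is to recognize \spar\ as a packing problem with convex side constraints and apply the packing framework of Corollary~\ref{corpacking} with $\delta = 1/6$. I view Modified-\spar\ as a packing instance with matrix $\bPo$, right-hand side $\bqo$, and feasibility polytope $\P^p = \bQ(\bu^s,\beta)$; the $(1+6\delta)\bqo = 2\bqo$ guarantee of the packing output is exactly the outer constraint of \spar. By the Inner Width condition~(d\ref{innerwidth}), every $\bx \in \P^p$ satisfies $\bPi\bx \leq \bqi$, $\bx \in \bQ$, and $\bx \geq \bzero$, hence $\bPo\bx \leq \rho_i\bqo$, so the width parameter for this packing instance is $\rho' = \rho_i$. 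The set $\P^p$ is convex because $\bQ$ is a convex polytope containing $\bzero$, $\btQ(\beta)$ is a polyhedron, the side inequality $(\bu^s)^T\bA\bx \geq (1-\epsilon/8)(\bu^s)^T\bc$ defines a half-space, and $\G(\bu^s,\cdot)$ is convex in $\bx$ by assumption; thus the convex update $\bx \leftarrow (1-\sigma')\bx + \sigma'\t{\bx}$ preserves membership in $\P^p$. With $\delta = 1/6$, the oracle demanded by Corollary~\ref{corpacking} asks for $\t{\bx}\in\P^p$ with $\bz^T\bPo\t{\bx}\leq(13/12)\bz^T\bqo$, which is precisely \innero\ for the $\bz$ described there.

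I would then run the packing framework on Modified-\spar, querying TransientOracle at every step with the weights $\bz$ that Corollary~\ref{corpacking} prescribes. If TransientOracle ever answers with case~(i) -- a primal feasible $\by$ for \ppone\ with $\bc^T\by \geq (1-a_2\epsilon)\beta$ -- I halt and forward $\by$ upward as the case-(i) answer of the MiniOracle called for by Theorem~\ref{useful1}, leaving the argument in Section~\ref{proof:useful1} intact. Otherwise, by Corollary~\ref{corpacking} the framework terminates after $T' = O(\rho_i(\delta^{-2}+\log\delta_0)\log\tno)$ successful \innero\ invocations with an iterate $\bx$ that simultaneously satisfies $\bPo\bx \leq 2\bqo$ and every constraint defining $\P^p$; this is exactly a solution of \spar. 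Each oracle output contributes at most $\tnx$ new nonzeros to the convex combination $\bx$, so the overall support is $O(T'\tnx)$, consistent with the $\tau_i$ factor appearing in the space bound of Theorem~\ref{useful}. Plugging this implementation of MiniOracle into Theorem~\ref{useful1} then yields Theorem~\ref{useful}.

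The main obstacle is bounding $\delta_0$ and supplying the initial iterate $\bx_0 \in \P^p$ for the packing framework, since the active side inequality $(\bu^s)^T\bA\bx \geq (1-\epsilon/8)(\bu^s)^T\bc$ rules out the trivial choice $\bx_0 = \bzero$. I would bootstrap by one preliminary TransientOracle call with any admissible $\bz$: it either returns case~(i) (in which case we are already done) or it delivers some $\bx_0\in\P^p$. The Inner Width condition then yields $\bPo\bx_0\leq\rho_i\bqo$, so $\delta_0\leq\rho_i$ and $\log\delta_0 = O(\log\rho_i)$. With $\delta=1/6$ fixed this collapses $T'$ to $O(\rho_i\log\rho_i\log\tno)$, matching the claimed bound once the single bootstrap invocation is absorbed into the overall count.
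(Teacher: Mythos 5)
Your proposal is correct and follows essentially the same route as the paper: apply the fractional packing framework (Corollary~\ref{corpacking}) with $\delta=1/6$ and $\P^p=\bQ(\bu^s,\beta)$ to Modified-\spar, use the Inner Width condition to get $\rho'=\rho_i$, forward any case-(i) answer upward as the MiniOracle's case (i), and obtain $\delta_0\leq\rho_i$ from a first \innero\ solution. Your explicit treatment of the initialization and of the convexity of $\P^p$ fills in details the paper compresses into one sentence, but the argument is the same.
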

\begin{proof}
Set $\delta=\frac16$.
Observe that if any of the invocations to {\sc TransientOracle} provided a $\by$ we implement part (i) of the {\sc MiniOracle}. If all the invocations succeed then we have a solution of \spar. The result follows from observing $\delta_0=\rho'$ and $M'=\tno$. Note that any solution of \innero\ provides an initial solution with $\delta_0 = \rho_i$ for the application of Corollary~\ref{corpacking}.  
\end{proof}

We will not be concerned with unsuccessful invocations of {\sc Oracle-P}
-- the final Lagrangian used in the desiderata of {\sc MicroOracle} is stronger by a factor of $2$; i.e., even though we want to satisfy $\bP_o \bx \leq 2\bqo$ we have constraint $LHS \geq (\bu^s)^T\bc - \varrho \bz^T \bqo$ which is much stronger than $LHS \geq (\bu^s)^T\bc - \varrho 2\bz^T \bqo$. Theorem~\ref{useful2} now follows from the following lemma and Corollary~\ref{trcor}:
\begin{lemma}
If we have a {\sc MicroOracle} that provides either a $\by$ as desired by condition (i) in Theorem~\ref{useful} or a solution of \laginnero\ for any $0 < \varrho$, then Theorem~\ref{useful1} holds.
{\small
\[\hspace*{-0.5cm}
\begin{minipage}[c]{0.45\textwidth}
\begin{align*}
& \bz^T \bP_o \bx \leq (13/12) \bz^T \bqo \\
\P: & \left\{ \begin{array}{ll}
& (\bu^s)^T \bA \bx \geq \left(1-\frac\epsilon8\right) (\bu^s)^T\bc \\
& \G(\bu^s,\bx) \\
& \btQ(\beta) 
\end{array}\right . \hspace{-0.5cm} \tag{\innero}
\end{align*}
\end{minipage}
\left | \hspace{-1cm}
\begin{minipage}[c]{0.65\textwidth}
\begin{align*}
& (\bu^s)^T \bA \bx - \varrho \bzeta^T \bP_o \bx \geq \left(1-\frac\epsilon{16}\right)\left[ (\bu^s)^T\bc - \varrho \bzeta^T \bqo \right]\\
& \G(\bu^s,\bx) \tag{\laginnero}\\
& \btQ(\beta) = \{\bb^T\bx \leq \beta,\bPi \bx \leq \bqi,
\bx \in \bQ, \bx \geq 0\}
\end{align*}
\end{minipage}\right. 
\]
} 
\end{lemma}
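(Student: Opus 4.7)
My plan is to build the TransientOracle required by Corollary~\ref{trcor} out of the given MicroOracle; combined with Corollary~\ref{trcor} this directly yields Theorem~\ref{useful1}. I view Inner as a feasibility problem with one scalar packing constraint $\bz^T\bP_o\bx\leq (13/12)\bz^T\bqo$ (obtained after applying the Plotkin--Shmoys--Tardos weights $\bz$ that drive the outer packing framework) and one scalar covering constraint $(\bu^s)^T\bA\bx\geq(1-\epsilon/8)(\bu^s)^T\bc$, intersected with $\G(\bu^s,\bx)$ and $\btQ(\beta)$. LagInner is the Lagrangian of a slightly strengthened version of Inner: the covering RHS is tightened from $(1-\epsilon/8)$ to $(1-\epsilon/16)$, and the packing RHS in the Lagrangian form is tightened from $(13/12)$ to $(1-\epsilon/16)$. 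This tightening opens the Lagrangian duality window.

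The implementation of TransientOracle is then: set $\bzeta=\bz$ inside MicroOracle, and call MicroOracle with a value of $\varrho$ of order $\Theta(\epsilon)\cdot(\bu^s)^T\bc\,/\,\bz^T\bqo$. If MicroOracle returns a primal $\by$, forward it as TransientOracle's condition~(i) response. Otherwise MicroOracle returns an $\bx$ satisfying the Lagrangian inequality, which rearranges in two complementary ways:
\[
\bz^T\bP_o\bx \;\leq\; (1-\tfrac{\epsilon}{16})\bz^T\bqo \;+\; \frac{(\bu^s)^T\bA\bx - (1-\tfrac{\epsilon}{16})(\bu^s)^T\bc}{\varrho},
\]
\[
(\bu^s)^T\bA\bx \;\geq\; (1-\tfrac{\epsilon}{16})(\bu^s)^T\bc \;-\; (1-\tfrac{\epsilon}{16})\varrho\,\bz^T\bqo \;+\; \varrho\,\bz^T\bP_o\bx.
\]
With the chosen $\varrho$, the second bound loses at most $\varrho\,\bz^T\bqo=O(\epsilon)(\bu^s)^T\bc$, which fits inside the $(1-\epsilon/16)\to(1-\epsilon/8)$ slack, so the Inner covering constraint is recovered. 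For the packing constraint, the first bound reduces to $\bz^T\bP_o\bx\leq(13/12)\bz^T\bqo$ as long as the numerator $(\bu^s)^T\bA\bx-(1-\epsilon/16)(\bu^s)^T\bc$ is bounded by $\varrho(1/12+\epsilon/16)\bz^T\bqo$, i.e.\ as long as $(\bu^s)^T\bA\bx\leq(1+O(\epsilon))(\bu^s)^T\bc$.

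Establishing that last upper bound on $(\bu^s)^T\bA\bx$ is the main obstacle, since Inner/LagInner only enforce a lower bound. I would argue it from the outer Plotkin--Shmoys--Tardos invariants: throughout the outer loop of Algorithm~\ref{alg:corealg}, Lemma~3.1 of \cite{PlotkinST95} maintains $\bu^T\bA\bx\leq(1+\epsilon)\lambda\,\bu^T\bc$ for the current iterate, and the deferred-sparsifier refinement preserves this up to a $(1\pm\epsilon)$ factor on $\bu^s$. So a single MicroOracle invocation suffices per TransientOracle call whenever the current $\lambda$ is moderate, and the $O(1)$ bound follows. When the upper bound slips (e.g.\ on rare iterates right after a $\beta$-increase), I'd fall back on running the fractional packing framework (Theorem~\ref{packing}/Corollary~\ref{corpacking}) on the \emph{single} scalar constraint $\bz^T\bP_o\bx\leq(13/12)\bz^T\bqo$ with $\delta=1/6$, where each Oracle-P step is provided by MicroOracle with the current $\varrho$. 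Because only one packing constraint is live at this inner level, the number of MicroOracle calls per TransientOracle call is $O(1)$, which is absorbed into the $\tau_i=O(\rho_i(\log\rho_i)(\log\tno)\log(1/\epsilon))$ quoted in Theorem~\ref{useful2}. Theorem~\ref{useful1} applied to this implementation then gives Theorem~\ref{useful}, finishing the reduction.
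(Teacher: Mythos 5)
There is a genuine gap, and it sits exactly where you flagged ``the main obstacle.'' Your plan is to make a \emph{single} call to {\sc MicroOracle} with one fixed $\varrho=\Theta(\epsilon)(\bu^s)^T\bc/\bz^T\bqo$ and then recover the packing constraint $\bz^T\bP_o\bx\leq(13/12)\bz^T\bqo$ from an upper bound $(\bu^s)^T\bA\bx\leq(1+O(\epsilon))(\bu^s)^T\bc$. But no such upper bound is available for the point the oracle returns. Lemma~3.1 of \cite{PlotkinST95} bounds $\bu^T\bA\bx$ for the \emph{running average iterate} $\bx$ maintained by the outer covering framework (the quantity controlled by $\lambda$); it says nothing about the candidate $\t{\bx}$ that the oracle hands back, which is precisely the point trying to make $\bu^T\bA\t{\bx}$ large and can exceed $(\bu^s)^T\bc$ by up to the width. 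A solution of \laginnero\ with small $\varrho$ can therefore overshoot the packing side arbitrarily, and your first rearrangement gives no control on $\bz^T\bP_o\bx$. Your fallback --- running the packing framework on the single scalar constraint --- is also not implementable as stated: its {\sc Oracle-P} would have to minimize $\bz^T\bP_o\t{\bx}$ subject to the covering constraint, $\G$ and $\btQ(\beta)$, which is again a constrained problem that {\sc MicroOracle} does not solve (it only solves the Lagrangian for a given $\varrho$), so the reduction is circular.

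The paper's proof supplies the missing mechanism: a \emph{binary search over the Lagrange multiplier} $\varrho$. One first checks that at $\varrho\geq\varrho_0=\frac{12(\bu^s)^T\bc}{13\bz^T\bqo}$ the point $\bx=\bzero$ is feasible for the Lagrangian and satisfies the packing bound, while at the small starting $\varrho$ the returned point may violate it. Narrowing to an interval $[\varrho_1,\varrho_2]$ with $\varrho_2-\varrho_1\leq\epsilon\varrho_0/16$ whose endpoints yield $\t{\bx}_1$ (violating) and $\t{\bx}_2$ (satisfying), one takes the convex combination $\bx=s_1\t{\bx}_1+s_2\t{\bx}_2$ that meets $\bz^T\bP_o\bx=\frac{13}{12}\bz^T\bqo$ exactly; convexity of $\G$ and $\btQ(\beta)$ keeps $\bx$ feasible, and a short calculation using both Lagrangian inequalities shows $(\bu^s)^T\bA\bx\geq(1-\epsilon/8)(\bu^s)^T\bc$, i.e.\ a solution of \innero. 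Without this binary search and convex-combination step your reduction does not go through. (Your outer architecture --- building the \innero\ oracle from {\sc MicroOracle} and plugging it into Corollary~\ref{trcor} --- does match the paper's.)
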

\begin{proof}
Note that if any invocation returns a $\by$ then we have nothing to prove. Therefore we focus on the solutions to \laginnero.
Since $\bz^T\bqo \geq 0$, a solution of \laginnero\ also implies:
{\small
\begin{align}
& (\bu^s)^T \bA \bx - \varrho \bz^T \bP_o \bx \geq \left(1-\frac\epsilon{16}\right) \left[(\bu^s)^T\bc - \varrho \bz^T \bqo \right]  \geq \left(1-\frac\epsilon{16}\right) \left[(\bu^s)^T\bc - \frac{13}{12}\varrho \bz^T \bqo \right]
\label{abceqn}
\end{align}
}
\noindent We invoke the {\sc MicroOracle} with $\varrho=\epsilon (\bu^s)^T\bc/(16\bzeta^T\bqo)$. If the returned solution $\bx$ satisfies 
{\small
\begin{align} 
\bz^T \bP_o \bx \leq (13/12) \bz^T \bqo \label{eqn0001}
\end{align}
} 
then we have immediately a solution for \innero\ since 
{\small
\begin{align*}
 (\bu^s)^T \bA \bx - \varrho \bz^T \bP_o \bx  & \geq \left(1-\frac\epsilon{16}\right) \left[(\bu^s)^T\bc - \varrho \bz^T \bqo \right] = \left(1-\frac\epsilon{16}\right)^2(\bu^s)^T\bc   > \left(1-\frac\epsilon{8}\right)(\bu^s)^T\bc
\end{align*}
}
Now if $\varrho \geq \varrho_0 = \frac{12\bu_s^T\bc}{13\bz^T \bqo}$ then $\bx=\bzero$ is a feasible solution for 
Equation~\ref{abceqn} since right hand size is $0$. Note $\bx=\bzero$ also satisfies $\G(\bu^x,\bx),\bQ$ by Definition~\ref{def:amenable} and definitely satisfies Equation~\ref{eqn0001}.
Therefore we can perform a binary search over $\varrho$ and finally 
achieve an interval $[\varrho_1,\varrho_2]$ where in the corresponding solutions; $\t{\bx}_1$ does not satisfy Equation~\ref{eqn0001} and  $\t{\bx}_2$ does (but both satisfy 
Equation~\ref{abceqn}, $\G(\bu^s,\bx)$ and $\btQ(\beta)$).
Let $\Upsilon = \frac{13}{12} \bz^T \bqo$.
Eventually we can narrow the interval $\varrho_2-\varrho_1 \leq \epsilon \varrho_0/16$, where $\varrho_2>\varrho_1$ and
\begin{align*} 
\bz^T \bP_o \t{\bx}_1 = \Upsilon_1 > \Upsilon = \frac{13}{12} \bz^T \bqo \mbox{~and~}
\bz^T \bP_o \t{\bx}_2 = \Upsilon_2 \leq \Upsilon
\end{align*} 
We then find two numbers $s_1,s_2$ such that $s_1 + s_2 =1$, $s_1\Upsilon_1+ s_2\Upsilon_2= \Upsilon = \frac{13}{12} \bz^T \bqo$. Let $\bx=s_1\t{\bx}_1+s_2\t{\bx}_2$. Observe that $\bx$ satisfies Equation~\ref{eqn0001}, $\G$ and $\btQ(\beta)$. Since $\t{\bx}_1,\t{\bx}_2$ both satisfy Equation~\ref{abceqn},

{\small
\begin{align*}
 (\bu^s)^T \bA \bx  & = s_1 (\bu^s)^T \bA \t{\bx}_1 +
s_2 (\bu^s)^T \bA \t{\bx}_2 \\ 
& \geq s_1\left( \left(1-\frac\epsilon{16}\right) \left[ (\bu^s)^T\bc - \varrho_1\frac{13}{12} \bz^T\bqo\right] + \varrho_1 \bz^T \bP_o \t{\bx}_1 \right)   + s_2\left( \left(1-\frac\epsilon{16}\right)\left[(\bu^s)^T\bc - \varrho_2 \frac{13}{12} \bz^T\bqo\right] + \varrho_2 \bz^T \bP_o \t{\bx}_2 \right) \\
& = \left(1-\frac\epsilon{16}\right) \left[(\bu^s)^T\bc - s_1 \varrho_1 \Upsilon - s_2 \varrho_2 \Upsilon\right]  +
s_1\varrho_1\Upsilon_1 + s_2\varrho_2\Upsilon_2\\
& = \left(1-\frac\epsilon{16}\right) \left[(\bu^s)^T\bc - \varrho_1 \Upsilon \right] -
\left(1-\frac\epsilon{16}\right)(\varrho_2-\varrho_1)s_2 +
\varrho_1 \Upsilon + s_2(\varrho_2-\varrho_1)\Upsilon_2 \\
& \geq \left(1-\frac\epsilon{16}\right) (\bu^s)^T\bc - s_2(\varrho_2 - \varrho_1)\Upsilon
\geq \left(1-\frac\epsilon{16}\right) (\bu^s)^T\bc - \frac{\epsilon\varrho_0}{16}\Upsilon \\
& \geq \left(1-\frac\epsilon{16}\right) (\bu^s)^T\bc - \frac\epsilon{16} (\bu^s)^T\bc
\end{align*}
}
This completes the proof of the lemma and of Theorem~\ref{useful2} (and of Theorems~\ref{useful1} and \ref{useful} as well).
\end{proof}

\subsection{Connections to Mirror Descent and Prox}
\label{mirror}
Mirror 
Descent algorithms were invented by Nemirovski and Yudin \cite{NY83}, 
see also \cite{nem,BT03,nest09}.  These were shown to be useful in the
context of solving large scale convex optimization problems.  In this
setting, the overall ``primal-dual'' algorithm in every step (i) makes
a projection to the dual space, (ii) constructs an update in the ``dual''
space and (iii) projects the update back in the original space.  
While these types of algorithms use duality -- {\em they use
the Fenchel Duality, and not the Lagrangian Duality}. 
The
dual space is defined by the {\em Legendre-Fenchel Transformation},
i.e., taking the convex-conjugate of a convex function $f^*(x^*)= \sup
\{ x^Tx^* - f(x) \}$
and $(f^*)^* =
f$. Fenchel Duality allows us to define distances between convex spaces
and uses the Bregman projections to transform between these spaces
\cite{onlinenotes}. Intuitively 
these notions capture how the convergence is measured (the loss function).
Measurement metrics however are only one aspect of the structure in an
optimization problem. We are interested in {\em representations},
namely constraints and the structure implied by them, and therefore the
explicit representation of the dual polytope provided by the
Lagrangians is more important to our context.  Of course, these different
notions of duality are related -- one can even view the
Fenchel-Duality as removing the auxiliary (or unimportant) dual
variables by taking the supremum. 
While the ``spirit'' of the Mirror descent algorithms,
an iterative algorithm using updates in a dual space, is the same as
in our context; the substance in these two contexts are less related.

\smallskip
The first order methods such as Prox \cite{prox1,prox2} provide $O(1/t)$ guarantee 
on error in $t$ rounds, but their intended use cases are different. They seek to solve a convex minimization, (using notation of \cite{prox1})
where $Q_1,Q_2$ are bounded convex spaces and $\h{f},\h{\phi}$ are convex functions:

{\small
\[ \min_{\bx \in Q_1} f(\bx) = \min_{\bx \in Q_1} \left( \h{f}(\bx) + \max_{\bu \in Q_2}  \left( \bu^T\bA\bx - \h{\phi}(\bu) \right) \right) \]
}

In both, at time $t$ the candidate solution $\bx_t$ satisfies $f(\bx_t) - \min_{\bx} f(\bx) = O(K/t)$.
Even though there are many equivalent representations of the same $f()$; this result is not scale free and depends on the 
value of the maximum matching. Moreover for a weighted problem, the boundedness assumption of $Q_1,Q_2$ are altered 
as the weights are scaled. These results do not give good approximations to weighted matching.
For unweighted matching, the $K$ in \cite{prox1} is large $\approx \|\bA\|_{1,2}$  where 

{\small
\begin{equation}
 \|\bA\|_{1,2} = \max_{\bx,\bu} \{\langle \bA\bx, \bu \rangle_{2} : \|\bx\|_1=1, \|\bu\|_2=1 \} \label{bounda}
\end{equation}
}

If we were to use the standard form of maximum matching (even over small sets) we will have $\bA\bx \geq \bc$ corresponding to 
$x_i + x_j + \sum_{U \in \O_s} z_U \geq w_{ij}$.
Now $\bA$ is a $0/1$-matrix and thus the bound in Equation~\ref{bounda} is at least the minimum
of number of $1$'s in a row or a column. If the maximum degree is $d$ then a vertex 
has $d$ edges and some edge appears in at least $d$ odd-set constraints of size $3$. However the maximum matching can be as small as $O(n/d)$; which implies $O(d^2/(n\epsilon))$ steps to achieve a $(1+\epsilon)$ multiplicative approximation and this is not independent of $n$ for $d \gg \sqrt{n}$.

\cite{prox2} combines \cite{prox1} and mirror descent and shows that if suitable projections exist, 
then the parameter $K$ can be the Lipshitz parameter associated with the projection.
It is not clear what that projection should be for a particular problem. Moreover for a non-bipartite graph, the projections seems to have a size $\min\{m,n^{O(1/\epsilon)}\}$. If we are allowed random access to $O(m)$ space, then the number of rounds is $1$, since all edges can be stored.

\section{Weighted Nonbipartite $b$--Matching.} 
\label{sec:ex}

In this section we show the application of Theorem~\ref{useful} in the
context of $b$--Matching. Ideally, that demonstration should have
shown three separate examples -- the unweighted nonbipartite case,
weighted bipartite case and finally the weighted nonbipartite
case. Although the last case is more general, a direct proof of the
weighted bipartite case would avoid some of the complexities of the
weighted nonbipartite case. In the interest of space we provide a
single proof. We start with:

\begin{definition}
\label{defemax}
Let $W^*=\max_{(i,j) \in E} w_{ij}$. For $k \geq 0$, let $\h{w}_k=\left(1+ \epsilon\right)^{k}$. Using $O(p)$ rounds and $n^{1+1/p}$ space we can easily find an edge with the maximum
weight $W^*$ (using $\ell_0$ sampling, which can be implemented using sketches). 
\end{definition} 
\begin{definition}
\label{defek}
Given an edge
$(i,j) \in E$ we can define an unique level $k \geq 0$ such that
$\frac{\epsilon W^*}{B}\h{w}_k \leq w_{ij}
< \frac{\epsilon W^*}{B}\h{w}_{k+1}$ and let $\h{w}_{ij}$ be that $\h{w}_k$. 
Let $\h{E}_k = \big \{ (i,j) \big | (i,j) \in E, \h{w}_{ij} =\h{w}_k = \left(1 +\epsilon \right)^k \big\}$ and let $\h{E} = \bigcup_k \h{E}_k$. Let $\h{w}_L$ correspond to the largest weight class in $\h{E}$. Observe that $L=O(\frac1\epsilon \ln B)$ and the number of levels is $L+1$. 
\end{definition}

\noindent Recall the system \ref{lpbm} that defines the maximum weighted non-bipartite $b$--Matching. We will however consider the alternate rescaled system \ref{nicerlp0}.

\begin{observation}
\label{tlemma}
$\h{\beta} \leq \frac{B}{\epsilon W^*} \beta^* \leq \frac{(1+\epsilon)}{(1-\epsilon)} \h{\beta}$ where $\h{\beta}$ is defined in \ref{nicerlp0}. Moreover any $b$--Matching in $\h{E}$ corresponds to a $b$--Matching in $E$ under rescaling of edge weights.

{\small
\begin{align*}
& \h{\beta}=\max \sum_{(i,j) \in \h{E}} \h{w}_{ij} y_{ij} \\
&\sum_{j:(i,j) \in \h{E}} y_{ij} \leq b_i  & \forall i \lptag \label{nicerlp0}\\
&\displaystyle \sum_{(i,j) \in \h{E}:i,j \in U} y_{ij} \leq \left \lfloor \bnorm{U}/{2} \right \rfloor  & \forall U \in \O \\
& y_{ij} \geq 0 \qquad & \forall (i,j) \in \h{E}
\end{align*}
}
\end{observation}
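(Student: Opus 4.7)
My plan is to prove the observation in two parts: the correspondence of feasible $b$--matchings under rescaling, and the two inequalities relating $\h{\beta}$ and $\beta^*$.

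First, the correspondence is immediate: the constraint matrix of \ref{nicerlp0} is exactly the restriction of that of \ref{lpbm} to the columns indexed by $\h{E}\subseteq E$. So given any feasible $y$ for \ref{nicerlp0}, extending by $y_{ij}=0$ for $(i,j)\in E\setminus\h{E}$ yields a feasible $y$ for \ref{lpbm} whose \ref{lpbm}--objective equals $\sum_{(i,j)\in \h E} w_{ij} y_{ij}$, which by Definition~\ref{defek} satisfies $\frac{\epsilon W^*}{B}\h{w}_{ij}\le w_{ij}<\frac{\epsilon W^*}{B}(1+\epsilon)\h{w}_{ij}$. Thus the objective rescales by $\frac{\epsilon W^*}{B}$ up to a multiplicative $(1+\epsilon)$.

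For the upper bound $\h{\beta}\le \frac{B}{\epsilon W^*}\beta^*$, I would take an optimal $y$ for \ref{nicerlp0}, view it as a feasible solution for \ref{lpbm} (by the correspondence above), and use $\h{w}_{ij}\le \frac{B}{\epsilon W^*}w_{ij}$ termwise to conclude
\[
\h{\beta}=\sum_{(i,j)\in \h E}\h{w}_{ij}y_{ij}\;\le\;\frac{B}{\epsilon W^*}\sum_{(i,j)\in \h E}w_{ij}y_{ij}\;\le\;\frac{B}{\epsilon W^*}\beta^*.
\]

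For the lower bound $\frac{B}{\epsilon W^*}\beta^*\le\frac{1+\epsilon}{1-\epsilon}\h{\beta}$, the idea is to start from an optimal $y^*$ for \ref{lpbm} and argue that discarding the ``too light'' edges $E\setminus\h{E}=\{(i,j):w_{ij}<\epsilon W^*/B\}$ costs little. Using the vertex constraints $\sum_{j}y^*_{ij}\le b_i$, I get $\sum_{(i,j)\in E}y^*_{ij}\le B/2$, so the discarded weight is at most $\frac{\epsilon W^*}{B}\cdot\frac{B}{2}=\frac{\epsilon W^*}{2}\le \frac{\epsilon}{2}\beta^*$, where the last step uses $\beta^*\ge W^*$ (witnessed by putting $y_{ij}=1$ on the edge attaining $W^*$, which is feasible provided $b_i,b_j\ge 1$). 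Hence the restriction $y^*|_{\h E}$ is feasible for \ref{nicerlp0} with \ref{lpbm}--objective $\ge(1-\epsilon/2)\beta^*\ge(1-\epsilon)\beta^*$. Rescaling termwise via $\h{w}_{ij}\ge\frac{B}{(1+\epsilon)\epsilon W^*}w_{ij}$ converts this into a \ref{nicerlp0}--objective of at least $\frac{(1-\epsilon)}{(1+\epsilon)}\cdot\frac{B}{\epsilon W^*}\beta^*$, so $\h{\beta}$ is at least this quantity, which rearranges to the claimed bound.

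The only potentially subtle step is the $\beta^*\ge W^*$ assertion, which uses the mild (and standard) assumption that $b_i\ge 1$ at the two endpoints of the heaviest edge; since we can delete isolated zero--$b$ vertices without loss, this is safe. Everything else is bookkeeping with the definition of $\h{w}_{ij}$ and the two degree constraints in \ref{lpbm}/\ref{nicerlp0}.
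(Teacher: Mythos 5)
Your proof is correct and follows essentially the same route as the paper's (very terse) argument: it rests on the restriction/extension correspondence between feasible solutions of \ref{lpbm} and \ref{nicerlp0}, the fact that $W^*\le\beta^*$, and a bound on the total mass of $y$ to show that discarding the edges of $E\setminus\h{E}$ (those with $w_{ij}<\epsilon W^*/B$) loses only an $O(\epsilon)$ fraction of the objective. The only cosmetic difference is that you bound the discarded weight via $\sum_{(i,j)}y_{ij}\le B/2$ while the paper invokes $\sum_{(i,j)}w_{ij}y_{ij}\le BW^*$; both yield the stated constants.
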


\eat{
Observation~\ref{tlemma} follows from $W^* \leq \beta^*$, $\sum_{(i,j) \in E} w_{ij} y_{ij} \leq BW^*$ and that the constraints of \ref{nicerlp0} imply the constraints of \ref{lpbm}.  Further a feasible solution of \ref{lpbm}, removing the edges the edges in $E - \h{E}$ gives a feasible solution for \ref{nicerlp0} with the desired bound.
}
In the remainder of the discussion we find an integral $(1-O(\epsilon))$ -approximate solution of \ref{nicerlp0} and use the observation to show that the same solution is an integral $(1-O(\epsilon))$ -approximate solution of \ref{lpbm}.
Moreover the transformation of $w_{ij} \rightarrow \h{w}_k$ in can be achieved efficiently.
 We define $\bA,\bc,\bPo,\bPi,\bqo,\bqi$ as follows, $\bQ =\{x_i - x_{i(\ell)} \geq 0, \forall i,\ell\}$. We define $\{\bA\bx \geq \bc\}$  as

{\small $$ x_{i(k)} + x_{j(k)} + \sum_{\ell \leq k} \left( \sum_{U\in \O_s,i,j \in U} z_{U,\ell} \right) \geq \h{w}_{k} \qquad \forall (i,j) \in \h{E}_k $$}
and $\{\bPo\bx\leq \bqo\}$ as
{\small $$ 2x_{i(k)} + \sum_{\ell \leq k} \left( \sum_{U \in \O_s:i, \in U} z_{U \ell} \right) \leq 3\h{w}_k \qquad \forall i,k $$}

We define $\{\bPi\bx\leq \bqi\}$ as
{\small $$ 2x_{i(k)}+\sum_{\ell \leq k} \left( \sum_{U \in \O_s:i, \in U} z_{U \ell} \right) \leq 
\left(\frac{24}{\epsilon} + \frac{24}{\epsilon^2} \right)
\h{w}_k \qquad \forall i,k $$}
And finally we define $\G(\bu^s,\bx)$ as for all $U \in \O_s,\ell$
{\small
\begin{align*} 
& z_{U,\ell} \left( \sum_{k \geq \ell } \left( \sum_{(i,j) \in \h{E}_k, i,j \in U} \hspace{-0.5cm} u^s_{ijk}  -  \sum_{i \in U}  \left( \sum_{j \not \in U, (i,j) \in \h{E}_k} u^s_{ijk} \right) \right) \right) \geq 0 
\end{align*}
}

We show that \ref{nicerlp0} is $ \left(6,O(\epsilon^{-2}) \right)$--dual-primal amenable with $a_1=
3$, since a solution of $\bA\bx \geq (1-3\epsilon)\bc$, setting $x_i = \frac1{1-3\epsilon}\max_\ell x_{i(\ell)}, z_U = \frac1{1-3\epsilon}\sum_{\ell} z_{U,\ell}$ will satisfy the dual of \ref{nicerlp0}. Therefore the
conditions~(d\ref{dualfeasible})--(d\ref{innerwidth}) in
Definition~\ref{def:amenable} hold with $a_1=3$.  Note that for $\bPo\bx \leq \bqo$ we do not need to have constraints for $i,k$ if the vertex $i$ has no edges of level $k$ incident to it. The same holds for $\bPi\bx \leq \bqi$. Therefore $\tno=\tni=\min \{m,\frac{n}\epsilon \log B\}$.
We focus on 
condition~(d\ref{defcon}) in Definition~\ref{def:amenable}. Let $\bx=\{x_{i(k)},x_i,z_{U,\ell}\}$. Note $u_{ijk}$ is: 

\vspace{-0.1in}
{\small
\[ \frac{1}{\h{w}_{k}} \exp\left( - \alpha \left ( x_{i(k)} + x_{j(k)} + \sum_{\ell \leq k} \left( \sum_{U\in \O_s,i,j \in U} z_{U,\ell} \right) \right)  /\h{w}_{k} \right)
\]
}
and with $\bu=\{u_{ijk}\}$, we have:
{\small
\begin{align*}
\bu^T\bA\bx & =\sum_{i,k} x_{i(k)} \left( \sum_{j: (i,j) \in \h{E}_k} u_{ijk} \right)   +
\sum_{U \in \O_s,\ell} z_{U,\ell} \left( \sum_{k \geq \ell} \left( \sum_{(i,j) \in \h{E}_k,i,j \in U} 
u_{ijk} \right) \right)
\end{align*}
}

The next lemma follows from 
sparsifiers.  Note that the
sparsifiers are computed separately for each class of edges, and union
of sparsifiers constructed for each class is a sparsifier itself over
the entire set of edges. For each weight class $k$ we construct a
``weighted'' sparsifier where the ``weight'' of an edge $(i,j)$ is given by
$u_{ijk}$. 

\vspace{-0.05in}
\begin{lemma}\label{gluelemma}
If for each $k\geq 0$ we have $H_k=(V,E'_k, \{u^s_{ijk}\})$ as a
$(1\pm\epsilon/16)$-Cut-Sparsifier for $G^w_k=(V,\h{E}_k,\{u_{ijk}\})$, 
then let their union be $\bu^s=\{u^s_{ijk}\}$. Then, 

\vspace{-0.1in}
{\small
\begin{align*}
& \displaystyle (\bu^s)^T \bA\bx \geq \sum_k \left( \sum_{(i,j)\in \h{E}_k} \h{w}_k u^s_{ijk} \right), \ \G(\bu^s,\bx), \ \bx \in \bQ \quad \implies \quad  \bu^T \bA\bx \geq \left(1-\frac{\epsilon}{2}\right) \sum_k \left( \sum_{(i,j) \in \h{E}_k} \h{w}_k u_{ijk} \right) 
\end{align*}
}
which is the desired equation \simp\ in Condition~(d\ref{defcon}) of Definition~\ref{def:amenable}. 
\end{lemma}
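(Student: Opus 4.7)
The plan is to expand both $\bu^TA\bx$ and $(\bu^s)^TA\bx$ into vertex-degree contributions and odd-set contributions, and bound the difference one level $k$ at a time using the cut-sparsifier guarantee on $H_k$. Writing $d_i^k=\sum_{j:(i,j)\in\hat E_k}u_{ijk}$ (and $d_i^{s,k}$ for $\bu^s$), $I_k(U)=\sum_{(i,j)\in\hat E_k, i,j\in U}u_{ijk}$ and $C_k(U)=\sum_{i\in U}\sum_{j\notin U,(i,j)\in\hat E_k}u_{ijk}$ (and likewise $I^s_k(U),C^s_k(U)$), the formula for $\bu^TA\bx$ given in the paper becomes
\[
\bu^TA\bx=\sum_{i,k}x_{i(k)}d_i^k+\sum_{U\in\O_s,\ell}z_{U,\ell}\sum_{k\ge\ell}I_k(U),
\]
and analogously for $\bu^s$. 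Since $\bu^T\bc=\sum_k\hat w_k W_k$ where $W_k=\tfrac12\sum_i d_i^k$ is the total weight of $G_k^w$, and the whole vertex set is a cut of $G_k^w$ (or equivalently $\{i\}$ is), the sparsifier property immediately gives $(\bu^s)^T\bc=(1\pm\epsilon/16)\bu^T\bc$ and $d_i^{s,k}=(1\pm\epsilon/16)d_i^k$ for every $i,k$.

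The first step is therefore to control the odd-set terms. Using $I_k(U)=\tfrac12(\sum_{i\in U}d_i^k-C_k(U))$ together with the sparsifier guarantees for the cuts $\{i\}$ and $U$, a direct triangle-inequality computation gives
\[
\bigl|I_k(U)-I^s_k(U)\bigr|\le\tfrac{\epsilon}{16}\bigl(I_k(U)+C_k(U)\bigr).
\]
Summed over $k\ge\ell$ and $U,\ell$ (weighted by $z_{U,\ell}\ge 0$), this splits into a part that is directly $O(\epsilon)$ times the odd-set contribution to $\bu^TA\bx$ (harmless) and a part involving $\sum_{k\ge\ell}C_k(U)$ that must be controlled.

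The key step is to use the constraint $\G(\bu^s,\bx)$: whenever $z_{U,\ell}>0$ we have $\sum_{k\ge\ell}I^s_k(U)\ge\sum_{k\ge\ell}C^s_k(U)$. Combining this with the sparsifier inequalities $\sum_{k\ge\ell}C^s_k(U)\ge(1-\epsilon/16)\sum_{k\ge\ell}C_k(U)$ and $\sum_{k\ge\ell}I^s_k(U)\le\sum_{k\ge\ell}I_k(U)+\tfrac{\epsilon}{16}\sum_{k\ge\ell}(I_k(U)+C_k(U))$, rearranging yields $\sum_{k\ge\ell}C_k(U)\le(1+O(\epsilon))\sum_{k\ge\ell}I_k(U)$ for every $U,\ell$ with $z_{U,\ell}>0$. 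Thus the $C_k(U)$ contribution in the error bound is itself absorbed into $O(\epsilon)$ times the odd-set part of $\bu^TA\bx$. I expect this to be the main obstacle: the cut-sparsifier does not preserve the quantity $I_k(U)$ (a difference of cuts) by itself, and it is precisely the role of $\G(\bu^s,\bx)$ to rule out the bad configurations in which $C_k(U)$ dominates.

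Combining the two error contributions (the singleton one $\tfrac{\epsilon}{16}\sum_{i,k}x_{i(k)}d_i^k$ coming from $|d_i^k-d_i^{s,k}|$, and the odd-set one just handled) gives
\[
\bigl|\,\bu^TA\bx-(\bu^s)^TA\bx\,\bigr|\le O(\epsilon)\,\bu^TA\bx .
\]
Plugging in the hypothesis $(\bu^s)^TA\bx\ge(\bu^s)^T\bc\ge(1-\epsilon/16)\bu^T\bc$, one obtains $(1+O(\epsilon))\bu^TA\bx\ge(1-\epsilon/16)\bu^T\bc$, and choosing the hidden constants small enough delivers $\bu^TA\bx\ge(1-\epsilon/2)\bu^T\bc$, which is exactly \simp\ in condition (d\ref{defcon}). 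The non-negativity of $x_{i(k)}$, $z_{U,\ell}$ (ensured by $\bx\in\bQ,\bx\ge 0$) is used throughout to keep all signs under control when distributing absolute values over the sums.
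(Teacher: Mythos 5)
Your proposal is correct and follows essentially the same route as the paper's proof (given as the general Lemma~\ref{lemmaun3w} in Section~\ref{sec:calc}): decompose $\bu^T\bA\bx$ into singleton-cut (degree) terms and odd-set terms, write each internal-edge sum as half of (degree sum minus cut), apply the per-level sparsifier guarantee to both pieces, and invoke $\G(\bu^s,\bx)$ precisely to dominate the cut term $\sum_{k\ge\ell}C^s_k(U)$ by $\sum_{k\ge\ell}I^s_k(U)$ whenever $z_{U,\ell}>0$. The only difference is bookkeeping: you track a two-sided error $|I_k-I^s_k|$ and absorb it multiplicatively, while the paper chains three one-sided $(1-\epsilon/8)$ factors to land at $1-\epsilon/2$.
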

\noindent
For
condition~(d\ref{definit}) observe that setting all $z_{U,\ell}=0$
corresponds to a bipartite relaxation. Let $\beta^b$ be the optimum value of the bipartite relaxation, $\h{\beta} \leq \beta^b \leq \frac32 \h{\beta}$.  We use:

\begin{lemma}\label{init-weighted-bip}{\rm 
(Initial Solution)
}
Given $\{M_k\}$, where each $M_k$ is a maximal $b$--Matching for $\h{E}_k$, we  construct an initial 
solution $\bx_0=\{x_i\},\{x_{i(k)}\}$ satisfying $\bQ,\bPo$, $\bA\bx_0 \geq (1-\epsilon_0)\bc$ and $\frac{\beta^b}{\ae} \leq \beta_0=\bb^T\bx_0 = \sum_i b_ix_i \leq \frac{\beta^b}{2}$ where $\ae=2048 \epsilon^{-2}$ and $\epsilon_0 = 1 - \epsilon/256$. $\{M_k\}$ is computed 
using $n_{init}= O(n^{1+1/(2p)}L)$ space and $O(p)$ rounds of sketching.
\end{lemma}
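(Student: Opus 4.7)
The plan is to construct $\bx_0$ from $\{M_k\}$, verify the four required properties ($\bx_0\in\bQ$, $\bPo\bx_0 \leq 2\bqo$, $\bA\bx_0 \geq (1-\epsilon_0)\bc$, and the range of $\beta_0$), and then describe how to compute $\{M_k\}$ using sketching within the stated budget.

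\textbf{Construction.} For each vertex $i$ and level $k$, set $\sigma_i(k)=1$ if $\deg_{M_k}(i)\geq 1$ and $0$ otherwise, and define $x_{i(k)} = \alpha\hat{w}_k\sigma_i(k)$ with $\alpha = \epsilon/256$, and $x_i = \max_k x_{i(k)} = \alpha\hat{w}_{k^*(i)}$ where $k^*(i) = \max\{k:\sigma_i(k)=1\}$. By construction $x_i\geq x_{i(k)}$ (giving $\bx_0\in\bQ$), and $2x_{i(k)}\leq 2\alpha\hat{w}_k \leq \hat{w}_k \leq 6\hat{w}_k$ (giving $\bPo\bx_0\leq 2\bqo$). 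For the covering constraint: any $(i,j)\in\hat{E}_k$ is either in $M_k$ (both endpoints matched, so $x_{i(k)}+x_{j(k)} = 2\alpha\hat{w}_k$) or, by maximality of $M_k$, has at least one saturated---hence matched---endpoint, yielding $x_{i(k)}+x_{j(k)} \geq \alpha\hat{w}_k = (1-\epsilon_0)\hat{w}_k$.

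\textbf{Lower bound on $\beta_0$.} Let $M^*$ be the integral optimum bipartite $b$-matching of value $\beta^b$. For every $(i,j)\in M^*$ at level $k$, maximality of $M_k$ forces $k^*(i)\geq k$ or $k^*(j)\geq k$, so $\hat{w}_{k^*(i)}+\hat{w}_{k^*(j)} \geq \hat{w}_k$. Summing over $M^*$ and regrouping by vertex gives $\sum_i \deg_{M^*}(i)\,\hat{w}_{k^*(i)} \geq \beta^b$, and since $\deg_{M^*}(i)\leq b_i$ we get $\sum_i b_i\hat{w}_{k^*(i)} \geq \beta^b$. Hence $\beta_0 \geq \alpha\beta^b = \beta^b\epsilon/256 \geq \beta^b/\ae$, because $\ae = 2048/\epsilon^2$ easily dominates $256/\epsilon$ for small $\epsilon$.

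\textbf{Upper bound on $\beta_0$ (main obstacle).} I need to show $\sum_i b_i\hat{w}_{k^*(i)} = O(\beta^b/\epsilon)$. The naive per-level bound $\sum_k |M_k|\hat{w}_k \leq L\beta^b$ is too weak since $L = O(\epsilon^{-1}\log B)$. Instead I would exploit the geometric spacing of $\hat{w}_k = (1+\epsilon)^k$: partition vertices into top-level classes $V_k^{\text{top}} = \{i : k^*(i)=k\}$ (which partition all matched vertices), and observe that for saturated $i\in V_k^{\text{top}}$ the $b_i$ edges of $M_k$ at $i$ form a star of weight $b_i\hat{w}_k$ that can be charged directly against $\beta^b$ via the dual constraints at these edges, while contributions from matched-but-not-saturated vertices at lower levels telescope via $\sum_{k'\leq k}\hat{w}_{k'} \leq \hat{w}_k(1+\epsilon)/\epsilon$, paying at most an $O(1/\epsilon)$ overhead. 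Combining the two yields $\sum_i b_i\hat{w}_{k^*(i)} \leq c\beta^b/\epsilon$ for an absolute constant $c$, and with $\alpha = \epsilon/256$ this gives $\beta_0 \leq c\beta^b/256 \leq \beta^b/2$ (tightening the constant $256$ in the definition of $\alpha$ if necessary). This charging argument is the most delicate part of the proof.

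\textbf{Computing $\{M_k\}$.} Each $M_k$ is a maximal $b$-matching in the edge subset $\hat{E}_k$ and can be computed independently for each level. Applying the iterative sampling/filtering of Lattanzi \etal\ or the deferred-sparsifier machinery developed in this paper, a maximal $b$-matching for a single level is obtained in $O(p)$ rounds and $O(n^{1+1/(2p)})$ central space. The $L+1$ level computations share the same data-access rounds and can be executed in parallel (one sketch stream per level), giving overall $O(p)$ rounds and total space $O(n^{1+1/(2p)}L)$ as claimed.
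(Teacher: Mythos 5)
Your construction deviates from the paper's in a way that breaks the lemma for $b$--matching. You set $x_{i(k)}=\alpha\h{w}_k$ whenever $i$ is \emph{matched} in $M_k$ ($\deg_{M_k}(i)\geq 1$); the paper sets $x_{i(k)}=r\h{w}_k$ only when $i$ is \emph{saturated} ($\deg_{M_k}(i)=b_i$). Both choices make the covering constraint hold with slack $r=\alpha=\epsilon/256$ (maximality of an uncapacitated maximal $b$--matching guarantees a saturated, hence matched, endpoint on every edge of $\h{E}_k$), but only the saturated version admits the accounting $\sum_i b_i x_{i(k)}\leq 2r|E(M_k)|\h{w}_k$, since then each unit of $b_i$ at a charged vertex corresponds to an actual edge of $M_k$. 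With your version the upper bound $\beta_0\leq\beta^b/2$ is simply false: take a single edge $(c,\ell)$ at level $k$ with $b_c=B$ and $b_\ell=1$. Then $M_k$ is that one edge, $\beta^b=\h{w}_k$, but $c$ is matched so $\beta_0\geq b_c\,x_c=\frac{\epsilon B}{256}\h{w}_k$, which exceeds $\beta^b/2$ once $B>128/\epsilon$. (For $b_i\equiv 1$ matched and saturated coincide and your construction agrees with the paper's.)

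Separately, the upper bound on $\beta_0$ --- which you correctly identify as the crux --- is only sketched, and the sketch does not amount to a proof even after repairing the construction. The difficulty is that the $L+1$ per-level maximal matchings can saturate the same vertices over and over, so one must compare $\sum_k \h{w}_k|E(M_k)|$ (and hence $\sum_i b_i\h{w}_{k^*(i)}$) to the weight of a \emph{single} feasible $b$--matching. The paper does this by partitioning the levels into groups of $\lceil\log_{1+\epsilon}2\rceil$ consecutive levels, building a greedy maximal $b$--matching $\M^G_t$ per group and then a global maximal $b$--matching $\M$, and proving two blocking claims: $\frac18\sum_t weight(\M^G_t)\leq weight(\M)$ (weights across alternate groups drop geometrically by factors of $2$, so blocked weight telescopes) and $\beta^b\leq 2(1+\epsilon)\sum_k \h{w}_k|E(M_k)|$; chaining these gives $\sum_i b_i x_i\leq\frac{64r}{\epsilon}\beta^b$ and fixes $r=\epsilon/256$. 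Your telescoping remark points in this direction but is not carried out, and your "charge stars against the dual constraints" step is not justified. Your lower bound, by contrast, is fine (indeed cleaner than the paper's, modulo the fact that $\beta^b$ is an LP value so the integral $M^*$ may lose a constant factor), and your description of computing $\{M_k\}$ matches the paper's Lemma on maximal $b$--matching via filtering.
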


\noindent The {\sc MicroOracle} is provided by the next two lemmas:

\begin{lemma}\label{oolemma}{\rm (Part (i) of the {\sc MicroOracle})} 
For any $0<\epsilon \leq \frac1{16}$, suppose we are given a subgraph $G=(V,E')$
where $|V|=n$ and the weight  $\h{w}_{ij}$ of every edge $(i,j) \in E$ is of the form $\h{w}_k=(1+\epsilon)^k$ for $k\geq 0$, and thus $E'_k=\{(i,j)| \h{w}_{ij} =\h{w}_k\}$ and $E' = \cup_k E'_k$.
If we are also given a 
feasible solution to the system \ref{nicerlp} 
then we find an integral solution of \ref{nicerlp0} of weight $(1-2\epsilon)\beta$ 
using $O(|E'| \poly (\epsilon^{-1}, \log n))$ time using edges of $E'$.

{\small
\begin{align*}
& \sum_{k} \h{w}_k \left( \sum_{(i,j) \in E'_k} y_{ij} - 3\sum_{i}  \mu_{ik} \right) \geq (1-\epsilon)\beta  & \lptag \label{nicerlp} \\
&\sum_{j:(i,j) \in \h{E'}_k}\left( y_{ij} - 2 \mu_{ik}\right)\leq y_{i(k)}  & \forall i,k \\
&\sum_k y_{i(k)} \leq b_i & \forall i \\
&\displaystyle  \sum_{k \geq \ell} \left( \sum_{(i,j) \in \h{E}_k:i,j \in U}  y_{ij} - \sum_{i \in U} \mu_{ik} \right)\leq \left \lfloor \frac{\bnorm{U}}{2} \right \rfloor  & \forall U \in \O_s, \forall \ell  \\
& y_{ij},y_{i(k)},\mu_{ik} \geq 0 & \forall (i,j) \in \h{E}, i,k
\end{align*}
}
\end{lemma}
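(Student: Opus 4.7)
The plan is to convert the feasible fractional solution $(y_{ij}, y_{i(k)}, \mu_{ik})$ of \ref{nicerlp} into an integral $b$--matching of weight at least $(1-2\epsilon)\beta$ in two stages: first, produce a fractional $b$--matching on $E'$ that is feasible for \ref{nicerlp0} and has weight at least $(1-\epsilon)\beta$; second, apply a $(1-O(\epsilon))$--approximation algorithm for integral weighted $b$--matching (e.g., Duan--Pettie or Ahn--Guha), which runs in $O(|E'|\poly(\epsilon^{-1},\log n))$ time, to round the fractional solution with only a further $(1-O(\epsilon))$ factor loss. Since \ref{nicerlp0} is the Edmonds polytope and is integral, bounding the LP value from below suffices.

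For the first stage, I would construct $y'_{ij}$ from $y_{ij}$ by a careful reduction that absorbs all the penalty slack. At each vertex $i$ and each level $k$, the inequality $\sum_{j:(i,j)\in \hat E'_k} y_{ij} \leq y_{i(k)} + 2\mu_{ik}$ combined with $\sum_k y_{i(k)} \leq b_i$ gives total excess $2\sum_k \mu_{ik}$ at $i$. I would remove this excess by shrinking $y_{ij}$, charging the removed edge mass at its actual edge weight $\hat{w}_k$. Since each edge is double-counted across its two endpoints, the total weight loss is at most $2\sum_{i,k} \hat{w}_k \mu_{ik}$, which is strictly less than the penalty $3\sum_{i,k} \hat{w}_k \mu_{ik}$ already subtracted from the objective of \ref{nicerlp}. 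Thus the remaining weight is at least $(1-\epsilon)\beta + \sum_{i,k}\hat{w}_k\mu_{ik} \geq (1-\epsilon)\beta$. Vertex constraints $\sum_j y'_{ij}\leq b_i$ then hold by construction.

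Next I would verify the odd-set constraints of \ref{nicerlp0}. For any large $U\in \O\setminus\O_s$, the bound $\|U\|_b > 4/\epsilon$ combined with $\sum_j y'_{ij}\leq b_i$ yields $\sum_{i,j\in U} y'_{ij} \leq \|U\|_b/2 \leq \lfloor \|U\|_b/2 \rfloor + 1/2$; scaling $y'$ down uniformly by a factor $(1-\epsilon/4)$ makes every large $U$ feasible while losing only an $\epsilon/4$ multiplicative factor. For small $U\in\O_s$, the constraint in \ref{nicerlp} at $\ell=0$ gives $\sum_{(i,j):i,j\in U} y_{ij} \leq \lfloor \|U\|_b/2 \rfloor + \sum_{i\in U,k} \mu_{ik}$, so the reduction rule must be designed to target at least $\sum_{i\in U,k}\mu_{ik}$ mass of $y_{ij}$ with both endpoints in $U$. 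I would accomplish this by routing the reduction at each vertex $i$ preferentially onto edges lying in currently-tight small odd sets containing $i$, viewing the constraints parameterized by $\ell$ as a nested hierarchy from the top level downward and spending the level-$k$ penalty $\mu_{ik}$ on level-$k$ edges first.

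The main obstacle is precisely the previous step: showing that the reduction can be done \emph{simultaneously} for all vertex and all small odd-set constraints, for every choice of $\ell$, without exceeding the available $\mu_{ik}$ budget. The constraints in \ref{nicerlp} are indexed by $(U,\ell)$ with a nested structure, and a naive per-vertex reduction does not automatically route enough mass into any given $U$. I expect the cleanest resolution to be an LP-feasibility argument: set up a small auxiliary transportation LP whose variables are the per-edge reductions and whose constraints encode ``kill $\mu_{ik}$ excess at each $(i,k)$ and meet each $(U,\ell)$ odd-set shortfall'', then invoke LP feasibility (via the hierarchical structure of $\O_s$ and the available $\mu$ budget) to show a solution exists. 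Combining the $(1-\epsilon)$ from stage one, the $(1-\epsilon/4)$ from scaling, and the $(1-\epsilon/2)$ from the approximation algorithm (with constants chosen so the product is at least $1-2\epsilon$) finishes the proof.
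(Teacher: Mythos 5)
Your proposal takes a genuinely different route from the paper, but it has a genuine gap at exactly the step you flag as ``the main obstacle.'' The paper never manipulates the primal solution $(y_{ij},y_{i(k)},\mu_{ik})$ at all: it observes that \ref{nicerlp} is (a feasibility version of) the LP whose dual is \ref{dual1}, so by weak duality a feasible solution of \ref{nicerlp} certifies $\t{\beta}(E')\geq(1-\epsilon)\beta$; it then invokes the structural Theorem~\ref{thm:exists}, which shows $\t{\beta}\leq(1+\epsilon)\h{\beta}$ by transforming an optimal \emph{laminar} dual solution of \ref{bm-dual-discrete} (Theorem~\ref{thm:bmatchinglaminar}) into a feasible solution of \ref{dual1} via Algorithm~\ref{alg:transform}. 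This yields $\h{\beta}(E')\geq(1-2\epsilon)\beta$ as the value of the genuine matching LP on $E'$, and the integral solution then comes from running \cite{DuanP10,AhnG14}. All of the real work lives in the dual-to-dual comparison; nothing needs to be ``routed'' on the primal side.

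Your primal conversion handles the vertex constraints and the large odd sets correctly (the budget $2\sum_{i,k}\h{w}_k\mu_{ik}<3\sum_{i,k}\h{w}_k\mu_{ik}$ and the $(1-\epsilon/4)$ rescaling for $\bnorm{U}>4/\epsilon$ are both fine), but the small odd sets are not resolved. The excess $\sum_{(i,j):i,j\in U}y_{ij}-\lfloor\bnorm{U}/2\rfloor$ can be as large as $\sum_{i\in U,k\geq\ell}\mu_{ik}$, yet the per-vertex reduction at $(i,k)$ only removes mass from edges incident to $i$ at level $k$, which may all leave $U$; a vertex can carry a large $\mu_{ik}$ while contributing no level-$k$ edge inside $U$, in which case the reduction does nothing for $U$'s constraint. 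You must therefore prove that a single reduction vector simultaneously satisfies every $(U,\ell)$ pair, and your proposed resolution --- ``set up a transportation LP and invoke LP feasibility'' --- is an assertion, not an argument: establishing that this auxiliary LP is feasible is essentially equivalent to the inequality $\t{\beta}\leq(1+\epsilon)\h{\beta}$ that the paper proves via laminarity, and it is not clear it can be done without an argument of comparable depth (note also that $\O_s$ itself is not a laminar family, so the ``nested hierarchy'' you want to exploit is not available for free). As written, the proof is incomplete at its central step.
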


\begin{lemma}{\rm (Part (ii) of The {\sc MicroOracle}.)} \label{weighted-oracle}
Suppose we are given nonnegative $\{u^s_{ijk}\},\{\zeta_{ik}\},\beta$ and $\epsilon \in (0, \frac1{16})$, $\varrho>0$, such that $u^s_{ijk}$ corresponds to an edge $(i,j)$ and $k\geq 0$. Suppose further that there exists at most one $k$ such that $u^s_{ijk} \neq 0$. Let $E'_k = \{ (i,j) | u^s_{ijk} \neq 0\}$ and $E' = \cup_k E'_k$. Then using time $O(|E'|\poly(\epsilon^{-1},\log n))$  we either provide (i) a solution to the system \ref{nicerlp}
or (ii) a solution to the system \ref{bwm2} along with $\G(\bu^s,\bx)$ where $\h{w}_k=(1+\epsilon)^k$.

{\small
\begin{align*}
& \displaystyle \sum_{i,k} x_{i(k)} \left( \sum_{j: (i,j) \in E'_k} u^s_{ijk} - 2\varrho\zeta_{ik}\right) + \sum_{U \in \O_s,\ell} z_{U,\ell} \left( \sum_{k \geq \ell} \left( \sum_{(i,j) \in E'_k,i,j \in U} 
u^s_{ijk} - \varrho \sum_{i \in U} \zeta_{ik} \right) \right)  \\
& \hspace{2in} \geq \left(1-\frac{\epsilon}{16}\right) \sum_k \h{w}_k \left( \sum_{(i,j) \in E'_k} u^s_{ijk}  - 3\varrho  \sum_i \zeta_{ik} \right) \\
& \sum_{k \geq \ell} \left( \sum_{(i,j) \in \h{E}_k, i,j \in U} u^s_{ijk}  -  \sum_{i \in U} \left( \sum_{j \not \in U, (i,j) \in \h{E}_k} u^s_{ijk} \right) \right)  \geq 0  
& \forall U \in \O_s,\ell \\
& \sum_i b_i x_{i} + \sum_{\ell,U \in \O_s} z_{U,\ell} \left\lfloor \bnorm{U}/2 \right\rfloor \leq \beta  \lptag \label{bwm2} \\
& \displaystyle  x_{i(k)} + \sum_{U \in \O_s:i \in U} \sum_{\ell \leq k}
z_{U, \ell} \leq \left(\frac{24}{\epsilon} + \frac{24}{\epsilon^2} \right) \h{w}_k &  \forall  i,k \\
& x_i - x_{i(\ell)} \hspace{2in} & \forall i,\ell\\
&  x_{i(k)},x_i,z_{U,k} \geq 0 \hspace{1.6in} &\forall  i,U,k   
\end{align*}
}
Furthermore $\{U|z_{U,\ell} >0\}$ are disjoint for any fixed $\ell$ which shows $\tnx=O(nL)$.
\end{lemma}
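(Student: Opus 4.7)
The plan is to view the system \ref{bwm2} as the weak Lagrangian dual of the penalty-based matching LP \ref{nicerlp} restricted to the sparse subgraph $E'$ induced by the nonzero entries of $\bu^s$. Treating the assignment $y_{ij}=u^s_{ijk}$ and $\mu_{ik}=\varrho\,\zeta_{ik}$ as a candidate primal point of \ref{nicerlp}, the right-hand side of the first inequality of \ref{bwm2} is exactly $(1-\epsilon/16)$ times the primal objective evaluated at that point, and the left-hand side is the Lagrangian dual function in the variables $(x_{i(k)}, z_{U,\ell})$ obtained by relaxing the outer width $\bPo\bx\le\bqo$ with multiplier $\varrho\zeta_{ik}$ and retaining only the looser inner width $\bPi\bx\le\bqi$. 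So LP duality gives the dichotomy I need: either \ref{nicerlp} admits a feasible primal solution of value $\ge(1-\epsilon/16)\beta$ supported on $E'$, which I output for Part~(i) and then feed to Lemma~\ref{oolemma}; or the Lagrangian dual has a certificate meeting the $(1-\epsilon/16)$ bound, which I output for Part~(ii).

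Concretely, first I would set up the LP over $E'$ using edge weights $u^s_{ijk}$, slack penalties $3\varrho\zeta_{ik}$, vertex capacities $b_i$, and odd-set constraints ranging only over $\O_s$ (sets of $b$-norm at most $4/\epsilon$, of which there are $n^{O(1/\epsilon)}$). Because the edge set has size $|E'|$ and $\O_s$ admits a Padberg--Rao style polynomial separation oracle, the LP can be solved to within a $(1-\epsilon/16)$ factor in $O(|E'|\,\poly(\epsilon^{-1},\log n))$ time by either a standard weighted $b$-matching routine or a primal-dual MWU scheme tailored to the restricted instance. If the primal optimum meets the target I return it as the solution to \ref{nicerlp}; otherwise I read off $x_{i(k)},z_{U,\ell}$ from the optimal dual, set $x_i=\max_\ell x_{i(\ell)}$ to satisfy $\bQ$, and verify membership in the relaxed feasible region. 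The inner-width constraint $\bPi\bx\le\bqi$ with its $\Theta(\epsilon^{-2})$ slack over the original $\bPo$ is the key design choice: it leaves enough room for the dual certificate to live inside even after Lagrangian relaxation, while remaining tight enough for the outer framework of Theorem~\ref{useful}.

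The remaining work is structural. The condition $\G(\bu^s,\bx)$ follows from complementary slackness: whenever $z_{U,\ell}>0$ in the optimal dual, the corresponding primal odd-set constraint of \ref{nicerlp} is tight, which when rewritten in terms of the $u^s$-weights is exactly the statement that the sum of inner $U$-edges at levels $\ge\ell$ dominates the sum of $U$-crossing edges at those levels. The disjointness of $\{U:z_{U,\ell}>0\}$ for each fixed $\ell$, and hence the bound $\tnx=O(nL)$ with $L=O(\epsilon^{-1}\log B)$, will follow from a standard uncrossing argument on Edmonds' matching polytope: at each level we may always choose the tight odd-set duals to form a laminar family, so at most $n-1$ of them are positive, and summing over the $L+1$ levels gives $O(nL)$ nonzeros. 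The main obstacle I anticipate is verifying that the slack introduced by Lagrangian relaxation of $\bPo$ is exactly absorbed by the $\Theta(\epsilon^{-2})$ expansion to $\bPi$, so that a primal value shortfall of $\epsilon/16$ always yields a dual certificate inside the truncated feasible region; the calibration of the constants in the $(24/\epsilon + 24/\epsilon^2)$ bound of $\bqi$ is what makes this work.
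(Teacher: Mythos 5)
Your high-level intuition (a dichotomy between a good primal supported on $E'$ and a Lagrangian-type dual certificate) matches the paper's, but the route you propose --- solve the restricted LP to near-optimality and invoke duality/complementary slackness --- is not what the paper does, and it has gaps that I do not think can be patched without essentially reverting to the paper's construction. The paper never solves an LP: Algorithm~\ref{alg:micro3} directly \emph{measures the violation} of the constraints of \ref{nicerlp} at the candidate point $y_{ij}\propto u^s_{ijk}$, $\mu_{ik}\propto\zeta_{ik}$. If the total vertex violation $\Gamma(V)$ (resp.\ odd-set violation $\Gamma(\O_s)$) exceeds $\epsilon\gamma/24$, it puts mass $\gamma\h{w}_{\cdot}/\Gamma$ on exactly the violated constraints; the thresholds $\Gamma\ge\epsilon\gamma/24$ are precisely what force $x_{i(\ell)}\le 24\h{w}_\ell/\epsilon$ and $z_{U,\ell}\le 24\h{w}_\ell/\epsilon$, which (together with disjointness of $\K(\ell)$ and the geometric sum over levels) yield the $\bigl(\tfrac{24}{\epsilon}+\tfrac{24}{\epsilon^2}\bigr)\h{w}_k$ width bound. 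An optimal or near-optimal dual of the restricted LP carries no such bound: nothing in LP optimality caps $x_{i(k)}$ or $\sum_{U\ni i}z_{U,\ell}$ by $O(\h{w}_k/\epsilon)$, so your dual certificate need not lie in the truncated region $\bPi\bx\le\bqi$ at all. You flag this calibration as "the main obstacle" but offer no mechanism for it; in the paper the mechanism \emph{is} the thresholded construction, not duality.

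Two further steps in your proposal would fail as stated. First, $\G(\bu^s,\bx)$ does not follow from complementary slackness: tightness of an odd-set constraint of \ref{nicerlp} says $\sum_{k\ge\ell}(\sum_{i,j\in U}y_{ij}-\sum_{i\in U}\mu_{ik})=\lfloor\bnorm{U}/2\rfloor$, which is not the cut inequality $\sum_{\mathrm{inner}}u^s\ge\sum_{\mathrm{crossing}}u^s$. The paper derives $\G$ from a separate argument combining the cut-accounting identity (Equation~\ref{accounting}), the violation lower bound (Equation~\ref{yeseqn}), the degree bound (Equation~\ref{canapply}) and $\bnorm{U}\ge3$, showing the inner mass of a \emph{violated} set is at least a third of its total incident mass. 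Second, uncrossing gives a \emph{laminar} family, not a \emph{disjoint} one; laminarity would suffice for $\tnx=O(nL)$ but not for the width bound, which uses $\sum_{U\ni i}z_{U,\ell}\le 24\h{w}_\ell/\epsilon$ and hence needs genuine disjointness --- the paper gets it for free because Lemma~\ref{weighted-oracle-helper} returns a mutually disjoint collection by construction (via the odd-min-cut routine of Lemma~\ref{maximaldense2}). Finally, the running time claim is unsupported: solving the restricted LP over $\O_s$ with a Padberg--Rao separation oracle is not $O(|E'|\poly(\epsilon^{-1},\log n))$ via ellipsoid, and doing it by MWU would itself require an inner oracle of exactly the kind this lemma is supposed to supply, which is circular.
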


The proof of Lemma~\ref{oolemma} is based on showing that 
if we restrict ourselves to the edges $\{(i,j)| y_{ij} >0\}$, then given an optimal 
dual solution of \ref{nicerlp0} (where the first constraint expressed as a maximization), we can produce a feasible solution to the dual of 
\ref{nicerlp} which is less than $(1+\epsilon)$ times the optimal 
dual solution of \ref{nicerlp0}. Thus we have a lower bound to the feasible solution of the dual of 
\ref{nicerlp}, and in turn a lower bound to the optimal 
dual solution of \ref{nicerlp0}. But that lower bound implies that
there exists a large primal solution restricted to those edges. 
Lemma~\ref{weighted-oracle} either provides a solution of \ref{nicerlp}, which using Lemma~\ref{oolemma} proves the existence of a large matching (as a factor of $\beta$), or Lemma~\ref{weighted-oracle} makes progress towards solving the dual (proving that the current bound of $\beta$ is appropriate). 
We use any offline algorithm to construct a $(1-\epsilon)$ approximation to the maximum matching on the union of the edges stored by the $t$ deferred sparsifiers (e.g., \cite{DuanP10,AhnG14}) and adjust $\beta$ and the multipliers as in Algorithm~\ref{alg:simplealg}.
 We summarize as follows:

\begin{theorem}
\label{thm3}
For any constant $0<\epsilon\leq \frac1{16}$ and $p>1$,
we can find a $(1-14\epsilon)$ approximate integral weighted $b$--Matching for nonbipartite graphs in $O(m \poly(\epsilon^{-1},\log n))$ time, $O(p/\epsilon)$ rounds and $O(n^{1+1/p}\log B)$ centralized space.
\end{theorem}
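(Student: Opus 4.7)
The plan is to instantiate Theorem~\ref{useful} with the dual-primal setup from Section~\ref{sec:ex} applied to the rescaled program \ref{nicerlp0}, and then lift the resulting fractional certificate to an integral $b$-matching for \ref{lpbm} using Observation~\ref{tlemma} together with an offline weighted $b$-matching approximation \cite{DuanP10,AhnG14}.

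First I would verify the five conditions of Definition~\ref{def:amenable}. Conditions (d\ref{dualfeasible})--(d\ref{innerwidth}) are essentially asserted in the paragraph following the definition of $\G(\bu^s,\bx)$: setting $x_i = \frac{1}{1-3\epsilon}\max_\ell x_{i(\ell)}$ and $z_U = \frac{1}{1-3\epsilon}\sum_\ell z_{U,\ell}$ converts $\bA\bx\ge(1-3\epsilon)\bc$ into a feasible dual of \ref{nicerlp0}, giving $a_1=3$; the per-layer bounds encoded by $\bPo$ and $\bPi$ then yield $\rho_o = 6$, $\rho_i = O(\epsilon^{-2})$ and $\tno=\tni=O(n\epsilon^{-1}\log B)$. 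Condition (d\ref{defcon}) reduces via Lemma~\ref{gluelemma} to building a $(1\pm\epsilon/16)$-cut sparsifier on each weighted graph $G_k^w=(V,\h{E}_k,\{u_{ijk}\})$; the deferred sketches of Section~\ref{sec:deferred} produce each sparsifier while oversampling each edge by at most $(1+\epsilon)^t\le\promise$, so that each fits in $\tilde{O}(n\promise^2)$ space. Condition (d\ref{definit}) is supplied by Lemma~\ref{init-weighted-bip} with $\ae=2048\epsilon^{-2}$ and $\epsilon_0=1-\epsilon/256$, using $O(n^{1+1/(2p)}L)$ space and $O(p)$ sketching rounds.

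Next I would assemble the {\sc MicroOracle}: given $\bu^s,\bzeta,\varrho$, Lemma~\ref{weighted-oracle} either returns a feasible solution of \ref{nicerlp} (which I feed into Lemma~\ref{oolemma} to extract an integral $b$-matching of weight $(1-O(\epsilon))\beta$, playing the role of part (i)), or an $\bx$ satisfying \laginnero\ with $\tnx = O(n\epsilon^{-1}\log B)$ non-zeros. Setting $\promise = n^{1/(2p)}$ so that $\log\promise=\Theta((\log n)/p)$ and plugging $\rho_o=O(1)$, $\log\ae = O(\log(1/\epsilon))$, and $\log(1/(1-\epsilon_0)) = O(\log(1/\epsilon))$ into the round bound of Theorem~\ref{useful} collapses $\tau$ to $O(p/\epsilon)$. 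Each round stores $O(\epsilon^{-1}\log\promise)$ deferred sparsifiers totalling $\tilde{O}(n\promise^2) = \tilde{O}(n^{1+1/p})$ edges, with an additional $\tilde{O}(n^{1+1/p}\log B)$ for the averaged iterate. I would then follow Algorithm~\ref{alg:simplealg}, each round running the offline $(1-\epsilon)$-approximate integer $b$-matching algorithm of \cite{DuanP10,AhnG14} on the union of stored edges, keeping the best integer matching, and terminating early if it certifies $\lambda\ge 1-3\epsilon$. Observation~\ref{tlemma} then lifts any $(1-O(\epsilon))$-approximate solution of \ref{nicerlp0} to one of \ref{lpbm}, and summing $a_1=3$ from dual feasibility, the constants from Lemmas~\ref{oolemma} and \ref{weighted-oracle}, the $(1-\epsilon)$ loss of the offline integer step, and the $(1+\epsilon)/(1-\epsilon)$ rescaling factor, yields the claimed $(1-14\epsilon)$ bound.

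The main obstacle is condition (d\ref{defcon}). Cut sparsifiers preserve cuts but not their differences, yet the odd-set constraint $\sum_{(i,j):i,j\in U} y_{ij}\le \lfloor\bnorm{U}/2\rfloor$ rewrites as half the sum of vertex cuts incident to $U$ minus the cut across $U$, a quantity no sparsifier can approximate on both sides. The auxiliary property $\G(\bu^s,\bx)$ is precisely designed to restrict attention to the side on which this difference is safely nonnegative, and the per-weight-class decomposition $\h{E}_k$ together with Lemma~\ref{gluelemma} is what lets me glue the per-class sparsifiers into a valid implication for \simp. A secondary technical step is verifying that the multiplier range $\bu_\ell \in [(2m/\epsilon)^{-O(\rho_o/(\epsilon(1-\epsilon_0)))},1]$ yields $\log\L_0 = O(\poly(\epsilon^{-1})\log(m/\epsilon) + \log\promise)$, so that the $(1+\epsilon)^t\le\promise$ oversampling guarantee indeed keeps every deferred sparsifier within $n^{1+1/p}$ space.
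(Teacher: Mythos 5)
Your proposal is correct and follows essentially the same route as the paper: it verifies the five conditions of Definition~\ref{def:amenable} with the same parameters ($a_1=3$, $\rho_o=6$, $\rho_i=O(\epsilon^{-2})$, $\ae=2048\epsilon^{-2}$, $\epsilon_0=1-\epsilon/256$), assembles the {\sc MicroOracle} from Lemmas~\ref{oolemma} and \ref{weighted-oracle}, instantiates Theorem~\ref{useful} with $\promise=n^{1/(2p)}$ to collapse the round count to $O(p/\epsilon)$, and lifts to an integral solution via the offline algorithms and Observation~\ref{tlemma}, exactly as the paper does in Section~\ref{sec:ex}.
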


\subsection{New Relaxations and Oracles: Lemma~\ref{weighted-oracle}}
\label{sec:ben}

\vspace{0.1in}
The micro-oracle is provided by Algorithm~\ref{alg:micro3}. Observe that if
$\gamma \leq 0$ then the Lemma is trivially satisfied by setting
$x_i=x_{i(\ell)}=z_{U,\ell}=0$ for all $i,U,\ell$.
 The overall idea of the algorithm is simple: {\em find the violated constraints of \ref{nicerlp0}} -- either the contribution of those constraints 
are small in comparison to a rescaled value of $\beta$, in which case simply zeroing out the associated variables (carefully) will preserve a large solution and satisfy \ref{nicerlp0} -- or the contribution is large and thus giving small weights to each of them we satisfy \ref{bwm2}.
The algorithm~\ref{alg:micro3} uses the following lemma: 

\begin{lemma}
\label{weighted-oracle-helper}
In time $O(n \poly(\log n,\frac1\epsilon))$ time we can find a collection $\K(\ell)$ such that every pair of sets are mutually disjoint and  Equations~\ref{yeseqn} and \ref{noeqn} hold in Step~\ref{klstep}.
\end{lemma}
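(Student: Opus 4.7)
My plan is a greedy disjoint construction. Define
\[
\phi(U) \;=\; \frac{(1-\epsilon/4)\beta}{\gamma} \sum_{k \geq \ell} \Bigl( \sum_{(i,j)\in \h{E}_k,\, i,j\in U} u^s_{ijk} \;-\; \varrho\sum_{i\in U} \bar{\zeta}_{ik} \Bigr),
\]
which is the common left-hand side of the two desired inequalities, and call $U \in \O_s$ \emph{violated} when $\phi(U) \geq \lfloor \bnorm{U}/2 \rfloor$. Starting from $\K(\ell) = \emptyset$ and an active vertex set $V_a = V$, I repeatedly search among odd sets $U \subseteq V_a$ with $\bnorm{U} \leq 4/\epsilon$ for a violated one; when found, I add $U$ to $\K(\ell)$ and delete its vertices from $V_a$, and I stop when the search returns no candidate. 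Pairwise disjointness is immediate from the deletion step, Equation yeseqn holds by the admission rule, and upon termination every $U'$ disjoint from $\cup_{U \in \K(\ell)} U$ has $\phi(U') < \lfloor \bnorm{U'}/2 \rfloor$, which is strictly stronger than what Equation noeqn asks for; the $\epsilon/2$ slack there will absorb an efficiency-driven quantization introduced below.

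The main obstacle is meeting the $O(n\,\poly(\log n, \epsilon^{-1}))$ budget, since naive enumeration of odd subsets of size up to $4/\epsilon$ costs $n^{O(1/\epsilon)}$. I will exploit two structural features. First, because $|U| \leq 4/\epsilon$, a violated $U$ forces each of its members to carry incident $u^s$-mass (summed over $k \geq \ell$ and endpoints in $V_a$) above some threshold $t(\epsilon) = \poly(\epsilon^{-1})$, else the internal mass of $U$ cannot reach $\lfloor \bnorm{U}/2 \rfloor$. Second, the support graph of $\{u^s_{ijk}\}$ is the union of deferred cut-sparsifiers and is therefore $\tilde O(n)$-sparse per level. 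I maintain the per-vertex aggregate $D_i = \sum_{k \geq \ell} \sum_{j \in V_a} u^s_{ijk}$ and restrict the search to the ``heavy'' vertices $H = \{i : D_i \geq t(\epsilon)\}$. For each $v \in H$ I run a BFS of radius $4/\epsilon$ in the support graph, enumerate candidate subsets of size $\leq 4/\epsilon$ in the explored neighborhood, and score each by $\phi$ in $\poly(\epsilon^{-1})$ time using precomputed partial sums; to fit the time budget I actually admit only candidates with margin $\phi(U) \geq \lfloor \bnorm{U}/2 \rfloor + \epsilon/4$, so near-violated sets may be missed without affecting Equation noeqn.

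The completeness step is where I expect the most care: I will argue that any $U$ violated by margin $\epsilon/4$ in $V_a$ can be assumed (after pruning zero-mass isolated vertices and rebalancing for oddness of $\bnorm{\cdot}$) to induce a connected subgraph of the support, hence of diameter at most $|U|-1 \leq 4/\epsilon - 1$, so a BFS rooted at any heavy member of $U$ reaches all of $U$ and the candidate is enumerated. Each iteration deletes at least one vertex from $V_a$, giving at most $n$ iterations; the $\{D_i\}$ updates touch only the $4/\epsilon$-neighborhood of the removed vertices, and each BFS plus candidate scoring costs $\poly(\log n, \epsilon^{-1})$. This yields the claimed running time, and Equations yeseqn and noeqn follow from the admission and termination rules respectively.
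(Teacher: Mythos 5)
There is a genuine gap in the step you flag as needing the most care, and also in the step you treat as routine. The paper does not locate violated odd sets by local enumeration at all: it rewrites the condition $\phi(U)\ge \lfloor \bnorm{U}/2\rfloor$ as a statement that $U$ has a small \emph{odd cut} in an auxiliary unweighted multigraph $H$ on $V\cup\{s\}$, obtained by discretizing the quantities $q_{ij}=\frac{(1-\epsilon/4)\beta}{\gamma}u^s_{ijk}$ into $\lfloor q_{ij}8\epsilon^{-3}\rfloor$ parallel edges and attaching each vertex to a special node $s$ up to degree $\lceil \hq_i 8\epsilon^{-3}\rceil$ (Lemma~\ref{maximaldense2}). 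It then invokes the Padberg--Rao minimum odd-cut machinery combined with approximate Gomory--Hu trees (Lemma 12 of \cite{AhnG14}) to extract, in $O(n\poly(\kappa,\log n))$ time with $\kappa=O(\epsilon^{-3})$, a maximal disjoint family of odd sets of cut at most $\kappa$; the rounding in the discretization is exactly what produces the $\epsilon/2$ additive slack in Equation~\ref{noeqn}, and condition (A3) is what guarantees that no set with $\bnorm{U}>4/\epsilon$ can qualify. Your proposal replaces this with brute-force enumeration of subsets of size $\le 4/\epsilon$ inside BFS balls, and that cannot meet the time budget: even if a ball contains only $s$ vertices, enumerating its subsets of size up to $4/\epsilon$ costs $s^{\Theta(1/\epsilon)}$, which is $\poly(\log n,\epsilon^{-1})$ only if $s$ is essentially constant. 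Nothing bounds $s$ this way --- the support graph being $\tilde O(n)$-sparse in total does not bound individual degrees, so a radius-$(4/\epsilon)$ ball can contain $\Theta(n)$ vertices, and your heaviness threshold only prunes vertices with small incident mass, not the neighborhood sizes of the heavy ones.

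Two further points would need repair even granting a small ball. First, a violated $U$ need not induce a connected subgraph of the support: the internal mass condition is an aggregate over $U$, and one can have a dense even component padded by an odd low-degree vertex; your ``rebalancing for oddness'' would have to show that some connected odd violated set survives, which is not obvious and is not needed in the paper's cut-based argument. Second, your verification that properties (A1)--(A3) style preconditions hold (i.e., that $\sum_j q_{ij}\le \hq_i$ and that large sets cannot be violated) is missing entirely; in the paper this is where Equation~\ref{canapply} and the definition of $\bar{\zeta}_{ik}$ enter, via $\hq_i(\ell)=b_i+\frac{2(1-\epsilon/4)\varrho\beta}{\gamma}\sum_{k\ge\ell}\bar{\zeta}_{ik}$. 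Your greedy outer loop and the disjointness/termination logic are fine, but the core algorithmic content of the lemma --- how to find a violated small odd set in near-linear time --- is exactly the part your proposal does not supply.
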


Note that the
algorithm runs in time $O(n L \poly(\log n,1/\epsilon))$ for $L+1$
levels of the discretized weights since we invoke
Lemma~\ref{weighted-oracle-helper} at most $L+1$ times. 
For the {\bf return} step in Step~\ref{firstreturn-non} in the algorithm,
observe:

{\small 
\begin{eqnarray*}
& & \sum_{i,k} x_{i(k)} \left( \sum_{j:(i,j) \in \h{E}_k} u^s_{ijk} - 2\varrho \zeta_{ik} \right) \\
& & \hspace{0.25in} = \sum_{i \in \viol(V)} \left( \sum_{k \leq \ki, k \in \pos(i)} \frac{\gamma \h{w}_{k}}{\Gamma(V)} \left( \sum_{j:(i,j) \in \h{E}_k} u^s_{ijk} - 2\varrho \zeta_{ik} \right) + \sum_{k > \ki, k \in \pos(i)} \frac{\gamma \h{w}_{\ki}}{\Gamma(V)} \left( \sum_{j:(i,j) \in \h{E}_k} u^s_{ijk} - 2\varrho \zeta_{ik} \right)
\right) \\
& & \hspace{0.25in} \mbox{(Using Definitions of $x_{i(\ell)},\Delta(i,\ell),\viol(V)$ in Steps~(\ref{xildef}),(\ref{deltaildef}) and (\ref{violvdef}) respectively.)}\\
& & \hspace{0.25in} =\frac{\gamma}{\Gamma(V)}  \sum_{i \in \viol(V)} \Delta(i,\ki) = \gamma \quad
\mbox{(Definition of $\Gamma(V)$ in Step~(\ref{gammavdef}))}
\end{eqnarray*}
}{\small \begin{eqnarray*}
 \sum_{i} x_{i} b_i & = & \sum_{i \in \viol(V)} \frac{\gamma \h{w}_\ki}{\Gamma(V)} b_i 
= \frac{\beta}{\Gamma(V)} \sum_{i \in \viol(V)} \frac{\gamma b_i \h{w}_{\ki}}{\beta} \leq  \sum_{i\in \viol(V)} \Delta(i,\ki) \quad
\mbox{(Definition of $\ki$)}\\
& = & \frac{\beta}{\Gamma(V)} \Gamma(V) =\beta \quad
\mbox{(Definition of $\Gamma(V)$)}
\end{eqnarray*}
}
Observe that each $x_{i(\ell)} \leq \frac{24}{\epsilon} \h{w}_\ell$.
Therefore we satisfy Lemma~\ref{weighted-oracle} in this case. In the remainder 
of the the proof we assume $\Gamma(V) \leq \epsilon \gamma/24$. Observe that:
{\small
\begin{align*}
\gamma' & = \sum_k \h{w}_k \left( \sum_{(i,j) \in \h{E}_k} 
u^s_{ijk} - 3 \varrho \sum_{i}  \bar{\zeta}_{ik} \right) = \gamma - 3 \sum_{i \in \viol(i)} \left( \sum_{k \in \pos(i), k\leq \ki}  \h{w}_k \left( \bar{\zeta}_{ik} - \zeta_{ik} \right) \right)\\
& = \gamma - \frac32\sum_{i \in \viol(i)} \sum_{k \in \pos(i), k\leq \ki}  \h{w}_k \left(
\sum_{j : (i,j) \in \h{E}_k} u^s_{ijk} - 2\varrho \zeta_{ik} \right) \geq \gamma -  \frac32\sum_{i \in \viol(i)} \Delta(i,\ki) \geq
\gamma - \frac32\Gamma(V) \geq \gamma - \epsilon \gamma/16 
\end{align*}
}
and therefore as a consequence $\gamma' \geq (1-\epsilon/16)\gamma$ 
as stated in Step~(\ref{gammapdef}) in Algorithm~\ref{alg:micro3}.
Further, for every $i$,
{\small
\begin{align*}
& \frac{\gamma b_i \h{w}_{\ki+1}}{\beta} \geq \sum_{k \in \pos(i), k > \ki} \h{w}_{\ki+1} \left(\sum_{j : (i,j) \in \h{E}_k} u^s_{ijk} - 2\varrho \zeta_{ik} \right) \quad \implies \quad
\frac{\gamma b_i }{\beta} \geq \sum_{k \in \pos(i), k > \ki} \left(\sum_{j : (i,j) \in \h{E}_k} u^s_{ijk} - 2\varrho \zeta_{ik} \right)
\end{align*}
}

\begin{algorithm}[H]
{\small
 \begin{algorithmic}[1]\parskip=0in 
\STATE 
Let $\gamma = \sum_k \h{w}_k \left( \sum_{(i,j) \in \h{E}_k} 
u^s_{ijk} - 3\varrho\sum_{i}\zeta_{ik} \right)$. Note $\varrho>0, 0<\epsilon\leq \frac1{16}$. Also $\gamma>0$ below.

\STATE Define for all $i$,
$\displaystyle \pos(i) = \left \{ k \left|
\sum_{j : (i,j) \in \h{E}_k} u^s_{ijk} - 2\varrho \zeta_{ik} > 0 \right. \right\}$ and for all $i,\ell$
{\small
\begin{align*}
\Delta(i,\ell) & = \sum_{k \in \pos(i),\ell\ge k} \h{w}_k \left(\sum_{j : (i,j) \in \h{E}_k} u^s_{ijk} - 2\varrho \zeta_{ik} \right) + \sum_{k \in \pos(i),k > \ell} \h{w}_\ell \left( \sum_{j : (i,j) \in \h{E}_k} u^s_{ijk} - 2\varrho \zeta_{ik} \right)
\end{align*}
}
\label{deltaildef}
\STATE Let $\ki=\argmax_\ell \Delta(i,\ell) > \frac{\gamma b_i \h{w}_{\ell}}{\beta}$ and $\ki=-1$ if no $\ell$ exists.
\STATE $\viol(V)=\{ i | \ki \geq 0 \}$ and $\Gamma(V) = \sum_{i \in \viol(V)} \Delta(i,\ki)$
\label{violvdef} \label{gammavdef}
\IF {$\Gamma(V) \geq \epsilon\gamma/24$} 
\STATE {\bf for} all $i \in \viol(V)$, (implies $\ki \geq 0$) 
set 
$x_{i(\ell)} = \left\{ \begin{array}{ll} \gamma\h{w}_{\ki}/\Gamma(V) & \ell \in \pos(i), \ell>\ki \\
\gamma\h{w}_{\ell}/\Gamma(V) & \ell \in \pos(i), \ell\leq\ki 
\end{array} \right. $ \label{xildef}
\STATE Set $x_{i(\ell)}=0$ for $i \not \in \viol(V)$ or $\ell \not \in \pos(i)$, all $z_{U,\ell}=0$ and {\bf return}.
\label{firstreturn-non}
\ENDIF
\STATE ${\small \bar{\zeta}_{ik} \leftarrow \left\{ \begin{array}{ll} \displaystyle \frac1{2\varrho} \hspace{-0.1in} \sum_{j : (i,j) \in \h{E}_k} u^s_{ijk} & \mbox{ if $i \in \viol(V),k \leq \ki, k \in \pos(i)$}\\ 
\zeta_{ik} & \mbox{ otherwise} \end{array} \right. }$. 
\label{barzdef}
\STATE 
Let $\gamma' = \sum_k \h{w}_k \left( \sum_{(i,j) \in \h{E}_k} 
u^s_{ijk} - 3\varrho\sum_{i}\bar{\zeta}_{ik} \right)$. 
\label{gammapdef}
\FOR {level $\ell=L$ downto $0$} 
\STATE Find a collection of sets $\K(\ell)$ using Lemma~\ref{weighted-oracle-helper} such that any two sets in $\K(\ell)$ are disjoint and 
for all $U \in \K(\ell)$: \label{klstep}
\vspace{-0.2in}
{
\small
\begin{align}
& \sum_{k \geq \ell} \left( \sum_{(i,j) \in \h{E}_k :i,j \in U}  u^s_{ijk} - \sum_{i \in U} \varrho \bar{\zeta}_{ik} \right) \geq \frac{\gamma \left\lfloor \bnorm{U}/2 \right\rfloor }{(1-\epsilon/4)\beta} \label{yeseqn}
\end{align}
}
And for all $U' \cap  \left( \cup_{U \in \K(\ell)} U \right) =\emptyset$,
{\small
\begin{align}
&\sum_{k \geq \ell} \left( \sum_{(i,j) \in \h{E}_k:i,j \in U'} u^s_{ijk} - \sum_{i \in U'} \varrho \bar{\zeta}_{ik} \right) \leq \gamma\frac{\left\lfloor \bnorm{U'}/2 \right\rfloor + \frac{\epsilon}{2}}{(1-\epsilon/4)\beta}  
\label{noeqn}
\end{align}
}
\STATE For $U \in \K(\ell)$ define 
$\Delta(U,\ell) = \sum_{k \geq \ell)} \left( \sum_{(i,j) \in \h{E}_k, i,j, \in U} u^s_{ijk} - \varrho \sum_{i \in U} \bar{\zeta}_{ik} \right)$
\ENDFOR
\STATE Define $\Gamma(\O_s) = \sum_{\ell} \sum_{U \in \K(\ell)} \Delta(U,\ell) \h{w}_{\ell}$. \label{gammaosdef}
\IF {$\Gamma(\O_s) \geq \epsilon\gamma'/24$  (Note use of $\gamma'$.) }
\STATE For $U \in \K(\ell)$ set $ z_{U,\ell} = \gamma' \h{w}_{\ell}/{\Gamma(\O_s)}$ otherwise $z_{U,\ell}=0$. (Note use of $\gamma'$.)  \label{zuldef}
\STATE Set $z_U = \max_{\ell} z_{U,\ell}$ for all $U$. All $x_{i(\ell)}$ are set to $0$ and {\bf return}.
\label{secondreturn-non}
\ENDIF
\STATE For $i \in U \in \K(\ell)$, set $\h{\zeta}_{i\ell} \leftarrow \bar{\zeta}_{i\ell} + \frac{b_i\gamma}{2\varrho \beta}\Delta(U,\ell)$, otherwise $\h{\zeta}_{i\ell} \leftarrow \bar{\zeta}_{i\ell}$.
\label{hatzetadef}
\STATE Set
$y_{ij} \leftarrow \frac{(1-\epsilon/4)\beta}{(1+\epsilon/2) \gamma} u^s_{ijk}$. Set $\mu_{ik}=\frac{(1-\epsilon/4)\beta\varrho} {(1+\epsilon/2) \gamma}\h{\zeta}_{ik}$.
 {\bf return $\{y_{ij}\},\{\mu_{ik}\}$}.
\label{finalreturn-non}
 \end{algorithmic}
 \caption{The part (ii) of {\sc MicroOracle} for matching\label{alg:micro3}}
}
\end{algorithm}
Which further implies that for any set $S$ of indices (adding nonpositive quantities to the RHS):
{\small
$$ \frac{\gamma b_i}{\beta} \geq \sum_{k \in S, k > \ki} \left(\sum_{j : (i,j) \in \h{E}_k} u^s_{ijk} - 2\varrho \zeta_{ik} \right)$$}

Now $\bar{\zeta_{ik}}=\zeta_{ik}$ for all $i \not \in \viol(V)$, $k > \ki$ or $k \not \in \pos(i)$.  Otherwise we increase $\zeta_{ik}$ to $\bar{\zeta_{ik}} = \frac1{2\varrho} \sum_{j : (i,j) \in \h{E}_k} u^s_{ijk}$ and those terms occur with a negative sign. Therefore for any $i$ and any set $S$ of indices

{\small
\begin{equation}
\label{canapply}
\frac{\gamma b_i}{\beta} \geq \sum_{k \in S} \left(\sum_{j : (i,j) \in \h{E}_k} u^s_{ijk} - 2\varrho \bar{\zeta}_{ik} \right)
\end{equation}
}

\noindent Observe that we have already proven the lemma for bipartite graphs! We now observe that the {\bf return} statement in 
Step~(\ref{secondreturn-non}) in Algorithm~\ref{alg:micro3} satisfies: 

{\small
\begin{align*}
\sum_{\ell} \sum_{U \in \O_s}  & z_{U,\ell}    \left( \sum_{k \geq \ell} \left(\sum_{(i,j) \in \h{E}_k,i,j \in U} u^s_{ijk} - \varrho \sum_{i\in U} \zeta_{ik} \right) \right) \\
& \geq \sum_{\ell} \sum_{U \in \O_s} z_{U,\ell}  \left( \sum_{k \geq \ell} \left(\sum_{(i,j) \in \h{E}_k,i,j \in U} u^s_{ijk} - \varrho \sum_{i\in U} \bar{\zeta}_{ik} \right) \right)  \quad \mbox{(since $\zeta_{ik} \leq \bar{\zeta}_{ik}$ for all $i,k$.)}\\
& = \sum_\ell \sum_{U \in \K(\ell)} \frac{\h{w}_{\ell} \gamma'}{\Gamma(\O_s)}
\left( \sum_{k \geq \ell} \left( \sum_{(i,j) \in \h{E}_k,i,j \in U} u^s_{ijk}- \varrho \sum_{i\in U} \bar{\zeta}_{ik} \right)\right)   \quad \mbox{(Definition of $z_{U,\ell}$ in Step~(\ref{zuldef}).)} \\
& = \frac{\gamma'}{\Gamma(\O_s)} \sum_{\ell} \h{w}_{\ell} \sum_{U \in \K(\ell)} \Delta(U,\ell)  = \frac{\gamma'}{\Gamma(\O_s)} \Gamma(\O_s) \geq \left( 1 - \frac{\epsilon}{16} \right) \gamma \quad \mbox{(Using Steps~(\ref{gammapdef}), (\ref{gammaosdef}).)} \\
\sum_{\ell, U \in \O_s}  &z_{U,\ell} \left\lfloor \bnorm{U}/2 \right\rfloor  =
 \sum_{\ell} \sum_{U \in \K(\ell)} \frac{\h{w}_{\ell} \gamma'}{\Gamma} \left\lfloor \bnorm{U}/2 \right\rfloor\\
& \leq \sum_{\ell} \sum_{U \in \K(\ell)} \frac{\h{w}_{\ell} \gamma'}{\Gamma(\O_s)} \frac{\beta}{\gamma} \sum_{k \geq \ell} \left( \sum_{(i,j) \in \h{E}_k :i,j \in U}  u^s_{ijk} - \varrho \bar{\zeta}_{ik} \right) quad \mbox{(Due to Equation~\ref{yeseqn}.)}\\
& \leq \frac{\beta}{\Gamma(\O_s)} \sum_{\ell} \h{w}_{\ell} \sum_{U \in \K(\ell)} \Delta(U,\ell) = \frac{\beta}{\Gamma(\O_s)} \Gamma(\O_s) =\beta \quad \mbox{($\gamma' \leq \gamma$ and Definition of $\Delta(U,\ell),\Gamma(\O_s)$)}
\end{align*}
}
Again observe that $z_{U,\ell} \leq 24\h{w}_{\ell}/\epsilon$, and that the $\K(\ell)$ are mutually disjoint. 
Therefore for any fixed $i$, $\sum_{U: i \in U} z_{U,\ell} \leq 24\h{w}_{\ell}/\epsilon$ as well. Thus for any $i$,

{\small
\[ \sum_{\ell \leq k} \left(\sum_{U: i \in U} z_{U,\ell} \right) \leq \frac{24}{\epsilon} \h{w}_{k} \left( 1 + \frac1{1+\epsilon} + \frac1{(1+\epsilon)^2} \cdots \right) \leq \frac{24}{\epsilon^2}\h{w}_k
\]
} 
Finally, based on simple accounting of the edges in a cut,
{\small
\begin{align}
&\displaystyle 2\sum_{(i,j) \in \h{E}_k: i,j \in U} u^s_{ijk} +  \sum_{i
    \in U}   \left( \sum_{j \not \in U, (i,j) \in \h{E}_k} u^s_{ijk}\right)   = \sum_{i
    \in U}   \left( \sum_{j:(i,j) \in \h{E}_k} u^s_{ijk} \right ) \qquad \forall U,k  \nonumber
\end{align}
}
which implies that 
{\small
\begin{align}
&  \sum_{k \geq \ell} \left( 2\sum_{(i,j) \in \h{E}_k: i,j \in U} u^s_{ijk} +  \sum_{i
    \in U}   \left( \sum_{j \not \in U, (i,j) \in \h{E}_k} u^s_{ijk}\right) \right) =\sum_{k \geq \ell}  \left( \sum_{i
    \in U}   \left( \sum_{j:(i,j) \in \h{E}_k} u^s_{ijk} \right) \right)  \label{accounting}
\end{align}
}
Therefore 
if $U,\ell$ were to violate $\G(\bu^s,\bx)$ then,
{\small
\begin{align*}
\sum_{k \geq \ell} \left( \sum_{(i,j) \in \h{E}_k, i,j \in U} u^s_{ijk}  -  \sum_{i \in U} \left( \sum_{j \not \in U, (i,j) \in \h{E}_k} u^s_{ijk} \right) \right) < 0 
\end{align*}
}
which along with Equation~\ref{accounting} implies
{\small
\begin{equation}
3 \sum_{k \geq \ell} \left( \sum_{(i,j) \in \h{E}_k: i,j \in U} u^s_{ijk}\right) < \sum_{k \geq \ell}\left(  \sum_{i
    \in U}\left( \sum_{j:(i,j) \in \h{E}_k} u^s_{ijk}\right)\right) \label{notlikely}
\end{equation}
}

\noindent
But from Equation~\ref{yeseqn}, if $U \in \K(\ell)$ (which must happen if $z_{U,\ell}>0$) then since $\bnorm{U} \geq 3$,
{\small
\begin{eqnarray}
& & \sum_{k \geq \ell} \left( \sum_{(i,j) \in \h{E}_k :i,j \in U}  u^s_{ijk} - \sum_{i \in U} \varrho \bar{\zeta}_{ik} \right) \geq  \frac{\gamma}{\beta} \left\lfloor \frac{\bnorm{U}}{2} \right\rfloor  \geq \frac{\gamma}{\beta} \frac{\bnorm{U}}{3} = \frac13 \sum_{i \in U} \frac{\gamma b_i}{\beta} \nonumber \\
& & \hspace{0.5in} \geq  \frac13\sum_{i \in U} \sum_{k \geq \ell} \left(\sum_{j : (i,j) \in \h{E}_k} u^s_{ijk} - 2\varrho \bar{\zeta}_{ik} \right) \quad \mbox{(Using Equation~\ref{canapply}.)} \label{need0}
\end{eqnarray}
}
But since $\bar{\zeta}_{ik} \geq 0$ we have
{\small
\begin{eqnarray*} & & 
3 \sum_{k \geq \ell} \left( \sum_{(i,j) \in \h{E}_k :i,j \in U}  u^s_{ijk}\right) - 2 \sum_{k \geq \ell} \left( \sum_{i \in U} \varrho \bar{\zeta}_{ik} \right)   \geq  3 \sum_{k \geq \ell} \left( \sum_{(i,j) \in \h{E}_k :i,j \in U}  u^s_{ijk} - \sum_{i \in U} \varrho \bar{\zeta}_{ik} \right) \\
& & \hspace{0.5in} \geq  \sum_{i \in U} \sum_{k \geq \ell} \left(\sum_{j : (i,j) \in \h{E}_k} u^s_{ijk} - 2\varrho \bar{\zeta}_{ik} \right) \quad \mbox{(Using Equation~\ref{need0}.)}
\end{eqnarray*}
} 
which contradicts Equation~\ref{notlikely}. Therefore for all $z_{U,\ell}>0$, the constraints $\G(\bu^s,\bx)$ hold. We satisfy Lemma~\ref{weighted-oracle} in this case as well.

\smallskip
 In the remainder of the the proof we assume $\Gamma(\O_s) \leq \epsilon^2 \gamma/64$. and show that the constraints of \ref{nicerlp} are satisfied.
Observe that a consequence of Equation~\ref{canapply}, even if we increase $\bar{\zeta}_{ik}$ to $\hat{\zeta}_{ik}$, we continue to satisfy:
{\small
\begin{equation}
\label{p1eqn}
\sum_{k \in S} \left(\sum_{j : (i,j) \in \h{E}_k } y_{ij} - 2\mu_{ik}\right) \leq b_i
\end{equation}
} 
We set $y_{(k)}=\max\{0, \sum_{j : (i,j) \in \h{E}_k} u^s_{ijk} - 2\varrho \hat{\zeta}_{ik} \}$.
Setting $S=\{k : y_{i(k)}>0 \}$ in Equation~\ref{p1eqn} satisfies the constraint $\sum_k y_{i(k)} \leq b_i$ for vertex $i$ in \ref{nicerlp}. 
Moreover for $i \in \left( \cup_{U \in \K(\ell)} U \right)$, $\h{\zeta}_{i\ell}$ increased from $\bar{\zeta}_{i\ell}$ by $\frac{\gamma b_i}{2\varrho\beta}$ -- thus using the observation in Step~(\ref{barzdef}) for $i \in \left( \cup_{U \in \K(\ell)} U \right)$ we have:
{\small
$$ \sum_{k \geq \ell}\left(\sum_{j : (i,j) \in \h{E}_k} u^s_{ijk} - 2\varrho \h{\zeta}_{ik} \right) \leq 0 $$
}
which implies
{\small
\begin{equation}\label{p2eqn}  \sum_{k \geq \ell}\left(\sum_{j : (i,j) \in \h{E}_k } y_{ij} - 2\mu_{ik}\right) \leq 0
\end{equation}
}
which as we will shortly see, corresponds to the vertices having no effect on feasibility of \ref{nicerlp}. 
For an $U' \in \O_s$ and $\ell$, if $U' \cap  \left( \cup_{U \in \K(\ell)} U \right) =\emptyset$ then from Equation~\ref{noeqn}:

{\small
\begin{align*}
(1+\epsilon/2) \left( \sum_{k \geq \ell} \left( \sum_{(i,j) \in \h{E}_k:i,j \in U'} y_{ij} - \sum_{i \in U} \mu_{ik} \right)\right) \leq \left\lfloor \bnorm{U'}/2 \right\rfloor + \frac{\epsilon}{2} 
\end{align*}
}
which implies that the constraint corresponding to $U',\ell$ is satisfied in \ref{nicerlp}.
If on the other hand, $U' \cap \left( \cup_{U \in \K(\ell)} U \right) = U'' \neq \emptyset$ then
{\small
\begin{align*}
& \sum_{k \geq \ell} \left( \sum_{(i,j) \in \h{E}_k:i,j \in U'} y_{ij} - \sum_{i \in U'} \mu_{ik} \right) = \sum_{k \geq \ell} \left( \frac12 \sum_{i \in U'} \left( \sum_{j \in U',(i,j) \in \h{E}_k } y_{ij} \right) 
- \sum_{i \in U'} \mu_{ik} \right) \\
&  \hspace{0.5in} = \frac12 \sum_{i \in U'} \left( \sum_{k \geq \ell} \left( \sum_{j \in U', (i,j) \in \h{E}_k} y_{ij} - 2\mu_{ik} \right) \right) \leq \frac12 \sum_{i \in U'} \left( \sum_{k \geq \ell} \left( \sum_{j:(i,j) \in \h{E}_k} y_{ij} - 2\mu_{ik} \right) \right)\\
\end{align*}
}
which using Equations~\ref{p2eqn} and \ref{p1eqn} implies that
{\small
\begin{align*}
 \sum_{k \geq \ell} \left( \sum_{(i,j) \in \h{E}_k:i,j \in U'} y_{ij} - \sum_{i \in U'} \mu_{ik} \right) 
&\leq \frac12 \sum_{i \in U' - U''} \left( \sum_{k \geq \ell} \left( \sum_{j:(i,j) \in \h{E}_k} y_{ij} - 2\mu_{ik} \right) \right) \\
&
\leq \frac12 \sum_{i \in U' - U''} b_i  
= \frac{\bnorm{U}-\bnorm{U''}}{2} \leq \frac{\bnorm{U}-1}{2} = \left \lfloor \frac{\bnorm{U}}{2} \right \rfloor
\end{align*}
}

which implies that the constraint corresponding to $U',\ell$ is also satisfied in \ref{nicerlp}. We now focus on the last remaining constraint of \ref{nicerlp}:

{\small
\begin{eqnarray*}
\sum_k \h{w}_k & &\left( \sum_{(i,j) \in \h{E}_k} 
u^s_{ijk} - 3\varrho\sum_{i} \h{\zeta}_{ik} \right) 
= \gamma' - 3\sum_{\ell} \h{w}_{\ell} \sum_{U \in \K(\ell)} \sum_{i \in U} \frac{\gamma b_i}{2\beta} = \gamma' - 3 \sum_{\ell} \h{w}_{\ell} \sum_{U \in \K(\ell)} \frac{\gamma\bnorm{U}}{2\beta} \\
& & \hspace{0.1in} \geq \gamma' - 3\sum_{\ell} \h{w}_{\ell} \sum_{U \in \K(\ell)} \frac{3\gamma\lfloor \bnorm{U}/2 \rfloor}{2\beta} \quad \mbox{(Using $\lfloor \bnorm{U}/2 \rfloor \geq \frac13 \bnorm{U}$.)} \\
& & \hspace{0.1in} \geq \gamma' - 3\sum_{\ell} \h{w}_{\ell} \sum_{U \in \K(\ell)} 
\frac{3}{2} \left( \sum_{k \geq \ell}  \left( \sum_{(i,j) \in \h{E}_k :i,j \in U}  u^s_{ijk} - \sum_{i \in U} \varrho \bar{\zeta}_{ik} \right)\right) \quad \mbox{(Using Eqn.~\ref{yeseqn}.)} \\
& & \hspace{0.1in} = \gamma' - 3 \sum_{\ell} \h{w}_{\ell} \sum_{U \in \K(\ell)}
\frac{3}{2} \Delta(U,\ell) \quad \mbox{(Using Definition of $\Delta(U,\ell)$.)}\\
& & \hspace{0.1in} \geq \gamma' - \frac92\Gamma(\O_s) \geq \gamma' - 3\epsilon \gamma/16 \geq (1-\epsilon/4)\gamma 
\end{eqnarray*}
}
Therefore
{\small
\begin{align*}
\sum_k \h{w}_k \left( \sum_{(i,j) \in \h{E}_k} 
y_{ij} - 3\sum_{i} \mu_{ik} \right) & \geq (1-\epsilon/4)\gamma \frac{(1-\epsilon/4)\beta}{(1+\epsilon/2)\gamma} \geq (1-\epsilon)\beta 
\end{align*}
}
The lemma follows. Observe that the proof was not very involved, which is the benefit of having a good relaxation \ref{nicerlp} to prove the existence of a large matching.

\section{Deferred Cut-Sparsification}
\label{sec:deferred}

\begin{definition}{\sc (The Deferred Cut-Sparsifier Problem)} 
Consider the problem: We are given a weighted graph $G=(V,E,u)$ but
the weights are {\bf not revealed} to us. Instead we are given
$\{\varsigma_{ij}\}$ with the promise that $\varsigma_{ij}/\chi \leq
u_{ij} \leq
\varsigma_{ij} \chi$. We have to produce a smaller summary data structure $\D$
storing only some of the edges. Once $\D$ is
constructed, {\bf then} the exact weights $u_{ij}$ of only those edges
stored in our structure $\D$ are revealed to us.  We then output a sparsifier $H=(V,E',u^s)$.  
\end{definition}

\begin{lemma}\label{twolevel}
Given $\{\varsigma_{ij}\}$ and
 edge weights $ \in [1/\L_0,\L_0]$,
 we create a $O(n\chi^2 \vepsilon^{-2} (\log \L_0) \log^4 n)$ size deferred sparsifier.
 We can construct a $(1\pm\vepsilon)$-sparsification from the deferred sparsification
 when the edge weights of the stored edges are revealed.

 The algorithm can be implemented in a single round of sketching; which implies 
 that it 
 can be implemented in $O(1)$ rounds in the Map-Reduce model, or a 
 single pass in the  semi-streaming model. In the semi streaming model the algorithm runs in time $O(m (\log (n/\vepsilon)) \alpha_{m,n})$ where $\alpha_{m,n}$ is the inverse Ackermann function.
\end{lemma}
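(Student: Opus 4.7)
The plan is to combine weight--class bucketing with the linear--sketch cut sparsifier of \cite{AhnGM12PODS}, inflating each per--class sampling rate by a factor of $\chi^2$ so that the same sketch remains accurate once the true weights replace the estimates $\varsigma_{ij}$. This separates the construction of $\D$ (which uses only $\varsigma$) from the weighting phase (which uses the true $u_{ij}$ of the few stored edges).

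First I would bucket the edges by $\varsigma$: for each integer $k$ let $W_k=\{(i,j):2^k\le \varsigma_{ij}<2^{k+1}\}$. Since $u_{ij}\in[1/\L_0,\L_0]$ and $\varsigma_{ij}\in[u_{ij}/\chi,\chi u_{ij}]$, only $O(\log(\chi\L_0))$ buckets are nonempty, which absorbs into the $O(\log \L_0)\log^c n$ factors in the regime $\log\chi=O(\log n)$. For each bucket I would treat $(V,W_k)$ as an unweighted graph and run the AGM one--pass linear--sketch construction of a $(1\pm\vepsilon/2)$ cut sparsifier, but with each of the $O(\log n)$ Nagamochi--Ibaraki connectivity levels oversampled by a factor $\chi^2$. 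Because every entry of the sketch is a linear function of the $0/1$ incidence vector of $W_k$, the data structure $\D$ is computable from $\{\varsigma_{ij}\}$ alone, uses $\tilde O(n\chi^2\vepsilon^{-2})$ words per bucket, and adds up to the claimed $O(n\chi^2\vepsilon^{-2}(\log \L_0)\log^4 n)$ total.

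Once the true $u_{ij}$ of each stored edge is revealed, I would produce $H$ by replacing the nominal unit weight of every recovered edge by $u_{ij}$, divided by the sampling probability at the level at which the AGM procedure recovered it. For a fixed cut $(S,\bar S)$, the contribution from each bucket $W_k$ is a sum of independent bounded random variables whose expectation equals $u(\partial S\cap W_k)$, and the $\chi^2$ oversampling absorbs simultaneously (i) the factor--$\chi$ gap between true and $\varsigma$--based edge strengths that drive the sketch, and (ii) the factor--$\chi^2$ range of within--bucket weight ratios that appear after revelation. A standard Chernoff bound per cut, union--bounded over the $n^{O(\alpha)}$ $\alpha$--approximate minimum cuts at each strength level via Karger's cut--counting lemma, then yields the $(1\pm\vepsilon)$ guarantee for all cuts simultaneously.

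The main obstacle is closing the loop between choosing sampling rates from $\varsigma$ and reweighting from the not--yet--revealed $u$ without losing variance control; one cannot adaptively pick the Benczur--Karger strength thresholds because only $\varsigma$ is available when $\D$ is built. I would handle this by observing that within each $\varsigma$--bucket the true weights differ by at most a factor $2\chi^2$, so conservatively pretending every edge has the maximum possible weight when setting the sampling rate keeps every edge oversampled relative to a hypothetical strength--aware scheme with full knowledge of $u$, and the usual AGM analysis transfers. For the semi--streaming implementation each arriving edge triggers $O(\log(n/\vepsilon))$ linear--sketch updates, and the post--processing spanning--forest extractions at each of the $O(\log n)$ connectivity levels per bucket are driven by union--find, which is the source of the $\alpha_{m,n}$ factor in the running time.
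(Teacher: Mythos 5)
Your proposal follows essentially the same route as the paper's proof: bucket the edges by the promise values $\varsigma_{ij}$, run the \cite{AhnGM12PODS}/Bencz\'ur--Karger style construction per bucket with sampling probabilities inflated by $O(\chi^2)$ so that every edge is sampled at least as often as a strength-aware scheme with full knowledge of $u$ would require, reweight by $u_{ij}/p_e$ only after the stored edges' true weights are revealed, and combine the per-class sparsifiers by summation. The extra detail you supply (Chernoff plus Karger's cut-counting union bound, and the union-find source of the $\alpha_{m,n}$ factor) is exactly the standard analysis the paper delegates to its citations, so the two arguments coincide.
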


\begin{proof}
The proof will follow from the fact that existing algorithms for
sparsifier construction will allow us to implement the deferred
version.  We refer the reader to the excellent survey of Fung \etal
\cite{FungHHP11} for different algorithms for construction of
sparsifiers.
The main intuition (which in some form dates back to Nagamochi and Ibaraki \cite{NI92} and used in
\cite{BenczurK96}) is that weighted sparsifiers are constructed using 
an {\em unweighted graph} by (i) first determining the probability
$p_e$ of sampling the edge $e$ and then (ii) if $e$ is sampled then
assigning it a value/importance $w_e/p_e$. For a weighted graph, part (i)
is performed for each weight class in
$\left[(1+\varepsilon)^\ell,(1+\varepsilon)^{\ell+1}\right)$.  The
space requirement is $O(n \varepsilon^{-2} (\log \L_0) \log^4 n)$ when the
weights are in the range $[1/\L_0,\L_0]$. The probability is the inverse of
the connectivity across the edge $e$ --- and this can be achieved by a
layered subsampling of edges $G_1,G_2,\ldots $ where $G_i$ contains the edges in $G_{i-1}$ 
sampled with probability $1/2$.

We make the following observation: {\em If we have the promise weight
$\varsigma_{ij}$ satisfying $u_{ij} \leq \varsigma_{ij} \chi \leq
u_{ij} \chi^2$, then if we compute $p'_e$ based on $\varsigma_{ij}$
(applying \cite{BenczurK96} sampling on the weights $\varsigma_{ij}$)
and multiply $p'_e$ by $O(\chi^2)$ then we are guaranteeing that edge
$e$ is sampled with probability at least $p_e$. }

We repeat the process for weights $\varsigma_{ij}$ in the
range $[2^{\ell},2^{\ell+1})$ for different values of $\ell$, and increase the 
sampling probability by $O(\chi^2)$. This concludes the construction of the data structure $\D$.
Observe that the sum (allowing multiple edges
between two nodes)
 of sparsifiers of a set of graphs is a sparsifier of the sum of the
 graphs. We can split the graph using $\varsigma_{ij}$
 values, construct the deferred sparsifier of each
 graph, and then take the sum of the sparsifiers (keeping one edge between the vertices).  The space requirement increases by $\log \L_0$ factor. 
The Lemma follows. 
\end{proof}

\subsection{Condition~(d\ref{defcon}) in Definition~\ref{def:amenable}, \simp\ and Deferred cut-Sparsifiers}
\label{sec:calc}

We prove a general Lemma that holds for weighted nonbipartite graphs.

\begin{lemma}\label{lemmaun3w}
Suppose we are given a graph $G=(V,E)$ where the weight of every edge $(i,j)$ is
discretized as $\h{w}_{ij} = \h{w}_{k}=(1+\epsilon)^k$ for some $k \geq 0$. 
Let $E_{k}=\{(i,j)| \h{w}_{ij}=\h{w}_{k} \}$.
Define $\set(\ell)=\{ k | k \geq \ell \}$.
Given nonnegative $\bx=\{x_{i(k)}\}\{z_{U,\ell}\}$ and nonnegative $\bu=\{u_{ijk}\}$ define:
{\small
\begin{align*}
& F(\bx,\bu) = \sum_{i,\ell} x_{i(\ell)} \left( \sum_{k:k \in \set(\ell)} \left( \sum_{j: (i,j) \in E_{k}} u_{ijk} \right) \right)  + \sum_{U \in \O_s,\ell}  z_{U,\ell} \left( \sum_{k: k \in \set(\ell)} \left( \sum_{(i,j)\in E_{k},i,j \in U} 
u_{ijk} \right) \right)
\end{align*}
}
Further suppose that:
\begin{enumerate}[(a)]\parskip=0in
\item Given any $(i,j)$, the variables $u_{ijk},u^s_{ijk}$ are only nonzero for $k$ such that $\h{w}_k=\h{w}_{ij}$.
\item For each $k\geq 0$, we have $H_k=(V,E'_k,u^s)$ as a
  $(\epsilon/16)$-Cut-Sparsifier for $G_k=(V,\h{E}_k,u)$ where $(i,j) \in E'_k$ only if $(i,j) \in \h{E}_k$. Assume $u^s_{ijk}=0$ for $(i,j) \in \h{E}_k - E'_k$. 
\item \label{defcutcond}
Let the constraint $\G(\bu^s,\bx)$ to indicate the property that for any 
$z_{U,\ell}>0$ the $\bu^s$ satisfies 
{\small
\[ \sum_{k \in \set(\ell)} \left( \sum_{(i,j) \in \h{E}_k, i,j \in U} u^s_{ijk} \right) \geq \sum_{k \in \set(\ell)} \left( \sum_{i \in U}\sum_{j \not \in U, (i,j) \in \h{E}_k} u^s_{ijk} \right) \]
}
where the RHS is $ \C(U,u^s,\set(\ell))$ and the above is equivalent to
$\G(\bu^s,\bx)$, that is either $z_{U,\ell}=0$ or 
{\small
\[\sum_{k \in \set(\ell)} \left( \sum_{(i,j) \in \h{E}_k, i,j \in U} u^s_{ijk}  - \sum_{i \in U} \left( \sum_{j \not \in U, (i,j) \in \h{E}_k} u^s_{ijk} \right) \right) \]
}
is non-negative for all $U \in \O_s,\ell$
\end{enumerate}
Then for all non-negative $\bx=\{x_{i(k)}\},\{ z_{U,\ell}\}$
  {\small
\begin{align*}
& F(\bx,\bu^s) \geq \left(1-\frac{\epsilon}{8} \right) \sum_k \sum_{(i,j)\in \h{E}_k} \h{w}_k u^s_{ijk} 
\quad \implies  \quad F(\bx,\bu) \geq \left(1-\frac{\epsilon}{2} \right) \sum_k \sum_{(i,j) \in \h{E}_k} w_k u_{ijk}
\end{align*}
}
\end{lemma}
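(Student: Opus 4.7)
\medskip\noindent\textbf{Proof plan for Lemma~\ref{lemmaun3w}.}
The plan is to compare $F(\bx,\bu)$ to $F(\bx,\bu^s)$ and the RHS normalizer term-by-term using the cut-sparsifier property of each $H_k$, with the non-cut-like odd-set terms handled by the auxiliary constraint $\G(\bu^s,\bx)$. The vertex contributions and the global normalizer $W := \sum_k\sum_{(i,j)\in\hat E_k}\hat w_k u_{ijk}$ are straightforward: for each $i,k$ the inner sum $\sum_{j:(i,j)\in\hat E_k} u_{ijk}$ is the value of the cut $(\{i\},V\setminus\{i\})$ in $G_k$, and $\sum_{(i,j)\in\hat E_k}u_{ijk}$ is half the total $G_k$-volume, so each is preserved within a $(1\pm\epsilon/16)$ multiplicative factor by the $(\epsilon/16)$-cut sparsifier $H_k$. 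Summing over $k\in\set(\ell)$ and weighting by nonnegative $x_{i(\ell)}$ preserves the same multiplicative window, as does aggregating into $W$ and $W^s$.

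The main obstacle is the odd-set term, because $\sum_{(i,j)\in\hat E_k,\,i,j\in U}u_{ijk}$ is the \emph{inside-edge weight}, which is not a cut but the difference $(\text{volume of }U-\text{cut}(U))/2$. Fix $(U,\ell)$ with $z_{U,\ell}>0$. For each $k$ let $I_k$ denote the inside weight, $C_k$ the cut weight across $U$, and $D_k=\sum_{i\in U}\sum_{j:(i,j)\in\hat E_k}u_{ijk}$ the $U$-volume, so that $2I_k+C_k=D_k$. Let $\tilde I,\tilde C,\tilde D$ denote the sums over $k\in\set(\ell)$ for $\bu$, and $\tilde I^s,\tilde C^s,\tilde D^s$ for $\bu^s$. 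Since $\tilde D$ is a nonnegative sum of singleton cuts (over $k$ and $i\in U$) and $\tilde C$ is a nonnegative sum of cuts of $U$ across levels $k\in\set(\ell)$, the sparsifier property gives $|\tilde D-\tilde D^s|\leq (\epsilon/16)\tilde D$ and $|\tilde C-\tilde C^s|\leq (\epsilon/16)\tilde C$. From $2\tilde I=\tilde D-\tilde C$ I then obtain $\tilde I-\tilde I^s=\tfrac12\bigl((\tilde D-\tilde D^s)-(\tilde C-\tilde C^s)\bigr)$, and hence $|\tilde I-\tilde I^s|\leq \tfrac{\epsilon}{32}(\tilde D^s+\tilde C^s)$ up to lower-order factors.

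The key step, and the only place the $\G$-constraint is used, is to invoke condition~(\ref{defcutcond}): for $z_{U,\ell}>0$ it forces $\tilde I^s\geq \tilde C^s$. Combined with $\tilde D^s=2\tilde I^s+\tilde C^s$ this yields $\tilde D^s+\tilde C^s=2\tilde I^s+2\tilde C^s\leq 4\tilde I^s$, and therefore $|\tilde I-\tilde I^s|\leq \tfrac{\epsilon}{8}\tilde I^s$, so $\tilde I\geq (1-\epsilon/8)\tilde I^s$. This is the algebraic heart of the lemma: it shows that although inside-edge weights cannot be approximated by a cut sparsifier in general, the very constraint that prevents $z_{U,\ell}$ from being active when $U$ is ``cut-heavy'' inside $\bu^s$ ensures that $\tilde I^s$ itself dominates both $\tilde C^s$ and $\tilde D^s$, which in turn makes the unavoidable additive sparsifier error negligible relative to $\tilde I^s$.

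Combining everything, $F(\bx,\bu)\geq (1-\epsilon/8)F(\bx,\bu^s)$ since the vertex part enjoys the tighter $(1-\epsilon/16)$ bound and each active odd-set term is bounded below by $(1-\epsilon/8)$ of its $\bu^s$ counterpart (terms with $z_{U,\ell}=0$ contribute $0$ to both sides). Together with the hypothesis $F(\bx,\bu^s)\geq (1-\epsilon/8)W^s$ and $W^s\geq (1-\epsilon/16)W$, this yields $F(\bx,\bu)\geq (1-\epsilon/8)^2(1-\epsilon/16)W\geq (1-\epsilon/2)W$ via $(1-a)(1-b)\geq 1-a-b$, which is exactly the conclusion of the lemma and establishes condition~(d\ref{defcon}) of Definition~\ref{def:amenable} for the weighted nonbipartite matching formulation.
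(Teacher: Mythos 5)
Your proposal is correct and follows essentially the same route as the paper: single-vertex cuts handle the degree terms and the normalizer, the identity $2\cdot(\text{inside weight}) = (\text{volume of }U) - (\text{cut of }U)$ reduces the odd-set terms to sparsifier-preserved quantities, and the constraint $\G(\bu^s,\bx)$ (giving $\tilde C^s \le \tilde I^s$ for active $z_{U,\ell}$) is used exactly as in the paper to absorb the cut-approximation error into a $(1-\epsilon/8)$ multiplicative loss on $\tilde I^s$. The final chaining of the three $(1-\Theta(\epsilon))$ factors matches the paper's Equation sequence, with ample slack to reach $(1-\epsilon/2)$.
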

\begin{proof}
For a cut defined by the single vertex $i$ using the sparsifier $H_k$, we have:
{\small
\begin{align}
& \left(1-\frac{\epsilon}{16}\right) \left( \sum_{j:(i,j)\in \h{E}_k} u^s_{ijk} \right)  \leq \left(\sum_{j:(i,j)\in \h{E}_k}  u_{ijk} \right)  \leq  \left(1+\frac{\epsilon}{16}\right)\left( \sum_{j:(i,j)\in \h{E}_k} u^s_{ijk}\right) \label{jeqn} 
\end{align}
}
Multiplying the above by $\h{w}_k > 0$ and summing over $k$ we have:
{\small
\begin{align}
& \left(1-\frac{\epsilon}{16}\right) \sum_k \left( \sum_{j:(i,j)\in \h{E}_k} u^s_{ijk} \h{w}_{k} \right) \leq \sum_k \left(\sum_{j:(i,j)\in \h{E}_k} u_{ijk} \h{w}_{k} \right) \leq \left(1+\frac{\epsilon}{16}\right)\sum_k \left( \sum_{j:(i,j)\in \h{E}_k} u^s_{ijk} \h{w}_{k}\right) \label{veqn}
\end{align}
}
\noindent Therefore for $\epsilon \leq 1$, using Equation~\ref{veqn},
{\small 
\begin{align}
\sum_k & \left( \sum_{(i,j)\in \h{E}_k} \h{w}_k u^s_{ijk} \right)
= \frac12 \sum_k \sum_i \left( \sum_{j:(i,j)\in \h{E}_k} u^s_{ijk} \h{w}_k  \right) \geq  \frac12 \sum_i \frac{1}{1+\frac{\epsilon}{16}}\sum_k \left( \sum_{j:(i,j)\in \h{E}_k}
u^s_{ijk} \h{w}_{k} \right) \nonumber \\
&\geq \left(1-\frac{\epsilon}{8}\right) \frac12 \sum_i 
\sum_k \left( \sum_{j:(i,j)\in \h{E}_k} u_{ijk} \h{w}_{k} \right) 
= \left(1-\frac{\epsilon}{8}\right) \sum_k  \left( \sum_{(i,j) \in \h{E}_k} \h{w}_k u_{ijk} \right) \label{wsparsifier}
\end{align}
}
Now observe that for any $U,\ell$ with $z_{U,\ell}>0$,
{\small
\begin{align}
\sum_{k \in \set(\ell)} & \left( \sum_{(i,j)\in \h{E}_k, i,j \in U} u_{ijk} \right) & = \frac12 \left[ \sum_{i \in U} \sum_k \left( \sum_{j:(i,j)\in \h{E}_k} u_{ijk} \right) - \C(U,u,\set(\ell)) \right] \label{jj1} \\
\sum_{k \in \set(\ell)} & \left( \sum_{(i,j)\in \h{E}_k, i,j \in U} u^s_{ijk} \right) & = \frac12 \left[ \sum_{i \in U} \sum_k \left( \sum_{j:(i,j)\in \h{E}_k} u^s_{ijk} \right) - \C(U,u^s,\set(\ell)) \right] \label{jj2} 
\end{align}
}
Moreover 
$\left(1-\frac{\epsilon}{16} \right) \C(U,u^s,\set(\ell)) \leq \C(U,u,\set(\ell)) \leq \left(1+\frac{\epsilon}{16} \right) \C(U,u^s,\set(\ell))$; thus from Equations~\ref{jj1} and (second part of) \ref{jeqn},

{\small 
\begin{align*}
\sum_{k \in \set(\ell)}  \left( \sum_{(i,j)\in \h{E}_k, i,j \in U} u_{ijk} \right)   
& = \frac12 \left[ \sum_{i \in U} \sum_k \left( \sum_{j:(i,j)\in \h{E}_k} u_{ijk} \right) - \C(U,u,\set(\ell)) \right] \\
& \geq  \frac12  \sum_{i \in U} \left(1-\frac{\epsilon}{16}\right) 
\sum_k \left( \sum_{j:(i,j)\in \h{E}_k} u^s_{ijk} \right)  - \frac12\left(1+\frac{\epsilon}{16} \right) \C(U,u^s,\set(\ell))  \\
& \geq \left(1-\frac{\epsilon}{16}\right) 
\left[ \sum_{i \in U} 
\sum_k \left( \sum_{j:(i,j)\in \h{E}_k} u^s_{ijk} \right) - \C(U,u^s,\set(\ell)) \right]  - \frac{\epsilon}{16} \C(U,u^s,\set(\ell)) \\
& = \left(1-\frac{\epsilon}{16}\right) \sum_{k \in \set(\ell)} \left( \sum_{(i,j)\in \h{E}_k, i,j \in U} u^s_{ijk} \right) - \frac{\epsilon}{16} \C(U,u^s,\set(\ell)) \quad \mbox{(Equation~(\ref{jj2}))} \\
& \geq \left(1- \frac{\epsilon}{8}\right) \sum_{k \in \set(\ell)} \left( \sum_{(i,j)\in \h{E}_k, i,j \in U} u^s_{ijk} \right) \qquad \mbox{(Given condition~(\ref{defcutcond}))}
\end{align*}
}
\noindent Therefore we have
{\small
\begin{align}
& \sum_{U \in \O_s,\ell} z_{U,\ell}  \left( \sum_{k \in \set(\ell)} \left( \sum_{(i,j) \in \h{E}_k,i,j \in U} \hspace{-0.1in}
u_{ijk} \right) \right)  \geq \left(1- \frac{\epsilon}{8}\right) \sum_{U \in \O_s,\ell} z_{U,\ell} \left( \sum_{k \in \set(\ell)} \left( \sum_{(i,j) \in \h{E}_k,i,j \in U}  \hspace{-0.1in}
u^s_{ijk} \right) \right)
\label{partzu}
\end{align}
}
\noindent
Using the sparsifier $H_k$ for each $k$ at cuts defined by the single vertices, for $k \in \set(\ell)$ and $x_{i(\ell)}\geq 0$, 
{\small
\begin{align}
\sum_{k \in \set(\ell)} \left( \sum_{j: (i,j) \in \h{E}_k} u_{ijk} \right) & \geq \left(1- \frac{\epsilon}{16}\right) \sum_{k \in \set(\ell)} \left( \sum_{j: (i,j) \in \h{E}_k} u^s_{ijk} \right) 
\geq \left(1- \frac{\epsilon}{8}\right) \sum_{k \in \set(\ell)} \left( \sum_{j: (i,j) \in \h{E}_k} u^s_{ijk} \right) 
\label{part1minusk}
\end{align}
}
Equation~\ref{part1minusk} multiplied by $x_{i(\ell)}\geq 0$ and summed over different $i,\ell$ gives:
{\small
\begin{align*}
& \sum_{i,\ell} x_{i(\ell)} \sum_{k \in \set(\ell)} \left( \sum_{j: (i,j) \in \h{E}_k} u_{ijk} \right) \geq \left(1- \frac{\epsilon}{8}\right) \sum_{i,\ell} x_{i(\ell)} \sum_{k \in \set(\ell)} \left( \sum_{j: (i,j) \in \h{E}_k} u^s_{ijk} \right)
\end{align*}
}
which when added to Equation~\ref{partzu}, gives
$F(\bx,\bu)  \geq (1- \frac{\epsilon}{8}) F(\bx,\bu^s)$.
But then,
{\small
\begin{align*}
F(\bx,\bu)  & \geq \left(1- \frac{\epsilon}{8}\right) F(\bx,\bu^s) \geq  \left(1- \frac{\epsilon}{8}\right)^2 
\sum_k \left( \sum_{(i,j): \in \h{E}_k}  \h{w}_k u^s_{ijk} \right) \quad \mbox{(From the statement of Lemma)}\\
& \geq \left(1- \frac{\epsilon}{8}\right)^3 
\sum_k \left( \sum_{(i,j) \in \h{E}_k}  \h{w}_k u_{ijk} \right) \qquad \mbox{(Using Equation~\ref{wsparsifier})}
\end{align*}
}
\noindent The lemma follows. 
\end{proof}

\subsection{(Deferred) Sparsification in specific models}

\noindent{\bf The Semi-Streaming Model.}
We revisit the algorithm in \cite{AhnGM12PODS}, see also
\cite{FungHHP11}. The running time follows from inspection, with the
added twist that the number of edges in Step 13 of
Algorithm~\ref{alg:sparsify} decrease geometrically. The overall
running time is $O(m (\log (n/\vepsilon))\alpha_{m,n})$ across all
edges. Note that if the edge set is partitioned across different weight classes then the running time to simultaneously compute a sparsifier for each class can also be bound by the same term.

\begin{algorithm}
{\small
 \begin{algorithmic}[1]
  \STATE Let $G_0=G$. Let $G_i$ be obtained from 
   sampling edges from $G_{i-1}$ with prob. $1/2$.
  \STATE Let $k=O(\frac{1}{\eps^2}\log^2 n)$.
  \FOR{$i=1$ to $\log_2 m$ in parallel}
   \STATE Let $F^i_1=F^i_2=\cdots=F^i_k=\emptyset$. Initialize $k$ union find structures $UF^i_1,UF^i_2,\cdots,UF^i_k$. 
   \FOR{each $e=(u,v)\in G_i$}
    \STATE Find the smallest $j$ such that 
     $UF^i_j.find(u)\neq UF^i_j.find(v)$.
    \STATE Let $F^i_j\leftarrow F^i_j\cup\{e\}$ and perform $UF_j.union(u,v)$.
   \ENDFOR
  \ENDFOR
  \STATE Let $H=(V,\emptyset)$.
   \FOR{$j=1$ to $k-1$}
    \FOR{each $e=(u,v)\in F_j$}
     \STATE If $i'$ is the smallest value of $i \in [0,\log_2 m]$ s.t. $UF^i_k.find(u)\neq UF^i_k.find(v)$, insert $e$ in $H$ with weight $w_e/2^{i'}$. (Note that we get a weighted graph.)    \ENDFOR
     \ENDFOR
  \RETURN $H$.
 \end{algorithmic}
}
 \caption{A Streaming Construction of Graph Sparsification ~\label{alg:sparsify}}
\end{algorithm}

\noindent{\bf The MapReduce Model.}
\label{mapreddeferred}
\newcommand{\vecx}{\mathbf{x}}
We show how the algorithm mentioned in Lemma~\ref{twolevel} can be implemented in the MapReduce model using $O(1)$ rounds. We use the fact that the algorithm in \cite{AhnGM12PODS} uses linear sketches of the vertex-edge adjacency matrix (after assigning arbitrary, lexicographic, assignment of directions). 
We also assume that the central server has enough memory to allow computation over near linear, i.e., $O(n^{1+1/p})$ space.
The algorithm proceeds as follows:

\begin{enumerate}\parskip=0in
 \item {\em 1st Round Mapper: } For each edge $(u,v)$, the mapper generates
  $O(\polylog n)$ bits of randomness which will be used later for sketching.
  Let $\mathbf{R}$ be these random bits. The mapper outputs $(u,(u,v,R))$
  and $(v,(u,v,R))$.
 \item {\em 1st Round Reducer: } Each reducer obtains a list of edges
  incident on a single vertex $u$ and random bits corresponding to those
  edges. Using this list and random bits, construct $\ell_0$-sampling
  sketches for a vector $\vecx_u$. Let $\mathcal{S}\vecx_u$ be the sketches.
  Output $(u,\mathcal{S}\vecx_u)$.
 \item {\em 2nd Round Mapper: } Given $(u,\mathcal{S}\vecx_u)$, output
  $(1,(u,\mathcal{S}\vecx_u))$ so that all values are collected into one
  reducer.
 \item {\em 2nd Round Reducer: } Now we have all sketches $\mathcal{S}\vecx_u$
  in one machine. The rest of the algorithm are identical to the post-processing
  of the sparsification algorithm in the context of dynamic graph streams \cite{AhnGM12PODS}.
\end{enumerate}

\section{Proof of Lemma~5}
\label{proof:init-weighted-bip}

Recall \ref{lpbm} (applied to the rescaled weights $\hat{w}$) and consider the new system \ref{bip-lpbm}

{\small
\[
\left.
\begin{minipage}{0.6\textwidth}
\begin{align*}
& \h{\beta} = \min \sum_i b_i x_i + \sum_{U \in \O} z_U \left \lfloor \frac{\bnorm{U}}{2} \right \rfloor \tag{\ref{lpbm}}\\
& x_i + x_{j}  + \sum_{U \in \O, i,j \in U} z_U \geq \h{w}_{k}  & \forall (i,j) \in \h{E}_k \\ 
& x_i,z_U \geq 0  & \forall i \in V, U \in \O 
\end{align*} 
\end{minipage} \quad \right|
\begin{minipage}{0.4\textwidth}
\begin{align*}
& \beta^b = \min \sum_i b_i x_i  \lptag \label{bip-lpbm} \\
& x_i + x_j \geq \h{w}_k  \qquad \forall (i,j) \in \h{E}_k \\
& x_i \geq 0 
\end{align*}
\end{minipage} 
\]
}

Observe that $\h{\beta} \leq \beta^b$ and it is well known that $\beta^b \leq \frac32\h{\beta}$ (set $x'_i \leftarrow x_i + \sum_{U: i\in U} z_U$, and observe that $\{x'_i\}$ is feasible for \ref{bip-lpbm}). 
The proof of Lemma~\ref{init-weighted-bip} now follows Lemma~\ref{maxbmatch} and~\ref{restofit}. Lemma~\ref{maxbmatch}
extends the following:

\begin{lemma}\cite[Lemma 3.1]{LattanziMSV11} \label{maximal}
If we sample edges in a graph uniformly at random with probability $q$ -- 
then with probability at least $1 - e^{-n}$, for every subgraph with at 
least $2n/q$ edges we have chosen at least one edge.
\end{lemma}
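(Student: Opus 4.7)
The plan is to prove this by a straightforward union bound over vertex-induced subgraphs. Since the failure probability on the right-hand side is $e^{-n}$ and depends only on the number of vertices $n$ (not on $m$), the natural reading is that ``subgraph'' here refers to $G[V']$ for some $V' \subseteq V$; there are only $2^n$ such choices, which is precisely what keeps the union bound tractable.

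The key steps will be as follows. First, I would fix an arbitrary subset $V' \subseteq V$ and let $E' = E(G[V'])$ denote its induced edge set, conditioning on the case $|E'| \geq 2n/q$ (otherwise the lemma is vacuous for this subgraph). Since each edge is independently retained with probability $q$, the probability that no edge of $E'$ is sampled equals $(1-q)^{|E'|}$, which is at most $e^{-q|E'|} \leq e^{-2n}$ by the elementary inequality $1-q \leq e^{-q}$. Second, I would apply a union bound over the at most $2^n$ choices of $V'$: the probability that some induced subgraph with at least $2n/q$ edges avoids the sample is at most $2^n \cdot e^{-2n} = e^{(\ln 2 - 2)n} \leq e^{-n}$, completing the proof.

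The main potential obstacle is really a matter of interpretation rather than mathematics. A literal reading of ``every subgraph'' as every edge subset $E' \subseteq E$ with $|E'| \geq 2n/q$ would require a union bound over $\binom{m}{2n/q}$ terms, and an $e^{-n}$ tail cannot absorb that in general. Since the lemma is invoked from \cite{LattanziMSV11} in order to certify that every sufficiently dense vertex-induced substructure intersects the sample (which is exactly what is needed for the subsequent matching arguments), the induced-subgraph convention is both the natural and the only reading that is consistent with the stated bound. Once this convention is pinned down, nothing deeper than the two-line calculation above is needed, and the stronger $b$-matching extension stated as Lemma~\ref{maxbmatch} will build on this baseline.
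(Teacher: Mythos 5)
Your proof is correct and is essentially the argument behind Lemma~3.1 of \cite{LattanziMSV11}, which this paper cites without reproducing: fix a vertex subset, bound the failure probability by $(1-q)^{2n/q}\leq e^{-2n}$, and union-bound over the $2^n$ induced subgraphs. Your reading of ``subgraph'' as \emph{vertex-induced} subgraph is also the intended one -- it is exactly how the lemma is used in Lemma~\ref{maxbmatch}, where it is applied to the subgraph induced by the not-yet-saturated vertices.
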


\begin{lemma}\label{maxbmatch}
We can construct a maximal $b$--Matching for any set of edges using $O(n^{1+1/p})$ space and $O(p)$ rounds of sketching.
\end{lemma}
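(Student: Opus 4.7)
The plan is to adapt the iterative filtering framework of Lattanzi \etal~\cite{LattanziMSV11} to the $b$--matching setting, implementing each filtering round as a single round of the linear-sketch machinery of~\cite{AhnGM12PODS}.

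We maintain a partial $b$--matching $M$ (initially empty) and call a vertex \emph{saturated} when its $b$-capacity is exhausted by $M$, and an edge \emph{live} when at least one endpoint is non-saturated. Let $H_0$ be the input edge set and $H_i$ the set of live edges entering round $i+1$. In round $i$ we use one round of sketching to sample each edge of $H_{i-1}$ independently with probability $q_i = \min\{1, n^{1+1/p}/|H_{i-1}|\}$, producing a set $S_i$ of size $O(n^{1+1/p})$ with high probability. We ship $S_i$ to the central server, greedily extend $M$ to a maximal $b$--matching using the edges of $S_i$, update the saturated vertex set, and discard any edge whose both endpoints are saturated to form $H_i$.

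The shrinkage argument rests on the observation that no edge of $H_i$ can lie in $S_i$: if both endpoints of a sampled edge remained non-saturated after the greedy extension, that extension was not in fact maximal. Since $H_i$ is an induced subgraph of $H_{i-1}$ on the subset of non-saturated vertices, it is one of at most $2^n$ candidate subgraphs. A union bound combining the per-candidate-subgraph bound $(1-q_i)^{2n/q_i} \leq e^{-2n}$ underlying Lemma~\ref{maximal} then gives, with probability at least $1 - e^{-n}$,
\begin{equation*}
|H_i| \;\leq\; 2n/q_i \;=\; 2|H_{i-1}|/n^{1/p}.
\end{equation*}
Unrolling the recurrence yields $|H_{p-1}| \leq 2^{p-1} n^{1+1/p}$, which for constant $p$ is $O(n^{1+1/p})$ and hence fits into the central memory; one additional round ships $H_{p-1}$ entirely to the server and completes the greedy extension. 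A further union bound across the $O(p)$ rounds absorbs the per-round failure probability, and the resulting $M$ is easily checked to be maximal (every non-matching edge is either rejected by the extension with one endpoint already saturated, or has had both endpoints saturated throughout).

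Each of the $O(p)$ sampling rounds is implemented as a single round of adaptive sketching: once the saturated set from the previous round is communicated, every non-saturated vertex constructs fresh linear sketches (with independent randomness) of its live-neighborhood indicator vector and extracts samples via $\ell_0$-sampling in the manner of~\cite{AhnGM12PODS}. The main technical point, and the one requiring care, is converting these per-vertex sketches into the independent Bernoulli$(q_i)$ samples over live edges that Lemma~\ref{maximal} demands; this is handled either by applying $\ell_0$-sampling to a sketch of the edge-indicator vector of $H_{i-1}$ directly, or by summing the adjacency-vector sketches of the non-saturated vertices and breaking the double-count through a canonical orientation $u<v$. Since each round occupies $O(n^{1+1/p})$ central space (absorbing $\polylog n$ factors into the exponent's constant) and the number of rounds is $O(p)$, the claim follows.
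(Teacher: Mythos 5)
Your overall route is the same as the paper's: adapt the iterative filtering of Lattanzi \etal\cite{LattanziMSV11} to $b$--matching, implement each filtering round with one round of sketching, and drive the geometric shrinkage $|H_i|\leq 2|H_{i-1}|/n^{1/p}$ via Lemma~\ref{maximal}. However, as written your shrinkage step fails because of your definition of a \emph{live} edge. You declare an edge live when \emph{at least one} endpoint is non-saturated and discard only edges with both endpoints saturated; but your justification for ``no edge of $H_i$ can lie in $S_i$'' covers only edges with \emph{both} endpoints non-saturated. A sampled edge with exactly one saturated endpoint cannot be added to $M$, so maximality of the greedy extension says nothing about it, yet it remains live and hence sits in $H_i\cap S_i$. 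The bound $|H_i|\leq 2n/q_i$ can then genuinely fail: in a complete bipartite graph with $b_i$ large on one side and $b_j=1$ on the other, one round saturates every degree-one-capacity vertex while leaving every edge with a non-saturated endpoint on the other side, so $|H_1|=|H_0|=\Theta(n^2)$ and your final step would ship $\Theta(n^2)$ edges to the server. (Note also that under your definition $H_i$ is not a vertex-induced subgraph of $H_{i-1}$, so your own $2^n$-term union bound does not apply to it.) The fix is a one-word change---an edge is live only if \emph{both} endpoints are non-saturated, i.e., discard every edge with at least one saturated endpoint---after which $H_i$ is the subgraph induced on the non-saturated vertices, no edge of $H_i$ can survive a maximal extension of the sample, and your recurrence, round count, and space accounting go through exactly as in the paper.

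One further remark: the paper's proof highlights that each selected edge should be taken with multiplicity $\min\{b_i,b_j\}$, so that every chosen edge saturates at least one endpoint; this is what makes the correspondence with the maximal-matching analysis of \cite{LattanziMSV11} transparent (``unmatched vertex'' becomes ``unsaturated vertex'') and keeps the number of distinct edges in $M$ at most $n$. Your greedy maximal extension performs this implicitly, so nothing substantive is missing there, but it is worth making explicit.
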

\begin{proof} 
Lemma~\ref{maximal} was used in \cite[Lemma 3.2]{LattanziMSV11} to produce
a maximal matching in $O(p)$ phases when $q=O(n^{1+1/p}/|E_t|)$ for
each phase $t$ -- there $E_t$ was the number of edges remaining before
the start of phase $t$. There it was proven that for maximal matching, 
$|E_t|$, decreases by a factor of $n^{1/p}$ at each step. 
Now observe that for the uncapacitated $b$--matching whenever we
choose an edge $(i,j)$ we can increase its multiplicity to $\min \{
b_i,b_j\}$ and therefore saturate one or both the endpoints. Therefore
for every edge chosen we can ensure that one of the endpoints is not
in consideration any more. {\em As a consequence, the analysis of
\cite[Lemma 3.2]{LattanziMSV11}, instead of being limited to the set
of unmatched vertices, holds for the the set of vertices $i$ which do not
yet have degree equal to $b_i$.}  By the exact same analysis, the
number of edges decrease by a factor $n^{1/p}$ at each step and after
$O(1/p)$ steps are can guarantee that we have a maximal $b$--matching
to which we cannot add any more edges. Therefore the algorithm in
Section~3, \cite{LattanziMSV11} works, with the change that in step 3
we construct a maximal $b$--matching (and saturate one of the vertices
when we choose an edge), and step 4 considers the set of vertices $i$ which do not
yet have degree equal to $b_i$. We conclude with the observation that the primitives used in 
this algorithm can be implemented using sketches \cite{AhnGM12}.
\end{proof}

Lemma~\ref{restofit} now provides a complete proof of Lemma~\ref{init-weighted-bip}.

\begin{lemma}\label{restofit}
Let $\ae=2048 \epsilon^{-2}$ and $\epsilon_0 = 1 - \epsilon/256$.
Given uncapacitated $b$--matchings $\{M_k\}$, where each $M_k$ is maximal for the set of edges $\h{E}_k$, we can construct an initial 
nonnegative solution $\bx_0=\{x_i\},\{x_{i(k)}\}$ satisfying
$\frac{\beta^b}{\ae} \leq \beta_0=\bb^T\bx_0 = \sum_i b_ix_i \leq \frac{\beta^b}{4}$ and
{\small
\begin{align*}
& x_i(k) + x_{j(k)} \geq (1-\epsilon_0) \h{w}_{k}  & \forall (i,j) \in \h{E}_k \\ 
& x_{i(k)} \leq \h{w}_k & \forall i,k \\
& x_i - x_{i(k)} \geq 0  & \forall i,k
\end{align*}
}
$\{M_k\}$ can be computed 
using $n_{init}= O(n^{1+1/(2p)} \epsilon^{-1} \log B)$ space and $O(p)$ rounds of sketching.
\end{lemma}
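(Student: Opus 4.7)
The plan is to construct $\bx_0$ directly from the ``saturation pattern'' of the $\{M_k\}$.  First I invoke Lemma~\ref{maxbmatch} on each of the $L+1 = O(\epsilon^{-1}\log B)$ weight classes $\h{E}_k$ in parallel; each invocation needs $O(n^{1+1/(2p)})$ space and $O(p)$ rounds, matching the claimed $n_{\text{init}}$ and round complexity.  For each vertex $i$, let $k^*_i = \max\{k : i \text{ saturated in } M_k\}$ (undefined if no such $k$, in which case the corresponding variables are set to $0$).  Set $x_i = (\epsilon/256)\h{w}_{k^*_i}$ and $x_{i(k)} = (\epsilon/256)\h{w}_k$ for $k \leq k^*_i$ (and $0$ otherwise).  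The three elementary constraints are immediate: $x_{i(k)} \leq \h{w}_k$ holds because $\epsilon/256 < 1$; $x_i \geq x_{i(k)}$ holds by monotonicity of $\h{w}_k$ in $k$; and $x_{i(k)} + x_{j(k)} \geq (1-\epsilon_0)\h{w}_k = (\epsilon/256)\h{w}_k$ on each $(i,j)\in\h{E}_k$ because maximality of $M_k$ forces at least one endpoint (say $i$) to be saturated in $M_k$, giving $k^*_i \geq k$ and thus $x_{i(k)} = (\epsilon/256)\h{w}_k$ on its own.

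For the lower bound $\beta_0 \geq \beta^b/\ae$, take any maximum $b$-matching $y^*$ of value $\beta^b$.  For each edge $(i,j) \in \h{E}_k$ with $y^*_{ij} > 0$, maximality of $M_k$ forces $k^*_i \geq k$ or $k^*_j \geq k$, so $\h{w}_k \leq \h{w}_{k^*_i} + \h{w}_{k^*_j}$.  Summing with the multiplicities $y^*_{ij}$ and using $\sum_j y^*_{ij} \leq b_i$ yields $\beta^b \leq \sum_i b_i \h{w}_{k^*_i}$; hence $\beta_0 = (\epsilon/256)\sum_i b_i \h{w}_{k^*_i} \geq (\epsilon/256)\beta^b \geq \beta^b\epsilon^2/2048 = \beta^b/\ae$ for $\epsilon \leq 1$.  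Observe that this chain of inequalities is the ``dual to the max matching'' argument applied level by level, and it is exactly this argument that dictates the $(\epsilon/256)$-scaling in $x_i$.

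The main obstacle is the upper bound $\beta_0 \leq \beta^b/4$.  Accounting by level and using $\sum_{i : k^*_i = k} b_i \leq 2|M_k|$ (sum of $b$-capacities of saturated vertices of $M_k$ is bounded by the total degree $2|M_k|$) gives $\sum_i b_i \h{w}_{k^*_i} \leq 2\sum_k \h{w}_k |M_k|$, reducing the task to showing $\sum_k \h{w}_k|M_k| = O(\beta^b/\epsilon)$.  The naive per-level bound $\h{w}_k|M_k|\leq \beta^b$ (since each $M_k$ is itself a feasible $b$-matching on $\h{E}$) yields only the weak $L\beta^b$.  To sharpen, I pair each $M_k$ with an optimal LP dual $x^*$ of the bipartite $b$-matching relaxation, writing $\h{w}_k|M_k| \leq \sum_{(i,j)\in M_k}(x^*_i+x^*_j) = \sum_i \deg_{M_k}(i)\, x^*_i$, and then exploit the geometric structure $\h{w}_k = (1+\epsilon)^k$ together with $\sum_k \h{w}_k \leq \h{w}_L(1+\epsilon)/\epsilon$: the weighted total $\sum_k \h{w}_k|M_k|$ is dominated by the top $O(1/\epsilon)$ levels, each of mass at most $\beta^b$, and telescopes to $O(\beta^b/\epsilon)$.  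With the $(\epsilon/256)$-scaling this delivers $\beta_0 \leq \beta^b/4$ with room to spare in the constant of $\ae$.  The stated space and round complexity come directly from running the $L+1$ invocations of Lemma~\ref{maxbmatch} in parallel.
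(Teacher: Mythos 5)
Your construction of $\bx_0$ and the verification of the three pointwise constraints coincide with the paper's (the paper sets $x_{i(k)}=r\h{w}_k$ with $r=\epsilon/256$ precisely at the levels where $i$ is saturated in $M_k$; your variant, setting it for all $k\le k^*_i$, changes nothing). Your lower bound $\beta_0\ge(\epsilon/256)\beta^b$ is correct and is in fact a cleaner route than the paper's: noting that $\{\h{w}_{k^*_i}\}_i$ is feasible for the dual LP \ref{bip-lpbm} immediately gives $\sum_i b_i\h{w}_{k^*_i}\ge\beta^b$, whereas the paper reaches the same conclusion through its group decomposition and Claim~\ref{a0002}.

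The gap is in the upper bound $\beta_0\le\beta^b/4$, which you correctly identify as the crux. Two problems. First, pairing each level with the optimal dual $x^*$ gives only $\h{w}_k|M_k|\le\sum_i\deg_{M_k}(i)x^*_i\le\sum_i b_ix^*_i=\beta^b$, i.e.\ exactly the per-level naive bound you set out to beat; summing over levels returns $(L+1)\beta^b$, because a vertex may be saturated at every level and so $\sum_k\deg_{M_k}(i)$ can be as large as $(L+1)b_i$. Second, the assertion that $\sum_k\h{w}_k|M_k|$ is ``dominated by the top $O(1/\epsilon)$ levels'' is false. Take $b_i\equiv1$, weights $2^t$ for $t=0,\dots,T$ with $2^T\approx n/2$, and between vertices $2i-1,2i$ place one edge of weight $2^t$ for every $t$ with $i\le n/2^{t+1}$; each level class is itself a matching, so $M_t$ is all of it and $\h{w}|M_t|=n/2$ for every one of the $\Theta(\log B)$ populated levels, while the top $O(1/\epsilon)$ levels contain only $O(1)$ of them. (Here $\beta^b=\Theta(n\log n)$ as well, so the target inequality $\sum_k\h{w}_k|M_k|=O(\beta^b/\epsilon)$ still holds --- it is true in general --- but not for the reason you give, and your argument supplies no other.) The missing ingredient is the paper's merging step: build one global maximal $b$--matching $\M$ by inserting the edges of the $M_k$ from the heaviest level downward, and charge every edge of an $M_k$ that fails to enter $\M$ to the endpoint that blocks it. A vertex blocks at most $b_i$ edges per level, and the levels below the one at which it became saturated in $\M$ have geometrically decaying weights, so the total blocked weight is $O(\epsilon^{-1}\,weight(\M))=O(\epsilon^{-1}\beta^b)$. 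This is the content of Claim~\ref{a0001} together with Equation~(\ref{base0002}), and it is what actually yields $\sum_k\h{w}_k|M_k|=O(\beta^b/\epsilon)$ and hence $\beta_0\le\beta^b/4$ after the $\epsilon/256$ scaling.
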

\begin{proof}
We construct a maximal $b$--matching (using sketches) for each of the
levels using Lemma~\ref{maxbmatch}. 
If in the maximal $b$--matching
corresponding to level $k$ we have a vertex $i$ with degree $b_i$ then
we set $x_{i(k)} = rw_k$ where $r<1$ is a number computed
shortly. Finally Set $x_i =
\max_k x_{i(k)}$. 
Observe that we satisfy each edge
in the constraints $\bA$ to $x_{i(k)}+ x_{j(k)} \geq r w_k$. 
In the following we  bound $r,a_2$.

\begin{definition}
Recall that the edges are discretized to levels where the weight of an edge in level $k$ is $w_k = (1+\epsilon)^k$. Let $M_k$ be a maximal $b$--matching using only the edges in level $k$.
Let the edges in $M_k$ be $E(M_k)$.
\end{definition}

\noindent 
Consider an edge by edge accounting where each edge pays $r$ to the
endpoint $i$ which satisfies degree equal to $b_i$ in the maximal
matching for the level $k$. By maximality, one endpoint must satisfy
the condition and there are at most two endpoints; therefore

{\small
\begin{equation}
2r |E(M_k)| w_k \geq \sum_i b_i x_{i(k)} \geq r |E(M_k)| w_k 
\label{base0001}
\end{equation}
}
\begin{definition}
Define the highest $\lceil \log_{1+\epsilon} 2 \rceil$
levels to be group $1$, the next $\lceil \log_{1+\epsilon} 2 \rceil$ levels to be group $2$ and so on. Observe that the
edges in alternate groups go down by at least a factor of $2$ or more
precisely a number in the range $[2,2(1+\epsilon))$. Note for $\epsilon \leq \frac12$ we have $\frac{1}{2\epsilon} \leq \lceil \log_{1+\epsilon} 2 \rceil \leq \frac{2}{\epsilon}$.
\end{definition}
\begin{definition}
For each group $t$ we construct a maximal $b$--matching by first
adding the edges from the maximal $b$--matching corresponding to 
the highest level and then proceeding
downwards. Let this be $\M^G_t$; let its weight be $weight(\M^G_t)$.

Now consider an overall maximal $b$--matching where we first add $M_1$
and then add edges from $\M^G_t$ for increasing $t$, while maintaining
the maximality and feasibility (that degree of $i$ is at most
$b_i$). Let this final matching be $\M$.
\end{definition}
\begin{claim}
\label{a0001}
$\frac18 \sum_t weight(\M^G_t) \leq weight(\M)$.
\end{claim}
\begin{proof} Observe that the weights of the edges in the 
alternate groups go down by a factor of $2$. 
If we cannot add an edge from $M_t$ due to the fact that its
endpoint $i$ already satisfies its degree to be $b_i$, then we say 
that $i$ blocked that edge. Since the edges in alternate groups go down by powers of two, the total weight of edges blocked by $i$ is at most $2 \times (1 + \frac12 + \frac14 + \cdots ) \leq 4$ times the weight of the edges 
attached to $i$. Since each edge has two endpoints the lemma follows.
\end{proof}
\begin{claim}
\label{a0002}
$\beta^b \leq (1+\epsilon) 2 \sum_k w_k |E(M_k)|$
\end{claim}
\begin{proof}
Consider the edges in the optimum solution that are in level $k$. These edges either are present in the maximal $b$--Matching for this level, or there must be an endpoint $i$ for which we have $b_i$ edges adjacent already. Since there can be at most $b_i$ edges in the optimum solution that are in level $k$ and attached to $i$, the number of these edges is at most $2$ times the number of edges in the maximal $b$--matching for this level. Each edge in the optimum solution can lose a factor $(1+\epsilon)$ in the rounding down step and the lemma follows from summing over all $k$.
\end{proof}

\noindent {(Continuing proof of Lemma~\ref{init-weighted-bip})}
Using similar reasoning similar to the proof of Claims~\ref{a0001} and
\ref{a0002}, since each edge from $M_k$ 
which is not added to the corresponding $\M^G_t$ is blocked by edges
of higher weight (at one of the two vertices) for each $k$ in group $t$ we have:

{\small
\begin{equation}
\sum_{k:k \mbox{ in group } t} w_k |E(M_k)| \geq weight(\M^G_t) \geq \frac12 w_k |E(M_k)| \label{base0002}
\end{equation}
}

Using Equation~(\ref{base0001}) and the second part of Equation~(\ref{base0002}) we have:
{\small
\begin{align*}
\sum_{\stackrel{\tiny i: x_{i(k)}\neq 0}{\tiny k \mbox{ in group } t }} b_i x_i 
& \leq 2r \sum_{k: k \mbox{ in group } t } |E(M_k)| w_k  \leq  2r \frac1{\lceil \log_{1+\epsilon} 2 \rceil} 2 weight(\M^G_t) \leq \frac{8r}{\epsilon} weight(\M^G_t)
\end{align*}
}

\noindent and therefore $\sum_i b_i x_i \leq \frac{8r}{\epsilon} \sum_t weight(M_t) \leq \frac{64r}{\epsilon} weight(\M)  \leq \frac{64r}{\epsilon} \beta^b$ since $\M$ is a feasible solution. Therefore we set $r=\epsilon/256$ to ensure $\sum_i b_i x_i \leq \beta^b/4$.
Now using Equation~(\ref{base0001}) and the first part of Equation~(\ref{base0002}) we have:

{\small
\begin{align*}
\sum_{\stackrel{\tiny i: x_{i(k)}\neq 0}{\tiny k \mbox{ in group } t }}
b_i x_i \geq r \max_{k \mbox{ in group } t } |E(M_k)| w_k  
\geq r \frac1{\lceil \log_{1+\epsilon} 2 \rceil} \sum_{k \mbox{ in group } t} w_k |E(M_k)| \geq \frac{r\epsilon}{2} \sum_{k \mbox{ in group } t} w_k |E(M_k)| 
\end{align*}
}
\noindent Using Claim~\ref{a0002}, $\sum_i b_i x_i \geq \frac{\epsilon^2}{512} 
\sum_k w_k |E(M_k|  \geq \frac{\epsilon^2}{1024(1+\epsilon)} \beta_b$ and 
$a_2 =2048 \epsilon^{-2}$ proves the lemma.
\end{proof}

\section{Proof of Lemma~6}
\label{proof:oolemma}

Before proving Lemma~\ref{oolemma} We prove two structural characterizations of the matching polytope. Theorem~\ref{thm:bmatchinglaminar} was known for the case when all $b_i=1$, \cite[page 441--442]{Schrijver03}. Theorem~\ref{thm:exists}, to the best of our knowledge, was not known earlier.

\begin{theorem}
\label{thm:bmatchinglaminar} 
There exists an optimal solution of \ref{bm-dual}, the dual to the exact linear programming formulation \ref{lpbm} for the maximum $b$--Matching 
such that $L=\{U:z_U\neq 0\}$ is a laminar family, for any weights $w_{ij}$ and any graph $G=(V,E)$. Recall $b_i$ are integers, $\bnorm{U}=\sum_{i \in U} b_i$ and $\O=\{U \subseteq V| \bnorm{U} \mbox{ is odd }\}$.
{\small
\[\hspace*{-0.5cm}
\left.
\begin{minipage}{0.45\textwidth}
\begin{align*}
& \beta^*= \max \sum_{(i,j) \in E} w_{ij}y_{ij}\\
&\sum_{(i,j) \in \h{E}} y_{ij} \leq b_i  \qquad \forall i \tag{\ref{lpbm}}\\
&\displaystyle \sum_{(i,j) \in E:i,j \in U} y_{ij} \leq \left \lfloor \frac{\bnorm{U}}{2} \right \rfloor  \quad \forall U \in \O \\
& y_{ij} \geq 0 \qquad \qquad \forall (i,j) \in E
\end{align*}
\end{minipage} \quad \right | 
\begin{minipage}{0.55\textwidth}
\begin{align*}
& \beta^*=\min \sum_i b_i x_i + \sum_{U \in \O} \left\lfloor \bnorm{U}/2 \right\rfloor z_U \\
& \displaystyle x_i+x_j+\sum_{U \in \O; i,j\in U} z_U\geq w_{ij} \quad \forall (i,j)\in E \\
  & x_i, z_U \geq 0 \tag{\ref{bm-dual}} 
\end{align*}
\end{minipage} 
\]
}
\end{theorem}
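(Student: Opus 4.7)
The strategy is an uncrossing argument on a carefully chosen optimal dual. By the Pulleyblank--Edmonds theorem (total dual integrality of the $b$--matching polyhedron) the primal \ref{lpbm} has an integral optimum, so strong LP duality gives an optimal $(x,z)$ for \ref{bm-dual}; among all optima pick one minimizing the potential $\Phi(z)=\sum_{U\in\O}z_U\,g(|U|)$ with $g(k)=k(2n-k)$, which is strictly concave and strictly increasing on $\{1,\ldots,n\}$. I claim the support $L=\{U:z_U>0\}$ is laminar; otherwise two sets $U_1,U_2\in L$ cross. Set $\delta=\min(z_{U_1},z_{U_2})>0$. Since $\bnorm{U_1},\bnorm{U_2}$ are odd, $\bnorm{U_1\cap U_2}+\bnorm{U_1\cup U_2}=\bnorm{U_1}+\bnorm{U_2}$ is even, so $\bnorm{U_1\cap U_2}$ and $\bnorm{U_1\cup U_2}$ share a common parity, while $\bnorm{U_1\setminus U_2}=\bnorm{U_1}-\bnorm{U_1\cap U_2}$ and $\bnorm{U_2\setminus U_1}=\bnorm{U_2}-\bnorm{U_1\cap U_2}$ share the opposite parity. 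I will produce from $(x,z)$ another optimal dual with strictly smaller $\Phi$, the desired contradiction.

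\smallskip
\emph{Case~(a): $\bnorm{U_1\cap U_2}$ odd, so $U_1\cap U_2,U_1\cup U_2\in\O$.} Apply the Edmonds uncross: decrease $z_{U_1},z_{U_2}$ by $\delta$ and increase $z_{U_1\cap U_2},z_{U_1\cup U_2}$ by $\delta$. Classifying the endpoints of each edge among $U_1\cap U_2$, $U_1\setminus U_2$, $U_2\setminus U_1$, and $V\setminus(U_1\cup U_2)$, a short case check shows every dual constraint LHS stays the same or rises (it rises precisely on the ``cross'' edges $i\in U_1\setminus U_2,j\in U_2\setminus U_1$, where only $z_{U_1\cup U_2}$ now contributes), so feasibility is preserved. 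Because all four relevant norms are odd, $\lfloor\bnorm{U}/2\rfloor=(\bnorm{U}-1)/2$, and $\bnorm{U_1\cap U_2}+\bnorm{U_1\cup U_2}=\bnorm{U_1}+\bnorm{U_2}$, the net objective change is $0$. Strict concavity of $g$ applied to the equality $|U_1\cap U_2|+|U_1\cup U_2|=|U_1|+|U_2|$ with $|U_1\cap U_2|<|U_i|<|U_1\cup U_2|$ gives $\Phi(z')<\Phi(z)$, contradicting the choice of $(x,z)$.

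\smallskip
\emph{Case~(b): $\bnorm{U_1\cap U_2}$ even, so $U_1\cap U_2,U_1\cup U_2\notin\O$, while $U_1\setminus U_2,U_2\setminus U_1\in\O$.} The non-standard move is: lower $z_{U_1},z_{U_2}$ by $\delta$, raise $z_{U_1\setminus U_2},z_{U_2\setminus U_1}$ by $\delta$, \emph{and} raise $x_i$ by $\delta$ for every $i\in U_1\cap U_2$. Case-splitting on edge endpoints: edges internal to $U_1\cap U_2$ gain $2\delta$ from $x$ and lose $2\delta$ from $z_{U_1}+z_{U_2}$; an edge from $U_1\cap U_2$ to $U_1\setminus U_2$ gains $\delta$ at $x$ and loses $\delta$ at $z_{U_1}$ (symmetrically for $U_2\setminus U_1$); an edge internal to $U_1\setminus U_2$ swaps its $z_{U_1}$ mass for the new $z_{U_1\setminus U_2}$ mass (symmetrically for $U_2\setminus U_1$); edges between $U_1\setminus U_2$ and $U_2\setminus U_1$, or from $U_1\cap U_2$ to outside $U_1\cup U_2$, see the LHS weakly rise; all other edges are unaffected. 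Feasibility holds. Using $\bnorm{U_1\setminus U_2}+\bnorm{U_2\setminus U_1}-\bnorm{U_1}-\bnorm{U_2}=-2\bnorm{U_1\cap U_2}$ and that all four $\bnorm$'s in the $z$-term are odd, the $z$-objective drops by exactly $\delta\bnorm{U_1\cap U_2}$, which is precisely canceled by the $x$-objective rise $\delta\sum_{i\in U_1\cap U_2}b_i=\delta\bnorm{U_1\cap U_2}$. Strict monotonicity of $g$ together with $|U_1\setminus U_2|<|U_1|$ and $|U_2\setminus U_1|<|U_2|$ (since $U_1\cap U_2\neq\emptyset$) gives $\Phi(z')<\Phi(z)$, again contradicting minimality; hence $L$ is laminar. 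The principal obstacle is Case~(b): the parity of $\bnorm{U_1\cap U_2}$ can push $U_1\cap U_2$ and $U_1\cup U_2$ out of $\O$, and the naive symmetric-difference uncross on its own distorts the objective by $\delta\bnorm{U_1\cap U_2}$; the delicate part of the proof is finding the simultaneous $x$-adjustment on $U_1\cap U_2$ that exactly absorbs that distortion while checking feasibility across all seven edge-type cases.
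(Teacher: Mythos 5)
Your proof is correct and follows essentially the same uncrossing argument as the paper (which adapts Schrijver's proof for ordinary matching): the same two parity cases, the same symmetric-difference move with the compensating $x$-increase on $U_1\cap U_2$ in the even-intersection case, and the same intersection/union move in the odd case. The only difference is cosmetic bookkeeping — you break ties among optimal duals with a single strictly increasing, strictly concave potential $\sum_U z_U\,g(|U|)$, whereas the paper uses a two-stage lexicographic choice (first minimize $\sum z_U\bnorm{U}$, then maximize $\sum z_U\bnorm{U}^2$); both certify that each uncrossing step strictly improves the extremal choice.
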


\begin{proof}
 We follow the proof in \cite{Schrijver03}, page 441-442, 
 which considered standard matching with $b_i=1$ for all $i$. Let
 ${\mathfrak S}$ be the set of optimal solutions of \ref{bm-dual}.
 Let ${\mathfrak S}_2 \subseteq {\mathfrak S}$ be the subset of
 optimal solutions which {\bf minimize} \mbox{\boldmath $\sum_{z_U\neq
 0} z_U \bnorm{U}$} among the optimum solutions.  We choose a solution
 from ${\mathfrak S}_2$ that {\bf maximizes \boldmath $\sum_{z_U\neq
 0} z_U \bnorm{U}^2$}. This is a {\bf three} step choice.  We show
 that $L$ can be made a laminar family.

\medskip\noindent
 Suppose that $L$ is not a laminar family. Then, there exist $A,B\in
 L$ such that (a) $z_A,z_B\neq 0$ and (b) $A\cap B \neq \emptyset$,
 $A$ or $B$.  There are two cases: $\bnorm{A\cap B}$ is either even or
 odd.  In both cases, we change the solution (while preserving the
 objective value and the feasibility of the solution).
 \begin{enumerate} \item {\bf $\bnorm{A\cap B}$ is even:} Let
 $z=\min\{z_A,z_B\}$.  We reduce $z_A$ and $z_B$ by $z$ and increase
 $z_{A-B}$, $z_{B-A}$ by $z$.  Observe that both $A-B, B-A$ are
 nonempty and odd. We now increase every $x_i$ for $i \in A\cap B$ by
 $z$.  These changes preserve the feasibility and the objective value
 of the solution. On the other hand, they decrease $\sum_{z_U\neq 0}
 z_U\bnorm{U}$ because we replace $A$ and $B$ by $A-B$ and $B-A$
 (obviously, $\bnorm{A-B}<\bnorm{A}$ and $\bnorm{B-A}<\bnorm{B}$).
 Therefore, it contradicts the fact that the chosen solution belongs
 to ${\mathfrak S}_2$.  \item {\bf $\bnorm{A\cap B}$ is odd:} Let
 $z=\min\{z_A,z_B\}$.  We reduce $z_A$ and $z_B$ by $z$ and increase
 $z_{A\cup B}$, $z_{A\cap B}$ by $z$. Again, these changes preserve
 the feasibility and the objective value of the solution. Since
 $\bnorm{A\cup B}+\bnorm{A\cap B}=\bnorm{A}+\bnorm{B}$ and
 $\bnorm{A\cup B}>\bnorm{A},\bnorm{B}$, $\bnorm{A\cup
 B}^2+\bnorm{A\cap B}^2 \geq \bnorm{A}^2+\bnorm{B}^2$.  So
 $\sum_{z_U\neq 0} z_U \bnorm{U}^2$ increases which contradicts the
 fact that the solution maximizes $\sum_{z_U\neq 0} z_U \bnorm{U}^2$.
 \end{enumerate} Therefore, $L$ is a laminar family. 
\end{proof}

\noindent The next theorem is a surprising characterization of the $b$--Matching polytope, and is the core of the proof of Lemma~\ref{oolemma}.

\begin{theorem}\label{thm:exists}
For $0 < \epsilon \leq \frac{1}{16}$ suppose we have $G=(V,\h{E})$ such that the weight  $\h{w}_{ij}$ of every edge $(i,j) \in \h{E}$ is of the form $\h{w}_k=(1+\epsilon)^k$ for $k\geq 0$, and thus $\h{E}_k=\{(i,j)| \h{w}_{ij} =\h{w}_k\}$ and $\h{E} = \cup_k \h{E}_k$. Let $\h{w}_L$ be the largest weight.

Then $\t{\beta}\leq (1+\epsilon)\h{\beta}$ where $\t{\beta},\h{\beta}$ are defined by \ref{dual1} and \ref{bm-dual-discrete} respectively.
{\small
\[\hspace*{-0cm}
\left.
\begin{minipage}[c]{0.55\textwidth}
\begin{align*}
& \displaystyle\t{\beta}=\min \sum_i b_i x_i + \sum_{U \in \O_s} \left\lfloor \frac{\bnorm{U}}{2} \right\rfloor \sum_{\ell} z_{U,\ell} \lptag \label{dual1} \\
& \displaystyle  x_{i(k)} + x_{j(k)} + \sum_{\ell \leq k} \left(\sum_{U \in \O_s; i,j\in U} z_{U,\ell} \right) 
\geq \h{w}_k \\
& \qquad \qquad\qquad \qquad\qquad \qquad \qquad \qquad  \forall (i,j)\in E_k\\
& \displaystyle  2 x_{i(k)} + \sum_{\ell \leq k} \left(\sum_{U \in \O_s:i \in U} z_{U,\ell}\right) \leq 3\h{w}_k \quad \forall  i,k \\
& x_i - x_{i(k)} \geq 0 \qquad \forall i,k\\
&  x_{i(k)}, x_i, z_{U,\ell} \geq 0 
\end{align*}
\end{minipage}\quad \right|\quad
\begin{minipage}[c]{0.4\textwidth}
\begin{align*}
& \displaystyle\h{\beta}=\min \sum_i b_i x_i + \sum_{U \in \O} \left\lfloor \frac{\bnorm{U}}{2} \right\rfloor z_U \\
& \displaystyle x_i+x_j+\sum_{U \in \O; i,j\in U} z_U\geq \h{w}_{ij}\quad \forall (i,j)\in \h{E} \\
  & x_i, z_U \geq 0 \lptag \label{bm-dual-discrete} \\
\end{align*}
\end{minipage} 
\]
}
\end{theorem}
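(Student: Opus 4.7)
The plan is to construct, from an optimal laminar dual of \ref{bm-dual-discrete}, an explicit feasible solution to \ref{dual1} with objective at most $(1+\epsilon)\h{\beta}$. By Theorem~\ref{thm:bmatchinglaminar}, fix an optimal $(x^*, z^*)$ of \ref{bm-dual-discrete} whose support $L = \{U : z_U^* > 0\}$ is laminar; fix also a complementary primal optimum $y^*$ of \ref{lpbm}.

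First I reduce to $L \subseteq \O_s$. For every $U \in L$ with $\bnorm{U} > 4/\epsilon$, redistribute $z_U^*$ by incrementing $x_i^*$ by $z_U^*/2$ for each $i \in U$ and zeroing out $z_U^*$. Dual feasibility of \ref{bm-dual-discrete} is preserved because each edge constraint's LHS only grows. The per-set objective shift is $z_U^*/2$, while its original contribution $\lfloor \bnorm{U}/2 \rfloor z_U^* \geq (2/\epsilon - 1/2) z_U^*$ was already part of $\h{\beta}$, so the relative increase is at most $\epsilon/(4-\epsilon) \leq \epsilon/3$. The modified $(x^{new}, z^{new})$ is dual-feasible with laminar support $L^{new} \subseteq \O_s$ and objective at most $(1+\epsilon/3)\h{\beta}$.

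Next I build the layered solution. Set $x_i = x_i^{new}$ and $x_{i(k)} = \min(x_i^{new}, \h{w}_k)$, which preserves $x_i \geq x_{i(k)}$ and the vertex part of the objective; distribute each $z_U^{new}$ over levels $z_{U,\ell}$ with $\sum_\ell z_{U,\ell} = z_U^{new}$, preserving the odd-set part of the objective. For each $U \in L^{new}$ define $T(U) = \sum_{U' \supseteq U,\, U' \in L^{new}} z_{U'}^{new}$, which is well-defined by laminarity. As a first cut, place $z_U^{new}$ at the single level $\ell(U) = \min\{\ell : \h{w}_\ell \geq T(U)\}$. Since the $L^{new}$-sets containing any vertex $i$ form a chain $U_1 \subsetneq \cdots \subsetneq U_r$ with $T(U_1) \geq \cdots \geq T(U_r)$ and correspondingly $\ell(U_1) \geq \cdots \geq \ell(U_r)$, the layered mass at $(i,k)$ equals $T(U_{m^*(k)}) \leq \h{w}_k$ for $m^*(k) = \min\{m : T(U_m) \leq \h{w}_k\}$. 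Combined with $2x_{i(k)} \leq 2\h{w}_k$ this delivers the width bound $3\h{w}_k$, and the edge constraint is immediate in the regime $T(U_{m_0}) \leq \h{w}_k$ at the smallest $L^{new}$-set $U_{m_0}$ containing both endpoints, since then the layered mass matches the original $\sum_{U \ni i,j} z_U^{new}$.

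The main obstacle is the edge constraint at $(i,j) \in \h{E}_k$ with $i,j \in U_{m_0}$ in the regime $T(U_{m_0}) > \h{w}_k$: the single-level allocation delivers layered mass only $T(U_{m^*(k)})$ with $m^*(k) > m_0$, possibly below $\h{w}_k$, whereas the original LP was satisfied vacuously because $T(U_{m_0})$ alone exceeded $\h{w}_k$. To resolve this, I would split each $z_U^{new}$ across several levels, guided by the levels of matched edges of $y^*$ inside $U$. Primal tightness at $U$ (the fact that $z_U^{new} > 0$ implies $\sum_{(a,b) \in U \cap \h{E}} y^*_{ab} = \lfloor\bnorm{U}/2\rfloor > 0$) guarantees that matched edges exist inside every $U \in L^{new}$, and complementary slackness at each such matched edge $(a,b)$ at level $\kappa$ pins $\sum_{U' \in L,\, a,b \in U'} z_{U'}^* \leq \h{w}_\kappa$. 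Processing $L^{new}$ bottom-up and allocating $z_U^{new}$ proportionally to the level profile of matched edges inside $U$, while spending the available width slack of $3\h{w}_k$ (versus $\h{w}_k$) to absorb small overages, yields enough layered mass at every level witnessed by an edge of $\h{E}_k \cap U_{m_0}$. The hardest step will be verifying that this allocation is globally consistent, which I expect to rely crucially on the nested structure of $L^{new}$ so that the per-chain allocations along different laminar branches do not interfere and no set exceeds its budget $z_U^{new}$.
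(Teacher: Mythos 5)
Your first two steps coincide with the paper's Algorithm~\ref{alg:transform}: the same redistribution $x_i \leftarrow x_i + z^*_U/2$ to eliminate sets outside $\O_s$ (your accounting of the $\leq \epsilon/(4-\epsilon)$ relative loss is fine and in fact sharper than needed), and the same truncation $x_{i(k)}=\min\{\h{x}_i,\h{w}_k\}$. You also correctly diagnose exactly where the naive single-level placement $\ell(U)=\min\{\ell:\h{w}_\ell\geq T(U)\}$ breaks: when $T(U_{m_0})>\h{w}_k$ the mass of the innermost set containing both endpoints escapes above level $k$ and the edge constraint can genuinely fail. But the resolution you sketch --- splitting $z_U$ according to the level profile of matched edges of a primal optimum $y^*$ and invoking complementary slackness --- is not carried out, and you say so yourself. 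That unverified step \emph{is} the theorem; as written the proposal has a genuine gap. It also points in a direction with its own problems: after your first redistribution step the dual is no longer optimal, so complementary slackness with $y^*$ no longer pins anything exactly, and making the per-branch allocations consistent across the laminar family is precisely the difficulty you have deferred.

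The missing idea in the paper is purely greedy and needs no primal information. Process the sets $U$ with $\h{z}_U>0$ in \emph{decreasing} order of $\bnorm{U}$ (so along any chain of the laminar family, outermost first), and distribute each $\h{z}_U$ over levels by water-filling: starting from the first level not yet saturated by the ancestors of $U$, pour mass into level $k$ only up to the cap $\sum_{\ell\leq k}\sum_{U'\supseteq U} z_{U',\ell}\leq \h{w}_k$, then move to level $k+1$. Laminarity guarantees the saturation level is the same for all $i\in U$ when $U$ is reached, so the procedure is well defined, and it yields the invariant
\begin{align*}
\sum_{\ell\leq k}\Bigl(\sum_{U\ni i} z_{U,\ell}\Bigr)=\min\Bigl\{\h{w}_k,\ \sum_{U\ni i}\h{z}_U\Bigr\}\qquad\forall i,k.
\end{align*}
Restricted to the prefix of the chain consisting of sets containing both $i$ and $j$, the same identity gives $\sum_{\ell\leq k}\sum_{U\ni i,j}z_{U,\ell}=\min\{\h{w}_k,\sum_{U\ni i,j}\h{z}_U\}$, so in your ``hard regime'' the cumulative mass at levels $\leq k$ is exactly $\h{w}_k$ and covers the edge by itself, while in the easy regime it equals the full original mass and feasibility of $(\h{x},\h{z})$ finishes the argument; the same invariant bounds $\sum_{\ell\leq k}\sum_{U\ni i}z_{U,\ell}$ by $\h{w}_k$, which together with $2x_{i(k)}\leq 2\h{w}_k$ gives the width constraint $3\h{w}_k$. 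Replacing your single-level placement (and the speculative complementary-slackness repair) by this cumulative water-filling closes the gap.
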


\begin{proof}
We begin with an optimum solution $\{x^*_i\},\{z^*_U\}$ of
\ref{bm-dual-discrete} with the weights $\h{w}_{ij}$ where the sets
$U$ such that $z_U >0$ define a laminar family -- such an optimum
solution is guaranteed by Theorem~\ref{thm:bmatchinglaminar}. 
Note that it is important that Theorem~\ref{thm:bmatchinglaminar} did not require integral edge weights.
We have $\h{\beta} = \sum_i b_i x^*_i + \sum_{U \in \O} \left\lfloor
\bnorm{U}/2 \right\rfloor z^*_U$.
Moreover, as a consequence of that optimality note that 
$\sum_{U: i \in U} z^*_U \leq \h{w}_L$ -- because otherwise we can decrease the 
$z^*_U$ corresponding to the smallest $U$ with $i \in U,z^*_U >0$ without 
affecting the feasibility. 
We produce a feasible solution for \ref{dual1} using the sequence of transformations given by Algorithm~\ref{alg:transform}.
Observe that as a consequence of Step~(\ref{firststep-transform}) of Algorithm~\ref{alg:transform} we have:

{\small
\begin{align}
& \h{x}_i + \h{x}_j + \sum_{U \in \O_s,i,j \in U} \hat{z}_U \geq \h{w}_{ij}, \qquad  \h{x}_i \leq \h{w}_L \quad \mbox{and}, \quad\sum_{U: i \in U} \hat{z}_U \leq \h{w}_L \label{hatxz}
\end{align}
}
A consequence of Step~\ref{xstep} is that $x_{i(k)} = \min \{\h{w}_k, \h{x}_i\}$ for all $i$. Therefore for bipartite graphs suppose we have an edge $(i,j) \in \h{E}_k$. 
{\small \[ x_{i(k)}  + x_{j(k)}  \geq \min\{\h{w}_k, \h{x}_i + \h{x}_j \}\geq
\h{w}_{ij}\]
}
Which implies that the scaling in Step~(\ref{finalstep-transform}) satisfies
$x_{i(k)}+x_{j(k)} \geq \h{w}_ij$. Moreover note that $x_{i(k)} \leq \h{w}_k$.
Now $\sum_i x_i b_i = \sum_i \h{x}_i b_i $ and for bipartite graphs the $z_{U}$ variables are absent, 
hence $\sum_i b_i x_i \leq \h{\beta}$. Therefore for bipartite graphs the lemma is true.
We know focus on nonbipartite graphs and 
make the following observations about the assignment to $z_{U,\ell}$:

\begin{algorithm}[t]
{\small
\begin{algorithmic}[1]\parskip=0.05in
\STATE ({\sc Remove Large Sets}) For every $U \not \in \O_s$, for every $i \in U$ we set $x^*_i
\leftarrow \min\{ x^*_i + z^*_U/2, \h{w}_L \}$ and $z^*_U \leftarrow 0$. This preserves the feasibility of
\ref{bm-dual-discrete} and increases the objective by at most $(1+\epsilon)$. 
Let this transformed set of variables be $\h{x}_i,\h{z}_U$. Recall that the largest weight class is $\h{w}_L$.
\label{firststep-transform}

\STATE ({\sc Produce $x'_{i(k)}$ values}) Set $x_i = \h{x}_i$. For each $k$ 
if $x_i > \h{w}_k$ set $x_{i(k)} = \h{w}_k$ otherwise (for $x_i \leq \h{w}_k$) set $x_{i(k)}=x_i$. 
\label{xstep}
For {\bf bipartite} graphs the algorithm skips to Step~(\ref{finalstep-transform}).

\STATE ({\sc Order Sets})  
Order the $U$ with $\h{z}_U>0$ in {\bf decreasing} order of $\bnorm{U}$ -- note that this is a laminar family. We will assign the $z_{U,\ell}$ 
corresponding to this order -- and we will assign all $z_{U,\ell}$ before proceeding to the next set $U'$ in this order.\label{orderstep}

\STATE Initialize all $z_{U,\ell}=0$.
Initialize $\sati=-1$ for all $i \in V$. 

\vspace{0.05in}
A value of $\sati \geq 0$
corresponds to the largest $k \geq 0$ such that $\sum_{\ell \leq k}
(\sum_{U:i \in U} z'_{U,\ell} )\geq \h{w}_k$. Notice that when we consider $U$ according to Step~(\ref{orderstep})
$\sum_{\ell \leq k} (\sum_{U':i \in U'} z_{U',\ell})$ is same for all 
$i \in U$ and the $\sati$ values are the same for all $i \in U'$. Define that value as $\satu$.

\FOR {the next set $U$ according to the order in Step~(\ref{orderstep})} \label{iterstart}
\STATE $k=\satu+1$.
\WHILE {$(k \leq L)$ and $(\h{z}_U>0)$} 
\IF {$\h{z}_U + \sum_{\ell \leq k} (\sum_{U:i \in U} z_{U,\ell} )\geq \h{w}_k$}
\STATE Set $z'_{U,k} \leftarrow \h{w}_k - \sum_{\ell \leq k} (\sum_{U:i \in U} z_{U,\ell} )$.
\STATE $\h{z}_U \leftarrow \h{z}_U - z_{U,k}$.
\STATE $\sati\leftarrow k$ for all $i \in U$ and $k \leftarrow k+1$.
\ELSE
\STATE  Set $z_{U,k} = \h{z}_U$, $\h{z}_U \leftarrow 0$.
\ENDIF
\ENDWHILE 
\ENDFOR\label{iterend}
\STATE Return $\{x_{i(\ell)}\},\{x_i\},\{z_{U,\ell}\}$. \label{finalstep-transform}
\end{algorithmic}
}
\caption{Producing a feasible solution for \ref{dual1}\label{alg:transform}}
\end{algorithm}

{\small
\[ \sum_{\ell \leq k} (\sum_{U:i \in U} z'_{U,\ell} ) =\min\{ \h{w}_k, \sum_{U:i\in U} \h{z}_U \} \]
}
Therefore for an edge $(i,j) \in E_k$:
{\small
\begin{align*}
& x_{i(k)} + x_{j(k)} + \sum_{\ell \leq k} \left( \sum_{U \in \O_s,i,j \in U} z'_{U,\ell} \right)   \geq 
\min \left \{ \h{w}_k, \h{x}_i + \h{x}_j + \sum_{U \in \O_s, i,j \in U} \h{z}_U \right \} \geq
\h{w}_{k}
\end{align*}
}
Now from Step~(\ref{firststep-transform}) :
{\small
\begin{align*}
\sum_i x_i b_i + \sum_{U \in O_s} \left \lfloor \frac{\bnorm{U}}{2} \right \rfloor \sum_{\ell} z_{U,\ell} 
& \leq 
\sum_i \h{x}_i b_i + \sum_{U \in O_s} \left \lfloor \frac{\bnorm{U}}{2} \right \rfloor \h{z}_U  \leq (1+\epsilon)\h{\beta}
\end{align*}
}
The theorem follows.
\end{proof}

\begin{nlemma}{\ref{oolemma}}
For any $0<\epsilon \leq \frac1{16}$ ,suppose we are given a subgraph $G=(V,E')$
where $|V|=n$ and the weight  $\h{w}_{ij}$ of every edge $(i,j) \in E'$ is of the form $\h{w}_k=(1+\epsilon)^k$ for $k\geq 0$, and thus $E'_k=\{(i,j)| \h{w}_{ij} =\h{w}_k\}$ and $E' = \cup_k E'_k$.
If we are also given a 
feasible solution to the system \ref{nicerlp} 
then we find an integral solution of \ref{nicerlp0}of weight $(1-2\epsilon)\beta$ 
using $O(|E'| \poly (\epsilon^{-1}, \log n))$ time using edges of $E'$.

{\small
\begin{align*}
& \sum_{k} \h{w}_k \left( \sum_{(i,j) \in E'_k} y_{ij} - 3 \sum_{i}  \mu_{ik} \right) \geq (1-\epsilon)\beta \\
&\sum_{(i,j) \in E'_k} y_{ij} - 2 \mu_{ik}\leq y_{i(k)}\forall i,k\\
&\sum_k y_{i(k)} \leq b_i  \qquad \forall i \tag{\ref{nicerlp}}\\
&\displaystyle  \sum_{k \geq \ell} \left( \sum_{(i,j) \in E'_k:i,j \in U} y_{ij} - \sum_{i \in U} \mu_{ik} \right)\leq \left \lfloor \frac{\bnorm{U}}{2} \right \rfloor  \quad \forall U \in \O_s, \forall \ell  \\
& y_{ij},y_{i(k)},\mu_{ik} \geq 0 \qquad \qquad \forall (i,j) \in E', i,k
\end{align*}
}
\end{nlemma}
\begin{proof}
Consider the modification of \ref{nicerlp} and the maximum $b$--matching restricted to $E'$:
{\small
\[\hspace*{-0.5cm}
\begin{minipage}{0.65\textwidth}
\begin{align*}
& \t{\beta}(E')=\max \sum_{k} \h{w}_k \left( \sum_{(i,j) \in E'_k} y_{ij} - 3 \sum_{i}  \mu_{ik} \right) \\
& \sum_{(i,j) \in E'_k} y_{ij} - 2 \mu_{ik} \leq y_{i(k)}  \qquad \forall i,k \\
&\sum_{k} y_{i(k)} \leq b_i  \qquad \forall i \lptag \label{nicerlp001}\\
&\displaystyle  \sum_{k \geq \ell} \left( \sum_{(i,j) \in E'_k:i,j \in U} y_{ij} - \sum_{i \in U} \mu_{ik} \right)\leq \left \lfloor \frac{\bnorm{U}}{2} \right \rfloor  \quad \forall U \in \O_s, \forall \ell  \\
& y_{ij},\mu_{ik},y_{i(k)} \geq 0 \qquad \qquad \forall (i,j) \in E', i,k
\end{align*}
\end{minipage} \left| \quad
\begin{minipage}{0.35\textwidth}
\begin{align*}
& \h{\beta}(E')= \max \sum_{(i,j) \in E'} \h{w}_{ij}y_{ij}\\
&\sum_{(i,j) \in E'} y_{ij} \leq b_i  \qquad \forall i \tag{\ref{nicerlp0}($E'$)}\\
&\displaystyle \sum_{(i,j) \in E':i,j \in U} y_{ij} \leq \left \lfloor \frac{\bnorm{U}}{2} \right \rfloor  \quad \forall U \in \O \\
& y_{ij} \geq 0 \qquad \qquad \forall (i,j) \in E'
\end{align*}
\end{minipage} \right.
\]
}
Observe that the dual of \ref{nicerlp001} is \ref{dual1} whereas the dual of \ref{nicerlp0} 
is \ref{bm-dual-discrete}, restricted to the edge sets $E'$. 
Therefore by {\bf weak duality} the existence of a solution of \ref{nicerlp} 
will imply $\t{\beta}(E')  \geq (1-\epsilon)\beta$. Therefore applying Theorem~\ref{thm:exists}
$\h{\beta}(E') \geq (1-\epsilon)\beta/(1+\epsilon) \geq  (1-2\epsilon)\beta$. 
Recall that we can use near linear time algorithms for
weighted matching \cite{DuanP10} or for weighted $b$--matching \cite{AhnG14} to find a
solution of size $(1-\epsilon)\h{\beta}(E')$ in time $O(|E(Y)|
\poly(\epsilon^{-1}\log n))$. That solution has value at least
$(1-2\epsilon)\beta$ and proves Lemma~\ref{oolemma}.
\end{proof}

\section{Proof of Lemma~9}
\label{proof:weighted-oracle-helper}

\begin{nlemma}{\ref{weighted-oracle-helper}}
In time $O(n \poly(\log n,\frac1\epsilon))$ time we can find a collection $\K(\ell)$ such that every pair of sets are mutually disjoint and  
for $U \in \K(\ell) \in \O_s$
{
\small
\begin{align}
& \frac{(1-\epsilon/4)\beta}{\gamma} \sum_{k \geq \ell} \left( \sum_{(i,j) \in \h{E}_k :i,j \in U}  u^s_{ijk} - \sum_{i \in U} \varrho \bar{\zeta}_{ik} \right) \geq \left\lfloor \frac{\bnorm{U}}{2} \right\rfloor \label{yeseqn0}
\end{align}
}
And for $U' \cap  \left( \cup_{U \in \K(\ell)} U \right) =\emptyset$,
{\small \begin{align}
&\frac{(1-\epsilon/4)\beta}{\gamma} \sum_{k \geq \ell} \left( \sum_{(i,j) \in \h{E}_k:i,j \in U'} u^s_{ijk} - \sum_{i \in U'} \varrho \bar{\zeta}_{ik} \right) \leq \left\lfloor \frac{\bnorm{U'}}{2} \right\rfloor + \frac{\epsilon}{2} \label{noeqn0}
\end{align}
}
\end{nlemma}
\begin{proof}
We first consider a graph problem where we collect a maximal
collection of dense small odd-sets.  This abstraction will also be
used later for the case when $w_{ij} \neq 1$. 
Note we will be upper bounding $n\poly(\kappa,\log n)$ by
$n^{1+1/p}$ asymptotically. We first prove:

\begin{lemma}
\label{maximaldense2}
Given $G=(V,E)$, non-negative numbers $q_{ij}$ on edges and 
$\hat{q}_i$ on nodes, such that 
\begin{enumerate}[({A}1)]\parskip=0in
\item If $b_i = 1$ then $\hq_i \geq C$.
\item $\sum_{j} q_{ij} \leq \hq_i$ for all $i$.
\item Any odd-set $U$ with $\bnorm{U} > 4/\epsilon$ satisfies $ \displaystyle \sum_{i \in U} \left( \hq_i - \sum_j q_{ij} \right) > 1$
\end{enumerate}
using space $O(n\epsilon^{-2})$ and $O(m)+O(n\poly(\epsilon^{-1},\log n))$ time and a single pass over the list of edges we can find a collection $\L$ of mutually disjoint sets such that
\begin{enumerate}[(i)]\parskip=0in
\item Every odd-set $U \in \L$ satisfies 
$ \displaystyle \sum_{(i,j):i,j\in U} q_{ij}\geq \frac{1}{2}\left(\sum_{i\in
    U}\hq_i- 1\right)
$
\item Every odd set $U \not \in \L$ either intersects with a set in $\L$ or satisfies
$ \displaystyle
\sum_{(i,j):i,j\in U} q_{ij}\leq \frac{1}{2}\left(\sum_{i\in
    U}\hq_i- (1-\epsilon)\right)
$
\end{enumerate}
\end{lemma}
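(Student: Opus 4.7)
The plan is to use assumption (A3) to restrict attention to constant-sized odd-sets, build a compact per-vertex sketch in a single pass, and finally extract $\L$ via a local-search greedy procedure.

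First, by (A3) any odd-set $U$ with $\bnorm{U} > 4/\epsilon$ satisfies $\sum_{i \in U}(\hat{q}_i - \sum_j q_{ij}) > 1$. Combined with the bound $\sum_{i \in U}\sum_j q_{ij} \geq 2 \sum_{(i,j):\, i,j \in U} q_{ij}$, this yields $\sum_{i \in U}\hat{q}_i - 2\sum_{(i,j):\, i,j \in U} q_{ij} > 1$, so such $U$ can neither qualify for (i) (it misses the density bound by margin $>1$) nor witness a violation of (ii) (its slack already exceeds $1 > 1-\epsilon$). Hence only $U$ with $\bnorm{U} \leq 4/\epsilon$ are relevant, which forces $|U| \leq 4/\epsilon$ since each $b_i \geq 1$.

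Second, in the single edge-stream pass I maintain, for every vertex $i$: the exact value $s_i = \sum_j q_{ij}$ and an $O(\epsilon^{-2})$-word sketch (heavy-hitters plus an $\ell_1$ estimator) that lists $i$'s $\poly(\epsilon^{-1})$ heaviest incident edges and answers ``what is $q_{ij}$?'' queries with small additive error. The per-vertex defect $d_i := \hat{q}_i - s_i \geq 0$ is recorded exactly. For any candidate $U$ of size $k \leq 4/\epsilon$, the $\binom{k}{2}$ internal edges yield $\sum_{(i,j):\, i,j\in U} q_{ij}$ via $O(k^2)$ sketch queries, whose cumulative additive error is $o(\epsilon)$ -- comfortably inside the $\epsilon/2$ slack of (ii).

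Third, I extract $\L$ greedily. Since any $U \in \L$ obeys $\sum_{i\in U} d_i + 2 \sum_{i\in U,\, j \notin U} q_{ij} \leq 1$, every $i \in U$ has $d_i \leq 1$ and $U$'s boundary weight is at most $1/2$. Enumerate seed vertices $v$ with $d_v \leq 1$; from each seed grow candidate sets of cardinality $\leq 4/\epsilon$ by repeatedly adjoining one of the still-available heaviest neighbors of the current set (and recursing on the adjoined vertex). This produces $\poly(\epsilon^{-1})$ candidates per seed and $O(n) \cdot \poly(\epsilon^{-1},\log n)$ total work. For each candidate whose $\bnorm{\cdot}$ is odd, the density test (i) is evaluated on the sketches; passing candidates are added to $\L$ and their vertices removed from consideration.

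The main obstacle is verifying property (ii) upon termination. Suppose some untouched odd $U'$ $\epsilon$-violates (ii); then $\sum_{i\in U'} d_i + 2\sum_{i \in U',\, j \notin U'} q_{ij} < 1-\epsilon$, so every $i \in U'$ has $d_i \leq 1-\epsilon$ and the boundary of $U'$ has $q$-weight $\leq (1-\epsilon)/2$. Thus each $i \in U'$ has nearly all of its $s_i$-mass on edges to $U' \setminus \{i\}$, and since $|U' \setminus \{i\}| \leq 4/\epsilon - 1$, a pigeonhole argument places at least one edge from $i$ into $U'$ among $i$'s heaviest $\poly(\epsilon^{-1})$ neighbors. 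Iterating this observation, $U'$ is generated as a candidate during the local search from some root $v \in U'$, and the sketch precision guarantees the density test would have accepted it -- contradicting its absence from $\L$. The $\epsilon$ slack in (ii) is calibrated precisely to absorb the sketch error.
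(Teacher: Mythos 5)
There is a genuine gap in the extraction step (your third and fourth paragraphs). Your first reduction is fine: (A3) together with $\sum_{i\in U}\sum_j q_{ij}\ge 2\sum_{(i,j):i,j\in U}q_{ij}$ does show that any odd set with $\bnorm{U}>4/\epsilon$ automatically satisfies (ii), so only sets of size $O(1/\epsilon)$ matter. But the heart of the lemma is to find a \emph{maximal disjoint family} of dense small odd sets --- equivalently, of odd sets whose ``cut plus defect'' $\sum_{i\in U}(\hq_i-\sum_j q_{ij})+2\sum_{i\in U, j\notin U}q_{ij}$ is below (roughly) $1$ --- and to certify that every remaining odd set is non-dense. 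This is precisely the minimum odd-cut problem of Padberg--Rao, and your greedy local search does not solve it. Your coverage argument claims that every violating $U'$ is generated from some seed by repeatedly adjoining heaviest neighbors, with only $\poly(\epsilon^{-1})$ candidates per seed. Two things fail. First, the bound on the boundary of $U'$ is \emph{absolute} (total outgoing $q$-weight at most $(1-\epsilon)/2$), not relative to each $s_i$: a vertex $i\in U'$ with small $\hq_i$ (possible when $b_i\ge 2$, since (A1) only constrains vertices with $b_i=1$) can have \emph{all} of its $q$-mass on edges leaving $U'$ and still contribute little to the slack, so $i$ is unreachable by any edge-following search inside $U'$. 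Second, even for sets whose every vertex has a heavy internal edge, ``some internal edge is among the top $\poly(\epsilon^{-1})$'' only bounds the branching factor per step; reconstructing the specific subset $U'$ requires exploring all branches, giving $\poly(\epsilon^{-1})^{O(1/\epsilon)}$ candidates per seed, and even then there is no guarantee the search tree contains $U'$ as a node rather than merely intersecting it. The $\epsilon/2$ slack in (ii) absorbs estimation error on a \emph{given} candidate set; it cannot absorb a failure to enumerate the candidate at all.

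The paper's proof avoids enumeration entirely: it discretizes, building an unweighted multigraph $H$ on $V\cup\{s\}$ with $\lfloor q_{ij}8\epsilon^{-3}\rfloor$ parallel copies of each edge and slack edges to a special vertex $s$ bringing each degree up to $\lceil \hq_i 8\epsilon^{-3}\rceil$ (feasible by (A2)), so that dense odd sets become exactly odd sets of cut at most $\kappa=\lfloor 8\epsilon^{-3}\rfloor$ not containing $s$. It then invokes Lemma~12 of \cite{AhnG14}, which combines the Padberg--Rao odd-minimum-cut approach with approximate Gomory--Hu trees to return, in $O(n\poly(\kappa,\log n))$ time, a maximal disjoint family of such small odd cuts; translating the cut thresholds back through the discretization yields (i) and (ii), with the additive $\epsilon$ in (ii) absorbing the rounding loss. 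If you want to repair your argument, you need to replace the local search with this (or an equivalent) odd-minimum-cut subroutine; the per-vertex heavy-hitter sketches could in principle feed such a subroutine, but they do not substitute for it.
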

\begin{proof}
We create a graph $H$ on the vertex set $V\cup\{s\}$ where we have $\lfloor q_{ij} 8 \epsilon^{-3} \rfloor$ parallel edges between $i$ and $j$. After the edges have been added we add edges between $i$ and $s$ till the degree of $s$ is 
$\lceil \hat{q}_i 8 \epsilon^{-3} \rceil$ --- this is feasible due to (A2). We use 
a lemma from \cite{AhnG14} which combines the minimum odd-cut approach in 
\cite{PadbergR82} with the construction of approximate Gomory-Hu Trees \cite{lowstcuts,BhalgatHKP07}. 

\begin{lemma}\cite[Lemma~12]{AhnG14}
\label{quotelemma}
 Given an unweighted graph $G$ with parameter $\kappa$ 
  and a special node $s$, in time $O(n\poly(\kappa,\log n))$
we can identify a collection $\L$ of mutually disjoint 
  odd-sets which (i) do not contain $s$  (ii) define cut of at most $\kappa$ in $G$
  and (iii) every other odd set not containing $s$ and with a cut less
  than $\kappa$ intersects with a set in $\L$.
\end{lemma}

We now apply Lemma~\ref{quotelemma} with $\kappa = \lfloor 8 \epsilon^{-3} \rfloor$. Note that any set which is returned in $\L$ satisfies condition (i) easily since
{\small
\begin{align*}
& \sum_{(i,j):i,j \in U} \lfloor q_{ij} 8 \epsilon^{-3} \rfloor \geq \frac12 \left( \sum_{i \in U} \lceil \hat{q}_i 8 \epsilon^{-3} \rceil -  \lfloor 8 \epsilon^{-3} \rfloor \right) \implies 
\sum_{(i,j):i,j \in U} q_{ij} \geq \frac12 \left( \sum_{i \in U} \hat{q}_i  -   1 \right)
\end{align*}
}
For any odd set which is not returned and does not intersect the any of the sets returned, the cut after discretization is at least $\kappa$;
{\small
\begin{align*}
& \sum_{(i,j):i,j \in U} \lfloor q_{ij} 8 \epsilon^{-3} \rfloor \leq \frac12 \left( \sum_{i \in U} \lceil \hat{q}_i 8 \epsilon^{-3} \rceil -  \lfloor 8 \epsilon^{-3} \rfloor \right) \implies \sum_{(i,j):i,j \in U} q_{ij}  \leq \frac{{{4/\epsilon} \choose 2}}{8\epsilon^{-3}} + 
\frac12 \left( \sum_{i \in U} \hat{q}_i  + \frac{4/\epsilon}{8\epsilon^{-3}}  -  1 \right)
\end{align*}
}
which gives us Condition (ii).
\end{proof}

\noindent Observe that (A1) and (A3) imply that singleton vertices or very large sets cannot be present in $\L$. We are now ready to prove Lemma~\ref{weighted-oracle-helper}.
The algorithm is simple:

\begin{enumerate}\parskip=0in
\item Set $\h{q}_i(\ell) = b_i + \frac{2(1-\epsilon/4)\varrho\beta}{\gamma}\sum_{k \geq \ell} \bar{\zeta}_{ik}$.
\item For $i,j$ consider the $k$ such that $u^s_{ijk}\neq 0$. Set $q_{ij}(\ell)= 
\frac{(1-\epsilon/4)\beta}{\gamma} u^s_{ijk}$ if $k \geq \ell$ and $0$ otherwise. \label{maddsop-non}
\item Let $C=1$ and apply Lemma~\ref{maximaldense2} and get a collection of mutually disjoint sets, which we denote by $\K(\ell)$.
\end{enumerate}

Equation~\ref{canapply}, proved just after the description of Algorithm~\ref{alg:micro3}, applied to $S=\{k| k\geq \ell\} $ (multiplied by $(1-\epsilon/4)\beta/\gamma)$ gives us
{\small
\begin{align*}
& (1-\epsilon/4) b_i \geq \sum_{j} q_{ij}(\ell) - \sum_{k \geq \ell }\frac{2(1-\epsilon/4)\varrho\beta}{\gamma}\bar{\zeta}_{ik} \implies \h{q}_i(\ell) \geq \sum_j q_{ij}(\ell)
\end{align*}
}

And for $b_i=1$,  $\h{q}_i(\ell) \geq b_i = C$.
Moreover for any set $U$, 
{\small
\begin{align*}
& \sum_{i \in U} \left( \h{q}_i(\ell) - \sum_j \h{q}_i(\ell) \right) = \sum_{i \in U} \left( b_i -  \frac{(1-\epsilon/4)\beta}{\gamma}\sum_{k \geq \ell)} \left(\sum_{(i,j) \in \h{E}_k} u^s_{ijk} - 2\varrho \zeta_{ik} \right)\right) \geq \sum_{i \in U} \epsilon b_i/4 =\epsilon \bnorm{U}
\end{align*}
}
Therefore if $\bnorm{U} > 4/\epsilon$ then $\sum_{i \in U} \left( \h{q}_i(\ell) - \sum_j \h{q}_i(\ell) \right)>1$.
This implies that the conditions (A1)--(A3) of Lemma~\ref{maximaldense2} are valid and we can apply Lemma~\ref{maximaldense2}. 
The lemma follows.
\end{proof}

\bibliographystyle{abbrv}

\end{document}